\title{Twin-width II: small classes}
\titlerunning{Twin-width II: small classes}
\author{\'{E}douard Bonnet}{Univ Lyon, CNRS, ENS de Lyon, Université Claude Bernard Lyon 1, LIP UMR5668, France}{edouard.bonnet@ens-lyon.fr}{https://orcid.org/0000-0002-1653-5822}{}
\author{Colin Geniet}{ENS Paris-Saclay, France}{colin.geniet@ens-paris-saclay.fr}{}{}
\author{Eun Jung Kim}{Universit\'{e} Paris-Dauphine, PSL University, CNRS UMR7243, LAMSADE, Paris, France}{eun-jung.kim@dauphine.fr}{https://orcid.org/0000-0002-6824-0516}{}
\author{St\'{e}phan Thomass\'{e}}{Univ Lyon, CNRS, ENS de Lyon, Universit\'{e} Claude Bernard Lyon 1, LIP UMR5668, France}{stephan.thomasse@ens-lyon.fr}{}{}
\author{R\'{e}mi Watrigant}{Univ Lyon, CNRS, ENS de Lyon, Universit\'{e} Claude Bernard Lyon 1, LIP UMR5668, France}{remi.watrigant@univ-lyon1.fr}{https://orcid.org/0000-0002-6243-5910}{}
\authorrunning{\'E. Bonnet, C. Geniet, E. J. Kim, S. Thomassé, R. Watrigant}
\keywords{Twin-width, small classes, expanders, clique subdivisions, sparsity}
\def\Ddots{\mathinner{\mkern1mu\raise\p@
\vbox{\kern7\p@\hbox{.}}\mkern2mu
\raise4\p@\hbox{.}\mkern2mu\raise7\p@\hbox{.}\mkern1mu}}
\renewcommand{\geq}{\geqslant}
\renewcommand{\leq}{\leqslant}
\renewcommand{\preceq}{\preccurlyeq}
\renewcommand{\le}{\leq}
\renewcommand{\ge}{\geq}
\newcommand{\card}[1]{\left|{#1}\right|}
\newtheorem{conjecture}[theorem]{Conjecture}
\theoremstyle{definition}
\newcommand{\tww}{\text{tww}}
\newcommand{\sub}{\text{Sub}}
\newcommand{\cay}{\text{Cay}}
\newcommand{\OLCay}{\text{OLCay}}
\newcommand{\OLF}{\text{OLF}}
\begin{document}

\maketitle

\begin{abstract}
  The recently introduced \emph{twin-width} of a graph $G$ is the minimum integer $d$ such that $G$ has a \emph{$d$-contraction sequence}, that is, a sequence of $\card{V(G)}-1$ iterated vertex identifications for which the overall maximum number of red edges incident to a single vertex is at most~$d$, where a red edge appears between two sets of identified vertices if they are not homogeneous in $G$ (not fully adjacent nor fully non-adjacent). 
  We show that if a graph admits a $d$-contraction sequence, then it also has a linear-arity tree of $f(d)$-contractions, for some function $f$.
  Informally if we accept to worsen the twin-width bound, we can choose the next contraction from a set of $\Theta(\card{V(G)})$ pairwise disjoint pairs of vertices. 
  This has two main consequences.
  First it permits to show that every bounded twin-width class is \emph{small}, i.e., has at most $n!c^n$ graphs labeled by $[n]$, for some constant $c$.
  This unifies and extends the same result for bounded treewidth graphs [Beineke and Pippert, JCT '69], proper subclasses of permutations graphs [Marcus and Tardos, JCTA '04], and proper minor-free classes [Norine et al., JCTB '06].
  It implies in turn that bounded-degree graphs, interval graphs, and unit disk graphs have unbounded twin-width.
  The second consequence is an $O(\log n)$-adjacency labeling scheme for bounded twin-width graphs, confirming several cases of the implicit graph conjecture.

  We then explore the \emph{small conjecture} that, conversely, every small hereditary class has bounded twin-width.
  The conjecture passes many tests.
  Inspired by sorting networks of logarithmic depth, we show that $\log_{\Theta(\log \log d)}n$-subdivisions of $K_n$ (a small class when $d$ is constant) have twin-width at most~$d$.
  We obtain a rather sharp converse with a surprisingly direct proof: the $\log_{d+1}n$-subdivision of $K_n$ has twin-width at least~$d$.
  Secondly graphs with bounded stack or queue number (also small classes) have bounded twin-width.
  These sparse classes are surprisingly rich since they contain certain (small) classes of expanders.  
  Thirdly we show that cubic expanders obtained by iterated random 2-lifts from $K_4$~[Bilu and Linial, Combinatorica '06] also have bounded twin-width.
  These graphs are related to so-called separable permutations and also form a small class.
  We suggest a promising connection between the small conjecture and group theory.

  Finally we define a robust notion of sparse twin-width.
  We show that for a hereditary class $\mathcal C$ of bounded twin-width the five following conditions are equivalent: every graph in $\mathcal C$ (1) is $K_{t,t}$-free for some fixed~$t$, (2) has an adjacency matrix without a $d$-by-$d$ division with a 1 entry in each $d^2$ cells for some fixed~$d$, (3) has at most linearly many edges, (4) the subgraph closure of~$\mathcal C$ has bounded twin-width, and (5)~$\mathcal C$~has bounded expansion.
  We discuss how sparse classes with similar behavior with respect to clique subdivisions compare to bounded sparse twin-width. 
\end{abstract}
\maketitle

\section{Introduction}\label{sec:intro}

We continue to develop the theory of twin-width, a novel graph and matrix invariant introduced in the first paper of the series~\cite{twin-width1}.
We start with a bird's eye view of our results.
The exact definitions of some objects and concepts will be deferred to the next section, but this introduction can be read by taking them as black boxes.
Furthermore \cref{sec:prelim} includes a summary of the first paper, so that the current paper is self-contained. 

A~trigraph is a graph with two disjoint edge sets: black edges (regular edges) and red edges (error edges).
The graph induced by the red edges (resp.~black edges) is called the red graph (resp.~black graph). 
A~$d$-trigraph has a red graph with maximum degree at most~$d$.
A~contraction in a trigraph identifies two (non-necessarily adjacent) vertices, and puts black edges towards shared neighbors in the black graph, and red edges towards the other (non-necessarily shared) neighbors (see~\cref{fig:contraction}).
A~$d$-contraction sequence, or $d$-sequence, of an $n$-vertex graph $G$ is a sequence of $d$-trigraphs $G=G_n, G_{n-1}, \ldots, G_2, G_1$ such that $G_i$ is obtained by performing a single contraction in $G_{i+1}$.
In particular $G_1$ is the one-vertex graph $K_1$.
The twin-width of $G$ is the minimum $d$ such that it admits a $d$-sequence.

A~contraction sequence of $G$ may be seen as a path with at the left end, $G$, at the right end, $K_1$, and the current trigraph gets smaller and smaller when we walk from left to right.
We show that this path can be made a tree of large arity.
Now $G$ is at the root of the tree, all the leaves contain the graph $K_1$, and every child is obtained by performing a single contraction in the parent node.
A~$d$-contraction tree is such a tree with a $d$-trigraph at every node. 
More precisely, we show that if a graph $G$ has a $d$-contraction sequence, then it has a $D_d$-contraction tree with linear arity.
By linear arity, we mean that every non-leaf node $H$ has $\Theta(\card{V(H)})$ distinct children.

Denoting the class of graphs with twin-width at most $d$ by $\mathcal C_d$, the first consequence is that the number of graphs in $\mathcal C_d$ on the vertex set $[n]$ is at most $n!f(d)^n$.
Intuitively the large-arity tree tells us that many $n-1$-vertex graphs of $\mathcal C_d$ can be obtained from the same $n$-vertex graph of $\mathcal C_d$.
By inverting the process, there are not so many \emph{distinct} $n$-vertex graphs in $\mathcal C_d$, obtained by splitting a vertex in $n-1$-vertex graphs of $\mathcal C_d$.
This crucial fact makes the inductive proof works.
Our result generalizes several similar theorems in enumerative combinatorics.

The first one is an over 50-year old result that bounded treewidth graphs on vertex set~$[n]$ have a similar growth in $n!c^n$~\cite{Beineke69}.
Graph classes with such a growth are called small.
The second one is comparatively much more recent, it is the celebrated answer to the Stanley-Wilf conjecture, now the Marcus-Tardos theorem.
Marcus and Tardos~\cite{MarcusT04} showed that there are at most $c_\sigma^n$ permutations over $[n]$ avoiding a fixed permutation pattern $\sigma$.
In other words, every proper subclass of permutations (where a class of permutations is closed under taking subpermutations) has at most single-exponential growth, much below $n!$, the growth of the full class.
Expressed in the language of graph classes, proper subclasses of permutation graphs are small.
The third one, due to Norine et al.~\cite{Norine06}, is that the number of graphs on vertex set $[n]$ not containing a fixed minor $H$ is at most $n!c_H^n$.
Thus proper minor-closed classes are small.

We previously showed \cite{twin-width1} that bounded treewidth (even rank-width) graphs, proper subclasses of permutation graphs, and proper minor-closed classes have bounded twin-width.
Thus the fact that bounded twin-width classes are small unifies and extends all the above-mentioned theorems.
We then explore the converse statement.
Could it be that every small hereditary class has bounded twin-width?
We do not answer this question, dubbed the small conjecture, but instead we give some evidences it may be true.
This comes in the form of showing that many potential counterexamples, that is, seemingly complex small hereditary classes, actually have bounded twin-width. 
If the conjecture is true, it gives a universal explanation for the single-exponential growth (up to isomorphism) of combinatorial classes: Translate the objects into graphs or matrices, a bound or lack thereof in the twin-width of the class decides the existence of such a bound in the growth. 

Another by-product of the contraction tree is that we can always contract in parallel a~linear number of disjoint pairs of vertices.
This gives rise to so-called parallel $d$-sequences of logarithmic length.
This will be instrumental in showing that bounded twin-width classes admit an $O(\log n)$-adjacency labeling scheme.
This verifies a variety of particular cases of the implicit graph conjecture which posits that such labeling schemes exist for every factorial hereditary class, i.e., hereditary class with growth $n!^{O(1)}$. 

Finally we show that five different ways of restricting twin-width to sparse classes actually lead to the same notion.
For example, bounded sparse twin-width classes can be equivalently defined as hereditary classes with bounded twin-width that are $K_{t,t}$-free or where every graph has at most linearly many edges.
A first but challenging step towards the small conjecture is to show that small sparse classes have bounded (sparse) twin-width.
For instance, do classes with polynomial expansion have bounded twin-width?
We discuss (possible) containments and strict containments of established sparse classes with respect to bounded sparse twin-width.

\section{Preliminaries and outline}\label{sec:prelim}

In this section we recall the relevant notations and definitions, summarize the important bits of the first paper, and outline our new results.

\subsection{Notations and definitions}

We denote by $[i,j]$ the set of integers $\{i,i+1,\ldots, j-1, j\}$, and by $[i]$ the set of integers $[1,i]$.
If $\mathcal X$ is a set of sets, we denote by $\cup \mathcal X$ their union.
Unless stated otherwise, all graphs are assumed undirected and simple, that is, they do not have parallel edges or self-loops.
We denote by $V(G)$ and $E(G)$, the set of vertices and edges, respectively, of a graph $G$. 
For $S \subseteq V(G)$, we denote the \emph{open neighborhood} (or simply \emph{neighborhood}) of $S$ by $N_G(S)$, i.e., the set of neighbors of $S$ deprived of $S$, and the \emph{closed neighborhood} of $S$ by $N_G[S]$, i.e., the set $N_G(S) \cup S$.
We simplify $N_G(\{v\})$ into $N_G(v)$, and $N_G[\{v\}]$ into $N_G[v]$.
We denote by $G[S]$ the subgraph of $G$ induced by $S$, and $G - S := G[V(G) \setminus S]$.
For two disjoint sets $A, B \subseteq V(G)$, $E(A,B)$ denotes the set of edges in $E(G)$ with one endpoint in $A$ and the other one in $B$.
Two distinct vertices $u, v$ such that $N(u) = N(v)$ are called \emph{false twins}, and \emph{true twins} if $N[u] = N[v]$.
Two vertices are \emph{twins} if they are false twins or true twins.
For two vertices $u, v \in V(G)$, the \emph{distance $d_G(u,v)$} is the number of edges in a shortest path from $u$ to $v$, and $\infty$ if $u$ and $v$ are in two distinct connected components of $G$.
In all the notations with a graph subscript, we may omit it if the graph is clear from the context.

A~\emph{graph class} is a family of graphs closed under isomorphism (i.e., under renaming the vertices).
Since we will be interested in the ``size'' of a class, we will further impose that the vertex set of $n$-vertex graphs is precisely\footnote{If it is sometimes more convenient to use a different vertex set for the class definition, this will implicitly come with a canonical mapping from this vertex set to $[n]$.}~$[n]$.
With that requirement the number of $n$-vertex graphs in a class $\mathcal C$ is a well-defined (finite) number.
Observe that every single $n$-vertex graph in a class $\mathcal C$ implies that at least $n!$ graphs are in $\mathcal C$, namely all its relabelings.
A graph class is said~\emph{hereditary} if it is closed under taking induced subgraphs.
It is said~\emph{monotone} or \emph{subgraph-closed} if it is even closed under taking subgraphs.

A graph is \emph{$H$-free} if it does not contain $H$ as an induced subgraph.
However we make an exception for $H = K_{t,t}$.
A~$K_{t,t}$-free graph is a graph with no biclique $K_{t,t}$ \emph{as a subgraph}.
A~class is \emph{$H$-free} if all its graphs are $H$-free.
When $t$ is not yet defined, we may say that a class $\mathcal C$ is $K_t$-free (resp.~$K_{t,t}$-free) to mean that there exists a finite integer $t$ such that $\mathcal C$ is $K_t$-free (resp.~$K_{t,t}$-free). 

We denote by $\Delta(G)$ the maximum degree of a vertex in $G$, and $\Delta(\mathcal C) := \sup_{G \in \mathcal C} \Delta(G)$.
A~class $\mathcal C$ has \emph{bounded degree} if $\Delta(\mathcal C) < \infty$.
More generally, for any graph invariant~$\iota$, we say that \emph{$\mathcal C$ has bounded~$\iota$} if $\iota(\mathcal C) := \sup_{G \in \mathcal C} \iota(G) < \infty$.
The strong product $G \boxtimes H$ of two graphs $G$ and $H$ has vertex set $V(G) \times V(H)$ and $(u,v)(u',v') \in E(G \boxtimes H)$ if and only if [$u = u'$ or $uu' \in E(G)$] and [$v=v'$ or $vv' \in E(H)$].
We denote by $\mathcal G \boxtimes \mathcal H$ the class $\{G \boxtimes H$ $|$ $G \in \mathcal G, H \in \mathcal H\}$, where $\mathcal G$ and $\mathcal H$ are two sets of graphs.
Given a class $\mathcal C$, we denote by $\sub(\mathcal C)$ the class of all subgraphs of members of $\mathcal C$.
The class $\sub(\mathcal C)$ is by definition subgraph-closed, and is called the \emph{subgraph closure} of $\mathcal C$. 
Similarly the \emph{hereditary closure} of a class $\mathcal C$ consists of all the induced subgraphs of members of $\mathcal C$, and is hereditary by design.

An \emph{edge contraction} of two adjacent vertices $u, v$ consists of merging $u$ and $v$ into a single vertex adjacent to $N(\{u,v\})$ (and deleting $u$ and $v$).
A graph $H$ is a \emph{minor} of a graph $G$ if $H$ can be obtained from $G$ by a sequence of vertex and edge deletions, and edge contractions.
Equivalently a minor $H$ with vertex set say, $\{v_1, \ldots, v_{V(H)}\}$, of $G$ can be defined as a vertex partition $B_1, \ldots, B_{|V(H)|}$ of a subgraph of $G$, such that every $G[B_i]$ is connected and $E_G(B_i,B_j) \neq \emptyset$ whenever $v_iv_j \in E(H)$.
Indeed after contracting each $B_i$ into a single vertex (which is possible since they induce connected subgraphs), $H$ appears as a subgraph.
The set $B_i$ is called the \emph{branch set} of $v_i \in V(H)$.  
A graph $G$ is said \emph{$H$-minor free} if $H$ is not a minor of $G$.
A class is said \emph{minor-closed} if every minor of a member of the class is in the class, and \emph{proper minor-closed} if further the class is \emph{not} the set of all graphs.

The \emph{radius $\text{rad}(G)$} of a graph $G$ is defined as $\min_{u \in V(G)} \max_{v \in V(G)} d_G(u,v)$.
The \emph{radius $\text{rad}_G(S)$} of a subset of vertices $S \subseteq V(G)$ is simply defined as $\text{rad}(G[S])$.
Note that two vertices can be further away in $G[S]$ than in $G$.
An \emph{$r$-shallow minor} $H$ of $G$ is a minor of~$G$ with branch sets $B_1, \ldots, B_{\card{V(H)}}$ satisfying $\text{rad}_G(B_i) \leqslant r$ for every $i \in [\card{V(H)}]$.
We denote that by $H \preccurlyeq_r G$.
In particular 0-shallow minors correspond to subgraphs.
The theory of graph sparsity pioneered by Ossona de Mendez and Nešetřil~\cite{sparsity} introduces the following invariants for a graph $G$ and a class $\mathcal C$:
$$\nabla_r(G) := \underset{H \preccurlyeq_r G}{\sup}~\frac{|E(H)|}{|V(H)|},~\text{and}~\nabla_r(\mathcal C) := \underset{G \in \mathcal C}{\sup}~\nabla_r(G).$$
Note that $\nabla_0(G)$ is tied to the maximum average degree of $G$.

A class $\mathcal C$ of graphs is said to have \emph{bounded expansion} if $\nabla_r(\mathcal C) < \infty$ for every $r \in \mathbb N$.
More generally $\mathcal C$ has \emph{expansion $f$} if $\nabla_r(\mathcal C) \leqslant f(r)$ for every $r \in \mathbb N$.
A class has \emph{polynomial expansion} if it has expansion $f$ for a polynomial function $f$.
Proper minor-closed classes even have constant expansion, i.e., expansion $f$ for a constant function $f$.

\subsection{Summary of the previous paper}

In the previous paper of the series~\cite{twin-width1}, we introduced a new graph and matrix invariant dubbed \emph{twin-width}, inspired by the work of Guillemot and Marx on permutations~\cite{Guillemot14}.
We proved that many classes such as, bounded rank-width graphs, proper minor-free classes, proper subclasses of permutation graphs, and posets with antichains of bounded size have bounded twin-width.
For all these classes, we showed how to find in polynomial-time a~so-called \emph{$d$-sequence}, witnessing that the twin-width is at most a constant~$d$.
Finally given a $d$-sequence of a binary structure $G$ on $n$ elements and a first-order (FO) formula $\varphi$ of quantifier-depth~$\ell$, we provided an FO model checking algorithm deciding $G \models \varphi$ in time $f(d,\ell) n$.

We start by recalling the definition of twin-width, and then we summarize the milestones of \cite{twin-width1} that will also be useful in the current paper.

\subsubsection{Trigraphs, contraction sequences, and twin-width of a graph} 

A \emph{trigraph $G$} has vertex set $V(G)$, (black) edge set $E(G)$, and red edge set $R(G)$ (the error edges), with $E(G)$ and $R(G)$ being disjoint.
The \emph{set of neighbors $N_G(v)$} of a vertex $v$ in a trigraph $G$ consists of all the vertices adjacent to $v$ by a black or red edge.
A $d$-trigraph is a trigraph $G$ such that the \emph{red graph} $(V(G),R(G))$ has degree at most~$d$.
In that case, we also say that the trigraph has \emph{red degree} at most~$d$.
In the context of trigraphs and twin-width, we will somewhat overload the term ``contraction''.
A \emph{contraction} or \emph{identification} in a trigraph~$G$ consists of merging two (non-necessarily adjacent) vertices $u$ and $v$ into a single vertex $w$, and updating the edges of $G$ in the following way.
Every vertex of the symmetric difference $N_G(u) \triangle N_G(v)$ is linked to $w$ by a red edge.
Every vertex $x$ of the intersection $N_G(u) \cap N_G(v)$ is linked to $w$ by a black edge if both $ux \in E(G)$ and $vx \in E(G)$, and by a red edge otherwise.
The rest of the edges (not incident to $u$ or $v$) remain unchanged.
We insist that the vertices $u$ and $v$ (together with the edges incident to these vertices) are removed from the trigraph. 
See \cref{fig:contraction} for an illustration.
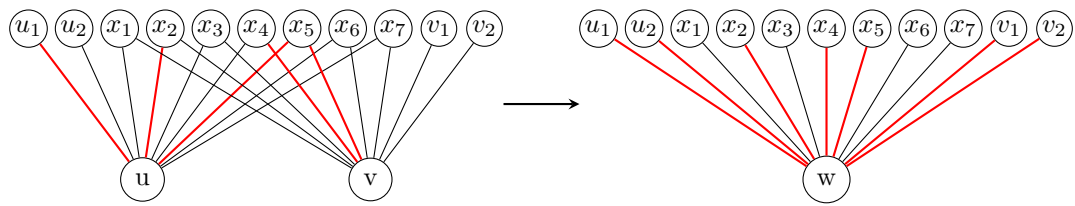
\begin{figure}
\begin{tikzpicture}
\def\v{2}
\def\t{6}
\def\s{0.6}

\draw[thick, -stealth] (3.25,\v /2) -- (4.25,\v/2) ;

\foreach \i/\j in {-5/u_1,-4/u_2,-3/x_1,-2/x_2,-1/x_3,0/x_4,1/x_5,2/x_6,3/x_7,4/v_1,5/v_2}{
  \node[draw,circle,inner sep=0.03cm] (n\i) at (\s * \i,\v) {$\j$} ; 
}

\node[draw,circle] (u) at (-1.5,0) {u} ;
\node[draw,circle] (v) at (1.5,0) {v} ;

\foreach \i in {-4,-3,...,-3,-1,0,2,3}{
  \draw (u) -- (n\i) ;
}
\foreach \i in {-5,-2,1}{
  \draw[thick, red] (u) -- (n\i) ;
}
\foreach \i in {-3,...,-1,2,3,...,5}{
  \draw (v) -- (n\i) ;
}
\foreach \i in {0,1}{
  \draw[thick, red] (v) -- (n\i) ;
}

\begin{scope}[xshift=7.5cm]
\node[draw,circle] (uv) at (0,0) {w} ;
\foreach \i/\j in {-5/u_1,-4/u_2,-3/x_1,-2/x_2,-1/x_3,0/x_4,1/x_5,2/x_6,3/x_7,4/v_1,5/v_2}{
  \node[draw,circle,inner sep=0.03cm] (m\i) at (\s * \i,\v) {$\j$} ; 
}

\foreach \i in {-3,-1,2,3}{
  \draw (uv) -- (m\i) ;
}
\foreach \i in {-5,-4,-2,0,1,4,5}{
  \draw[thick, red] (uv) -- (m\i) ;
}
\end{scope}
\end{tikzpicture}
\caption{Contraction of vertices $u$ and $v$, and how the edges of the trigraph are updated.}
\label{fig:contraction}
\end{figure}

A \emph{sequence of $d$-contractions} or \emph{$d$-sequence} is a sequence of $d$-trigraphs $G_n, G_{n-1}, \ldots, G_1$, where $G_n = G$, $G_1=K_1$ is the graph on a single vertex, and $G_{i-1}$ is obtained from $G_i$ by performing a single contraction of two (non-necessarily adjacent) vertices.
We observe that $G_i$ has precisely $i$ vertices, for every $i \in [n]$.
The twin-width of $G$, denoted by $\tww(G)$, is the minimum integer~$d$ such that $G$ admits a~$d$-sequence.
Going back to the overload of the word ``contraction'', in case we actually refer to the classical (edge) contraction, either we will use the term ``edge contraction'', or it will be clear from the context what is meant. 

\subsubsection{Partitions, divisions, red number, and twin-width of a matrix}

We now give two equivalent definitions for the twin-width of a matrix.
The first is based on a \emph{contraction sequence} where we progressively reduce the size of the matrix, and introduce error symbols~$r$.
The second (equivalent) definition is based on a \emph{coarsening sequence} where we progressively coarsen a partition of the rows and columns of the matrix.

The red number of a matrix is the maximum number of $r$~entries (error entry, the $r$ stands for red) in a single row or column.
Given an $n \times m$ matrix $M$ and two columns $C_i$ and $C_j$ (resp.~two rows $R_i$ and $R_j$), the \emph{contraction} of $C_i$ and $C_j$ (resp.~$R_i$ and $R_j$) is obtained by deleting $C_j$ (resp. $R_j$) and replacing every entry $m_{k,i}$ of $C_i$ (resp.~every entry $m_{i,k}$ of $R_i$) by $r$ whenever $m_{k,i} \neq m_{k,j}$ (resp.~$m_{i,k} \neq m_{j,k}$).
A \emph{$d$-contraction sequence} of matrix $M$ is sequence of successive contractions starting at $M$, ending at some $1 \times 1$ matrix, such that all matrices of the sequence have red number at most~$d$.
The \emph{twin-width} of a matrix $M$ is the smallest integer~$d$ such that $M$ admits a~$d$-contraction sequence.

We observe that when $M$ has twin-width at most~$d$, one can reorder its rows and columns such that every contraction is on two consecutive rows or two consecutive columns.
The reordered matrix is then called \emph{$d$-twin-ordered}.
The \emph{symmetric twin-width} of an $n \times n$ matrix $M$ is defined similarly, except that the contraction of rows $i$ and $j$ (resp.~columns $i$ and $j$) is immediately followed by the contraction of columns $i$ and $j$ (resp.~rows $i$ and $j$).
The symmetric twin-width of the adjacency matrix of a graph $G$ corresponds to the twin-width of $G$.

For the second definition of the twin-width of a matrix, we need to introduce a bit of vocabulary on partitions.
We say that a partition $\mathcal P$ of a set $S$ \emph{refines} a partition $\mathcal P'$ of $S$ if every part of $\mathcal P$ is contained in a part of $\mathcal P'$.
Conversely we say that $\mathcal P'$ is a coarsening of $\mathcal P$.
We will further assume that a coarsening is proper, that is, $\mathcal P'$ and $\mathcal P$ are distinct.
Given a partition $\mathcal P$ and two distinct parts $P, P'$ of $\mathcal P$, the \emph{elementary coarsening} of $P$ and $P'$ yields the coarsening $\mathcal P \setminus \{P,P'\} \cup \{P \cup P'\}$.
Informally an elementary coarsening is the merge of two parts.

Given an ${n \times m}$ matrix $M$, we call \emph{row-partition} (resp.~\emph{column-partition}) a partition of the rows (resp.~columns) of~$M$.
A \emph{$(k,\ell)$-partition}, or simply \emph{partition}, of a matrix $M$ is a pair $(\mathcal R=\{R_1,\dots ,R_k\}, \mathcal C=\{C_1,\dots ,C_\ell\})$ where $\mathcal R$ is a row-partition and $\mathcal C$ is a column-partition.
In a matrix partition $(\mathcal R,\mathcal C)$, each part $R \in \mathcal R$ is called a \emph{row-part}, and each part $C \in \mathcal C$ is called a \emph{column-part}.
An \emph{elementary coarsening} of a partition $(\mathcal R,\mathcal C)$ of a matrix $M$ is obtained by performing one elementary coarsening in $\mathcal R$ or in $\mathcal C$. 
We distinguish two canonical partitions of an $n \times m$ matrix $M$: the \emph{finest partition} where $(\mathcal R,\mathcal C)$ have size $n$ and $m$, respectively, and the \emph{coarsest partition} where $|\mathcal R|=|\mathcal C|=1$.

A \emph{coarsening sequence} of an $n \times m$ matrix $M$ is a sequence of partitions $(\mathcal R^1,\mathcal C^1),$ $\ldots ,(\mathcal R^{n+m-1},\mathcal C^{n+m-1})$ where
\begin{compactitem}
\item $({\mathcal R}^1,{\mathcal C}^1)$ is the finest partition,
\item $({\mathcal R}^{n+m-1},{\mathcal C}^{n+m-1})$ is the coarsest partition, and
\item for every $i \in [n+m-2]$, $({\mathcal R}^{i+1},{\mathcal C}^{i+1})$ is an elementary coarsening of $({\mathcal R}^i,{\mathcal C}^i)$.
\end{compactitem}

Given a subset $R$ of rows and a subset $C$ of columns in a matrix $M$, the \emph{zone $R \cap C$} denotes the submatrix of all entries of $M$ at the intersection between a row of $R$ and a column of $C$.
A \emph{zone} of a matrix partitioned by $({\mathcal R},{\mathcal C})=(\{R_1, \ldots, R_k\},\{C_1, \ldots, C_\ell\})$ is any $R_i \cap C_j$ for $i \in [k]$ and $j \in [\ell]$.
A zone is \emph{constant} if all its entries are identical.
The \emph{error value} of a row-part $R_i$ (resp.~a column-part $C_j$) is the number of non-constant zones among all zones in $\{R_i \cap C_1, \ldots, R_i \cap C_\ell\}$ (resp.~$\{R_1 \cap C_j, \ldots, R_k \cap C_j\}$).
The \emph{error value} of $(\mathcal R,\mathcal C)$ is the maximum error value of a part, taken over all parts $R_i$ and $C_j$. 
Now the twin-width of a matrix~$M$ can be equivalently defined as the minimum~$d$ for which~$M$ admits a coarsening sequence in which all partitions have error value at most~$d$.

We will work with particular partitions, called \emph{divisions}, where every part consists of a set of consecutive rows, or a set of consecutive columns. 
If the matrix is $d$-twin-ordered, there is a coarsening sequence with error value at most~$d$, in which all the partitions are divisions.
We call \emph{division sequence} such a coarsening sequence.

\subsubsection{Grid minor theorem for twin-width}

A~\emph{$(t,t)$-division} is a division $(\mathcal R,\mathcal C)$ such that $|\mathcal R|=|\mathcal C|=t$.
A~\emph{$t$-grid minor} is a $(t,t)$-division whose $t^2$ zones contains a non-zero entry.
As for the \textsc{Permutation Pattern} breakthrough algorithm of Guillemot and Marx~\cite{Guillemot14}, a crucial engine of twin-width is the following celebrated theorem by Marcus and Tardos. 
\begin{theorem}[\cite{MarcusT04}]\label{thm:marcustardos}
For every integer $t$, there is some $c_t$ such that every $n\times m$ $0,1$-matrix $M$ with at least $c_t\max(n,m)$ entries 1 has a $t$-grid minor.
\end{theorem}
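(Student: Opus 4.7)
The plan is to prove this by induction on $t$ via a block decomposition followed by a pigeonhole on column patterns. The base case $t = 1$ is trivial with $c_1 = 1$: any single $1$-entry yields a $(1,1)$-division whose unique zone contains a $1$. For the inductive step, fix $t \geq 2$ and assume that $c_{t-1}$ has been determined. Let $M$ be an $n \times m$ $0,1$-matrix; we may assume $n \leq m$ by transposing. Partition the columns of $M$ into $q = \lceil m/s \rceil$ consecutive column-blocks $C_1, \ldots, C_q$ of width $s = s(t)$ to be tuned later.

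For each column-block $C_j$ call a row $r$ \emph{heavy in $C_j$} if it contains at least $t$ ones inside $C_j$, and \emph{light} otherwise. Light rows contribute at most $t-1$ ones per block, hence at most $(t-1)\cdot n \cdot m/s$ ones overall; provided $s$ is chosen large enough in terms of $t$, this is linear in $\max(n,m)$ with a controlled constant. The delicate contribution is from heavy rows: within each $C_j$, label each heavy row by some $t$-subset $T$ of columns of $C_j$ on which the row has $1$s, giving at most $\binom{s}{t}$ possible labels. If some pair $(C_j, T)$ collects at least $t$ heavy rows, then those $t$ rows together with the $t$ columns of $T$ form a $t \times t$ all-ones submatrix, from which a $t$-grid minor is immediate (take singleton row/column-intervals at the chosen rows and columns, extending boundary intervals to cover the rest of $M$).

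Otherwise every column-block has at most $(t-1)\binom{s}{t}$ heavy rows, each contributing at most $s$ ones, giving a total of at most $(t-1) s \binom{s}{t} \cdot q = (t-1)\binom{s}{t} m$ heavy ones, again linear in $\max(n,m)$. The main obstacle is that the purely local pigeonhole above may not suffice: the dichotomy ``some block has $\geq t$ heavy rows with common label'' versus ``every block is sparse'' must be made globally quantitative and the heavy rows found in different column-blocks must be arranged into the required consecutive intervals. The cleanest route is to build an auxiliary matrix $M'$ whose entries record, for each (row-block, column-block) pair, whether many heavy rows with a common label occur there, then apply the inductive hypothesis at $t-1$ to $M'$ to extract a coarse $(t-1)$-grid of blocks, and finally stitch together the within-block structure to upgrade this coarse minor into a $t$-grid minor of $M$. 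Tracking the constants through this two-level argument and optimizing $s = \Theta(t^2)$ yields a finite $c_t$ (tower-growing in $t$) and closes the induction.
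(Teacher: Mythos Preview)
The paper does not give its own proof of this statement; it is quoted from Marcus and Tardos~\cite{MarcusT04} and used as a black box (with the improved constant $c_t = \tfrac{8}{3}(t+1)^2 2^{4t}$ from~\cite{Cibulka16}). So there is nothing in the paper to compare against beyond the citation.

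Your argument has a genuine gap at the light-row step. A row that is light in block $C_j$ contributes at most $t-1$ ones there, and summing over all $n$ rows and all $q=\lceil m/s\rceil$ blocks gives at most $(t-1)\,n\,q \approx (t-1)\,nm/s$ ones. Since $s=s(t)$ is a constant depending only on~$t$ while $n$ may be as large as $m$, this is $\Theta(m^2)$, not $O(m)$; no choice of $s$ ``large enough in terms of $t$'' makes it linear in $\max(n,m)$. The actual Marcus--Tardos proof avoids exactly this by partitioning \emph{both} rows and columns into blocks of side~$t^2$, classifying each \emph{block} as wide, tall, or neither, and then recursing on the contracted block matrix---the recursion is on the matrix \emph{size}, not on~$t$, and the ratio $(t-1)^2/t^4<1$ is what makes the ``neither'' contribution collapse to linear.

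Your final paragraph acknowledges that the dichotomy as stated is insufficient and proposes to build an auxiliary matrix $M'$ and apply the inductive hypothesis at $t-1$, but this is precisely where all the work lies and none of it is carried out: you do not specify what $M'$ records, why it has no $(t-1)$-grid minor, or how a $(t-1)$-grid minor in $M'$ upgrades to a $t$-grid minor in $M$. Without a two-dimensional block decomposition (or some alternative mechanism that keeps the light contribution linear) and a fully specified recursion, the argument does not close.
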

Informally, if a matrix has sufficiently many entries 1, then there is a large grid structure where each cell is ``complicated''.
The current best bound for $c_t$, due to Cibulka and Kyn\v{c}l~\cite{Cibulka16}, is $8/3(t+1)^22^{4t}$.

To leverage Marcus-Tardos theorem in the dense regime, too, we modify the definition of ``complicated'' from ``containing a 1'' to ``being mixed''.
A zone is \emph{horizontal} if all its columns are equal (restricted to the zone), and vertical if all the rows are equal.
Equivalently each row (resp.~column) within a horizontal zone (resp.~vertical zone) consists of a repeated same entry.
Note that a zone is constant (consists of a same entry repeated) if it is horizontal and vertical.
A zone is \emph{mixed} if it is not horizontal nor vertical.

We can now introduce the notions of $t$-mixed minors and $t$-mixed freeness.
A~\emph{$t$-mixed minor} of a matrix $M$ is a $(t,t)$-division of $M$ such that every zone is mixed.
A matrix is \emph{$t$-mixed free} if it does not admit a $t$-mixed minor.
We showed that having small twin-width and admitting no large mixed minors are equivalent in the following sense.

\begin{theorem}[\cite{twin-width1}]\label{thm:gridtheorem}
 Let $\alpha$ be the alphabet size for the matrix entries, and $c_t := 8/3(t+1)^22^{4t}$.
\begin{compactitem}
\item Every $t$-twin-ordered matrix is $2t+2$-mixed free.
\item Every $t$-mixed free matrix has twin-width at most $4c_t \alpha^{4c_t+2}=2^{2^{O(t)}}$.
\end{compactitem}
\end{theorem}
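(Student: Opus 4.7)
The plan is to handle the two claims separately, as they go in opposite directions. For the first claim, I would argue by contradiction: assume $M$ is $t$-twin-ordered, witnessed by a division sequence $(\mathcal R^1, \mathcal C^1), \dots, (\mathcal R^{n+m-1}, \mathcal C^{n+m-1})$ of error at most $t$, and that $M$ nonetheless admits a $(2t+2)$-mixed minor $(\mathcal R^\star, \mathcal C^\star)$. Walking the sequence from the finest to the coarsest partition, I would let $i^\star$ be the last index at which both $\mathcal R^{i^\star}$ refines $\mathcal R^\star$ and $\mathcal C^{i^\star}$ refines $\mathcal C^\star$. By the choice of $i^\star$, the merge at step $i^\star + 1$ must violate one of the two refinement properties; by symmetry suppose it unions two consecutive row-parts $R' \subseteq R_k^\star$ and $R'' \subseteq R_{k+1}^\star$. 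I would then lower-bound the error value of the new row-part $R' \cup R''$ by exhibiting, for each of the $2t+2$ column-parts $C_\ell^\star$, at least one column-part $C \subseteq C_\ell^\star$ of $\mathcal C^{i^\star+1}$ on which $(R' \cup R'') \cap C$ is non-constant. Mixedness of $R_k^\star \cap C_\ell^\star$ and $R_{k+1}^\star \cap C_\ell^\star$ supplies, for each $\ell$, either an internally non-constant zone already present at step $i^\star$ (counted in the error of $R'$ or $R''$) or a difference between a row of $R'$ and one of $R''$ on some sub-column-part $C$. Collecting these witnesses across the $2t+2$ values of $\ell$ should then force an error strictly larger than $t$, the desired contradiction.

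For the second claim, the plan is to construct a coarsening sequence of error at most $4c_t \alpha^{4c_t+2}$ directly from $t$-mixed freeness. The enabling observation, proved by contraposition via \cref{thm:marcustardos}, is that any $(k,k)$-division of a $t$-mixed free matrix has at most $c_t k$ mixed zones; indeed, a $(t,t)$-grid minor in the $0/1$ indicator matrix of mixedness would give a $(t,t)$-subdivision with a mixed zone in every cell, and since mixedness extends upward under coarsening, this would yield a $t$-mixed minor of $M$. Applied to a division into $p = 4c_t + 2$ parts per axis, this gives that more than half of any fixed row's $p$ zones are horizontal or vertical, so that row's restriction to the division is described by one of at most $\alpha^{4c_t+2}$ simple ``types.'' Pigeonhole on these types, applied repeatedly to consecutive rows (resp.\ columns), produces merges in which the new row- or column-part gains at most $4c_t$ new non-constant zones (from the bound on mixed zones) while inheriting error from at most $\alpha^{4c_t+2}$ type slots. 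Iterating this process on both axes and re-invoking the Marcus--Tardos-based bound after each batch yields a full coarsening sequence of error at most $4c_t \alpha^{4c_t+2}$.

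I expect the main obstacle in the first claim to be converting the mere existence guarantees of mixedness (\emph{some} column-pair differs on \emph{some} row-pair in each zone) into a red entry that is \emph{localized} at the boundary between the merged parts $R'$ and $R''$; this is where the rigidity of division sequences, which only merge consecutive parts, has to be exploited, together with a careful accounting over the $2t+2$ parts. In the second claim, the technical difficulty is that merges on one axis can convert previously horizontal or vertical zones into non-constant ones along the other axis, so the invariant carried through the coarsening must control both axes simultaneously; the doubly-exponential shape of the final bound reflects this coupling between the Marcus--Tardos bound $c_t$ and the alphabet-based pigeonhole factor $\alpha^{4c_t+2}$.
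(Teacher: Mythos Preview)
This theorem is not proved in the present paper; it is quoted from~\cite{twin-width1}, and the only ``proof'' here is a two-sentence sketch. So the relevant comparison is with the argument in~\cite{twin-width1}.

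For the first item, your approach has a real gap. You pick the last step $i^\star$ at which the division sequence still refines $(\mathcal R^\star,\mathcal C^\star)$, so the offending merge produces $R'\cup R''$ with $R'\subseteq R_k^\star$ and $R''\subseteq R_{k+1}^\star$. You then try to extract, for each $\ell$, a non-constant sub-zone of $(R'\cup R'')$ inside $C_\ell^\star$ from the mixedness of $R_k^\star\cap C_\ell^\star$ and $R_{k+1}^\star\cap C_\ell^\star$. But mixedness of $R_k^\star\cap C_\ell^\star$ only guarantees a corner \emph{somewhere} in $R_k^\star$, which can lie far from $R'$; nothing forces that corner, or any non-constancy, to land in the two boundary parts $R',R''$. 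A concrete obstruction: take $R_k^\star$ with many rows, put the corner of $R_k^\star\cap C_\ell^\star$ in the top rows, and make the bottom row-part $R'$ (and the top row-part $R''$ of $R_{k+1}^\star$) entirely constant on $C_\ell^\star$. Then $(R'\cup R'')\cap C$ is constant for every $C\subseteq C_\ell^\star$, and your witness disappears. The argument in~\cite{twin-width1} reverses the direction of containment: one looks at the \emph{first} step at which some part of the sequence \emph{contains} a full part of $(\mathcal R^\star,\mathcal C^\star)$, say $R\supseteq R_k^\star$. Then every $R\cap C_\ell^\star$ inherits a corner, and minimality forces each column-part of the current division to meet at most two consecutive $C_\ell^\star$'s, so the $2t+2$ corners yield at least $t+1$ non-constant zones for $R$.

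For the second item, your outline is in the spirit of the actual proof (Marcus--Tardos on the mixedness indicator, then a pigeonhole on column ``types'' to find many safe merges), but one step is misstated. From ``at most $c_t k$ mixed zones in any $(k,k)$-division'' you only get an \emph{average} of $c_t$ mixed zones per row-part, not ``more than half of any fixed row's $p$ zones are non-mixed''. The proof in~\cite{twin-width1} (and the variant in Section~3 of the present paper) handles exactly this average-versus-max issue: one repeatedly merges only those parts whose mixed value stays below the target, using the Marcus--Tardos bound to guarantee that such parts exist at every step, rather than asserting the bound uniformly. Your paragraph on the ``coupling'' difficulty is accurate, but the mechanism you describe for controlling it would need this refinement to go through.
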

The first item is a relatively simple observation.
The difficulty lies in the second item.
In a nutshell, if the matrix is $t$-mixed free, we find, using Marcus-Tardos theorem, a sequence of divisions with small number of mixed zones per column and per row.
From this favorable sequence of divisions, we are able to extract an $f(t)$-contraction sequence.

One simple but important ingredient is a local characterization of mixedness by means of a \emph{corner}.
A~\emph{corner} in a matrix $M=(m_{i,j})_{i,j}$ is a mixed zone made by four contiguous entries $m_{i,j}, m_{i+1,j}, m_{i,j+1}, m_{i+1,j+1}$.
A~{0,1-corner} is a corner where each entry is in $\{0,1\}$.
\begin{lemma}[\cite{twin-width1}]\label{lem:corner}
A matrix is mixed if and only if it contains a corner.
\end{lemma}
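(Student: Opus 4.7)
My plan is to prove the equivalence direction by direction. The forward direction should be nearly immediate: a corner, being by definition a mixed $2 \times 2$ zone at rows $i, i+1$ and columns $j, j+1$, already witnesses a pair of rows in $M$ that differ in some column (so $M$ is not vertical) and a pair of columns in $M$ that differ in some row (so $M$ is not horizontal). Hence $M$ is mixed.

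For the converse, I would argue by contrapositive and assume that no corner exists, so that every contiguous $2 \times 2$ submatrix is horizontal or vertical. The linchpin is a short propagation lemma: if some row of $M$ is a constant vector, say $(e,\ldots,e)$, then every adjacent row is also constant. Indeed, look at a $2\times 2$ with top line $(e,e)$ and bottom line $(x,y)$: being horizontal forces $x = y$ directly, while being vertical forces $x = e = y$; so in both cases $x = y$, and sliding this observation across all pairs of consecutive columns makes the adjacent row constant throughout. The same argument mirrors for columns.

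Equipped with the propagation lemma, I would split on whether all consecutive rows of $M$ coincide. If they do, then all rows are equal and $M$ is vertical. Otherwise, I pick two consecutive rows $i, i+1$ disagreeing at some column $j^*$: the $2 \times 2$ at rows $i, i+1$ and columns $j^*, j^*+1$ has a non-constant left column, so it is not vertical and must therefore be horizontal, which forces $M_{i,j^*} = M_{i,j^*+1}$ and $M_{i+1,j^*} = M_{i+1,j^*+1}$. Rows $i$ and $i+1$ still disagree at column $j^*+1$, so the same reasoning iterates rightward (and symmetrically leftward), giving that both rows are constant throughout $M$. The propagation lemma then spreads constancy to every other row, so $M$ is horizontal. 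In either case $M$ is not mixed, completing the contrapositive. The main subtlety is this inductive step on columns---checking that horizontality of the current $2 \times 2$ preserves the non-constancy of the next left column---but this is exactly what the two equalities above ensure, so the whole argument stays local and short.
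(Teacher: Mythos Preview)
Your proof is correct. The present paper does not actually prove this lemma---it only quotes it from~\cite{twin-width1}---so there is no in-paper argument to compare against. Your approach is the natural one: the implication ``corner $\Rightarrow$ mixed'' is immediate, and for the other direction you correctly exploit that the absence of any mixed $2\times 2$ block forces each such block to be horizontal or vertical, and then propagate this local rigidity across the whole matrix. Two minor cosmetic remarks: your ``forward'' and ``converse'' labels are swapped relative to the usual reading of ``$M$ is mixed iff $M$ contains a corner,'' though both directions are covered; and once your sweep shows row~$i$ is constant, the propagation lemma alone already spreads constancy to every row, so tracking row~$i+1$ separately is harmless but unnecessary.
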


In \cref{sec:small} we will work with specifically divided $0,1,r$-matrices, respecting the following invariants.
Every zone is filled with $r$ entries, or is \emph{non-mixed} (that is, horizontal or vertical) and has only $0$ and $1$ entries.
In this context, we will redefine the mixed zones as those filled with $r$ entries.
The coarsenings will be followed by updating the entries of the matrix to keep the invariants.
Namely every zone with a $0,1$-corner is filled with $r$ entries.
This new viewpoint mixes contraction sequence and coarsening sequence.
It will turn out useful to find, in a $t$-mixed free matrix, not just one ``good contraction'' (as in \cref{thm:gridtheorem}) but a linear number of disjoint pairs of ``good contractions''.
This will have two main consequences.
It will enable us to show that bounded twin-width classes are \emph{small} (see~\cref{sec:prelim:small} for a formal definition).
This will also be used to find $O(\log n)$-bits adjacency labeling schemes (see~\cref{sec:prelim:als}) for $n$-vertex graphs in classes of bounded twin-width. 

\subsubsection{Closure by FO transduction}

Bounded twin-width behaves surprisingly well with respect to first-order logic.
In addition to the fixed-parameter tractable algorithm running in time $f(d,|\phi|)n$ for model checking a first-order sentence $\phi$ on an $n$-vertex graph given with a $d$-contraction sequence, we show that bounded twin-width is preserved by first-order (FO) transductions.

\begin{theorem}[\cite{twin-width1}]\label{thm:transduction}
  Every transduction of a bounded twin-width class has bounded twin-width.
\end{theorem}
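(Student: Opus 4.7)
The plan is to decompose an arbitrary FO transduction into a short sequence of elementary operations and prove that each one preserves bounded twin-width. A standard decomposition factors an FO transduction through (i) taking $k$ disjoint copies of the input, (ii) enriching it with finitely many unary predicates (vertex colors), (iii) defining the new relations (in particular the output edge relation) by FO formulas, and (iv) restricting to an FO-definable subset of the universe.

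Items (i), (ii), and (iv) are relatively tame. For (i), I would perform the given $d$-sequence of $G$ in parallel on each of the $k$ copies, paying only a constant factor in the red degree. For (iv), a contraction sequence of $G$ restricts to a valid contraction sequence on any induced substructure with no worse bound. Item (ii), the addition of finitely many unary predicates, requires lifting the notion of twin-width to colored structures: one defines contractions of colored trigraphs so as to respect colors, and verifies that the original $d$-sequence of $G$ can be refined into a sequence on the colored structure in which the red degree increases by at most a function of $d$ and the number of colors.

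The heart of the argument, and the main obstacle, is step (iii), the FO interpretation itself. Here I would use a type-coloring reduction along the lines of Gaifman or Feferman--Vaught: by induction on the quantifier depth of the interpreting formula, one shows that after enriching the structure with finitely many FO-definable unary predicates encoding the relevant first-order types, the interpretation becomes quantifier-free. The base case of a quantifier-free interpretation is mild: any $d$-sequence of the original structure remains a valid $d'$-sequence on the interpreted one, where $d'$ depends only on $d$ and the number of atoms appearing in the formula, because edges in the interpreted structure are boolean combinations of atomic relations on a bounded number of vertices, and pairs of vertex sets that are homogeneous in the original trigraph stay homogeneous in the interpreted one.

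The delicate point is thus the inductive step, which requires closure of bounded twin-width under FO-definable colorings. I would establish this by recursively applying closure under coloring (step (ii) above) together with the model-theoretic collapse of one layer of quantifiers into a finite case distinction over types. Tracking how these types behave along the trigraphs of the contraction sequence, and in particular verifying that lifting the type-coloring does not blow up the red degree uncontrollably, is where I expect the most technical care to be required, likely relying on the mixed-minor characterization of twin-width (\cref{thm:gridtheorem}) to transfer the argument from the (local) contraction viewpoint to the (global) matrix viewpoint and back.
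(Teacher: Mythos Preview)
The paper does not contain a proof of \cref{thm:transduction}. The theorem is quoted from~\cite{twin-width1} and, as the authors explicitly say immediately after the statement, it is used ``in a black-box fashion'' throughout; the reader is referred to~\cite{twin-width1} for the actual argument. There is therefore nothing in the present paper to compare your proposal against.

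That said, your outline is broadly the right shape for how closure under FO transductions is typically established, and it is consistent with the approach taken in~\cite{twin-width1}: decompose a transduction into copying, unary expansion, an FO interpretation, and restriction to a definable subset, and handle each piece separately. You have correctly identified the FO interpretation step as the crux. One caution: your plan for that step leans on a Gaifman/Feferman--Vaught-style quantifier elimination into types plus closure under definable colorings. In the twin-width setting, the actual argument in~\cite{twin-width1} works more directly with the contraction sequence (and its associated partition sequence), tracking how FO types of tuples propagate along the sequence using a ``local'' Shelah-style construction of morphism trees; the mixed-minor/matrix characterization is not the main tool there. Your sketch is plausible but hand-wavy at exactly the point where the real work lies---controlling the red degree after lifting the type-coloring---so if you intend to flesh it out, that is where you should invest the effort, and you may find it easier to follow the partition-sequence argument of~\cite{twin-width1} than to go through \cref{thm:gridtheorem}.
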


A formal definition of FO transductions can be found in several papers (see for instance~\cite{Blumensath10,twin-width1}).
As this definition is somewhat lengthy and technical and we will only use \cref{thm:transduction} in a black-box fashion, we refer the interested reader to these papers.
Informally an FO transduction of a graph $G$ defines several new graphs.
It consists of a non-deterministic ``coloring'' of $V(G)$ by a constant number of unary relations, followed by a redefinition of the edges by means of a fixed FO formula using the former edge predicate as well as these new unary relations.
The unary relations are then discarded, and we here further allow to take any induced subgraph of the obtained graph (to preserve the class heredity).
An FO transduction of a class $\mathcal C$ is simply the union of the graphs obtained by FO transduction of~$G$, for every $G \in \mathcal C$. 

\subsection{Small classes and the small conjecture}\label{sec:prelim:small}

We recall that a hereditary class is a class closed under taking induced subgraphs.
Formally if $G$ is in a hereditary class~$\mathcal C$, then for every induced subgraph $H$ of $G$, it also holds that $H$ is in $\mathcal C$.
The overwhelming majority of the usually considered classes of graphs are hereditary.\footnote{Notable exceptions include regular graphs, connected graphs, and visibility graphs of a point set.}

A class of graph $\mathcal C$ is said \emph{small} (resp.~\emph{factorial}), if there exists a constant $c$, such that the number of $n$-vertex graphs of $\mathcal C$ is at most $n!c^n$ (resp.~$n!^c=2^{O(n \log n)}$), for every $n \in \mathbb N$.
Recall that our $n$-vertex graphs are all assumed to be on the vertex set $[n]$, and that we count up to equality and \emph{not} up to isomorphism.
Norine et al.~\cite{Norine06} show that the number of $K_t$-minor free graphs on $[n]$ is at most $n!c^n$, for some integer $c$ depending only on~$t$.
In other words, proper minor-closed classes are small.
Marcus-Tardos theorem~\cite{MarcusT04}, combined with an argument due to Klazar~\cite{Klazar00}, implies that the number of $n \times n$ $0,1$-matrices avoiding a fixed permutation submatrix is at most $c^n$, for some constant $c$.
In particular the number of permutations on $n$ elements avoiding a fixed permutation grows in $2^{O(n)}$.
A translation of this result to graphs is that proper subclasses of permutation graphs are small.

We say that a class $\mathcal C$ has bounded twin-width if there exists an integer $d_{\mathcal C}$ such that every member of $\mathcal C$ has twin-width at most~$d_{\mathcal C}$.
Thus $\tww(\mathcal C) := \underset{G \in \mathcal C}{\sup}~\tww(G) < \infty$.

One of the main contributions of the paper is the following.
\begin{theorem}\label{thm:main}
  Every class with bounded twin-width is small.
\end{theorem}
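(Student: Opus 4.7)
The plan is to deduce \cref{thm:main} from a structural enrichment of $d$-contraction sequences into \emph{linear-arity contraction trees}, and then drive an inductive counting based on this enrichment.

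First, I would aim to prove the enrichment lemma: there exist a function $f$ and a universal constant $\alpha>0$ such that every $G\in\mathcal{C}_d$ is the root of a contraction tree in which every internal node is an $f(d)$-trigraph and every non-leaf node $H$ has at least $\alpha\card{V(H)}$ distinct children. Through the matrix viewpoint of \cref{sec:prelim}, this would reduce to a statement on $t$-mixed free $0,1,r$-matrices kept in divided form (as outlined at the end of the preliminaries): a constant fraction of pairs of consecutive row indices (resp.\ column indices) supports a local elementary coarsening which preserves the invariants and adds only a bounded number of $r$ entries. Whereas \cref{thm:marcustardos} was used in the proof of \cref{thm:gridtheorem} to locate a single good merge, here it must supply $\Omega(n)$ pairwise disjoint good merges in parallel.

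Given the enrichment, let $\mathcal{A}_d$ be the class of trigraphs which root such a linear-arity tree of $f(d)$-contractions, and let $a(n)$ count its elements on vertex set $[n]$. Since $\mathcal{C}_d\subseteq\mathcal{A}_d$, it suffices to bound $a(n)$. As $\mathcal{A}_d$ is closed under taking children of the tree, every $H\in\mathcal{A}_d$ with $\card{V(H)}=n\ge 2$ has at least $\alpha n$ children in $\mathcal{A}_d$ on $n-1$ vertices. I would count triples $(H,c,H')$ where $H$ has vertex set $[n]$, $c$ is one of the guaranteed good contractions of $H$, and $H'$ is the resulting trigraph on $[n-1]$ under the convention that contracting $(u,v)$ with $u<v$ creates a vertex labeled $u$ and shifts labels above $v$ down by one. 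Counting by $H$ yields at least $\alpha n\,a(n)$ triples. Counting by $H'$: a preimage $(H,c)$ is specified by the vertex $w\in V(H')$ to split, the reinserted label $v\in\{w+1,\dots,n\}$, and the combinatorial split data at $w$. The split data is forced for the black and non-neighbors of $w$ in $H'$ (they must split symmetrically), permits three choices for the edge between the two halves, and offers $O(1)$ choices per red neighbor of $w$, of which there are at most $f(d)$, giving a constant $C_d$. Since $\sum_{w=1}^{n-1}(n-w)=\binom{n}{2}$, we obtain
\[
  \alpha n\,a(n)\;\le\;\binom{n}{2}\,C_d\,a(n-1),
\]
i.e., $a(n)\le \tfrac{C_d(n-1)}{2\alpha}\,a(n-1)$, which telescopes to $a(n)\le n!\,(C_d/(2\alpha))^n$, proving \cref{thm:main}.

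The principal obstacle is the enrichment lemma. The one-shot invocation of \cref{thm:marcustardos} in \cref{thm:gridtheorem} finds a single good merge; upgrading to $\Omega(n)$ disjoint good merges requires tight control of how different merges interact. Merging two consecutive rows of a divided $0,1,r$-matrix can create corners in neighboring zones that had none before, and by \cref{lem:corner} these force new $r$ entries at the next coarsening. Bounding such cascading interactions uniformly in $d$---so that the red-degree parameter stabilizes at $f(d)$ rather than blowing up down the tree---is the technical heart of the argument, for which the divided $0,1,r$-matrix invariants at the end of \cref{sec:prelim} are precisely designed; the same analysis is what later yields parallel $f(d)$-sequences of $O(\log n)$ length needed for the adjacency labeling scheme.
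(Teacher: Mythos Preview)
Your proposal is correct and follows essentially the same approach as the paper: the enrichment lemma you sketch is exactly the paper's versatile twin-width result (\cref{lem:versatile-tww}), established via the neatly-divided $0,1,r$-matrix machinery you anticipate, and your triple-counting inequality is a clean repackaging of the Norine-et-al.\ style induction carried out in \cref{thm:tww-counting}. One cosmetic caveat: the arity constant $\alpha$ must depend on $d$ (in the paper it is $1/s$ with $s=2^{4(c_{2d+2}+1)}$), so it is not universal.
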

This generalizes the smallness of proper minor-closed classes~\cite{Norine06}, proper subclasses of permutation graphs~\cite{MarcusT04,Klazar00}, and graphs with bounded treewidth \cite{Beineke69}, as we previously showed that all these classes have bounded twin-width~\cite{twin-width1}.
We then explore a possible converse for \cref{thm:main}.
Of course it is easy to artificially build an unbounded twin-width class with only $n!$ graphs of size $n$.
For example, by taking in the class a single (up to isomorphism) $n$-vertex graph among the $n$-vertex graphs with maximum twin-width, for every $n$. 
However this is not a satisfactory counterexample.
In combinatorics, classes of objects are often required to be closed under substructures.
For instance, a \emph{class of permutations} is by definition closed under taking subpermutations.
Same goes for graphs: Hereditary classes have richer properties than non-hereditary ones.
Many interesting questions on hereditary classes have trivial answers or are not even well-defined on general classes.

We provocatively conjecture the following converse of~\cref{thm:main}.
\begin{conjecture}[small conjecture]\label{conj:small}
  Every small hereditary class has bounded twin-width.
\end{conjecture}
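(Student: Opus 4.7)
The plan is to attempt the contrapositive: start from a hereditary class $\mathcal{C}$ with unbounded twin-width and show that $\card{\mathcal{C}_n} > n!\, c^n$ for every constant $c$ and infinitely many $n$. By the grid minor theorem (\cref{thm:gridtheorem}), unbounded twin-width means that for every $t$ there is a graph $G_t \in \mathcal{C}$ whose adjacency matrix, in some row/column ordering, admits a $t$-mixed minor; by \cref{lem:corner} each of the $t^2$ resulting zones then contains a $0,1$-corner. These corners are the basic ``unit of complexity'' that one must convert into growth.

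The decisive step would be to turn one such mixed minor into many pairwise distinct labelled induced subgraphs of $G_t$, all of which lie in $\mathcal{C}$ by heredity. A natural attempt is a Ramsey clean-up on the isomorphism types of the rows and columns of the mixed minor, extracting an $s \times s$ sub-mixed-minor in which every zone presents the \emph{same} corner template. Inside such a homogeneous gadget, one would hope to identify $\Omega(s^2)$ independent binary choices, each producing a distinct induced subgraph on $O(s)$ vertices; if this yields more than $n!\, c^n$ distinct labelled graphs with $n = O(s)$, smallness is violated.

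A more realistic intermediate target is the sparse case explicitly raised in the introduction: prove that every small hereditary class which is additionally $K_{t,t}$-free has bounded twin-width. Here one can feed the shallow-minor densities $\nabla_r(\mathcal{C})$ directly into the contrapositive: a large mixed minor hosted in a $K_{t,t}$-free graph is forced by Marcus--Tardos to be ``structured'' rather than random, and this extra rigidity should make the growth extraction tractable. Natural sub-targets are classes of polynomial expansion and geometric intersection graphs of bounded ply, where one has a ready supply of combinatorial invariants to match against the twin-width obstructions.

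The step I expect to be hardest is the extraction of many distinct labelled induced subgraphs. A $t$-mixed minor is only guaranteed after a specific reordering of rows and columns, whereas heredity supplies only closure under induced subgraphs of $G_t$ itself, so the many distinct labelled graphs must be exhibited as honest subobjects of a single $G_t$. Quantitatively, the small bound $n!\, c^n$ is well above merely factorial growth, so a single mixed minor is plausibly insufficient; one would likely need many ``disjoint'' mixed minors coexisting inside each $G_t$, produced for example by iterating \cref{thm:marcustardos} on the complement of a previously located grid. Conceptually, small hereditary classes come with no a priori grid-like substructure, and bridging this gap is precisely where \cref{conj:small} has so far resisted.
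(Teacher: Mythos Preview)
The statement you are attempting is \cref{conj:small}, which the paper explicitly leaves open: ``We do not answer this question, dubbed the small conjecture, but instead we give some evidences it may be true.'' There is therefore no proof in the paper to compare your proposal against. What the paper does provide is circumstantial evidence (subdivided cliques, bounded queue/stack number, iterated 2-lifts, Cayley graphs) that various small hereditary classes have bounded twin-width, together with the forward direction \cref{thm:main}.

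Your proposal is not a proof but a sketch of a possible strategy, and you acknowledge this in your final paragraph. The central gap you identify is real and is exactly why the conjecture is open: a $t$-mixed minor in \emph{some} ordering of the adjacency matrix does not obviously yield super-$n!c^n$ many distinct labelled induced subgraphs. Your Ramsey clean-up idea is reasonable as a heuristic, but there is no argument given for why a homogeneous $s\times s$ sub-mixed-minor would encode $\Omega(s^2)$ independent binary choices realizable as induced subgraphs of a single host graph; indeed, the corners in different zones may be highly correlated. Your suggested intermediate target (the $K_{t,t}$-free case) is sensible and is singled out in the paper as well, but note that even the weaker prerequisite mentioned in \cref{sec:prelim:small}---that small hereditary classes have bounded symmetric difference---is already stated as unclear. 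In short: your outline correctly locates the obstruction, but does not overcome it, and neither does the paper.
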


It may seem ambitious to expect that the converse of~\cref{thm:main} holds for hereditary classes.
Why would the mere limited number of graphs guarantee anything close to a $d$-contraction sequence?
A typical example of a class with unbounded twin-width contains an infinite sequence of graphs $G_1, G_2, \ldots$ where every distinct pair $u, v \in V(G_i)$ satisfies $\card{N_{G_i}(u) \triangle N_{G_i}(v)} \geqslant i$.
Indeed any first contraction in $G_i$ creates a vertex with red degree at least $i$.
A class is said to have \emph{unbounded symmetric difference} if it contains such a sequence, and \emph{bounded symmetric difference}, otherwise.
So for every class $\mathcal C$ with bounded symmetric difference, there is an integer $d$ such that for every graph $G \in \mathcal C$, there exist two distinct vertices $u, v \in V(G)$ satisfying $\card{N(u) \triangle N(v)} \leqslant d$.
For example, the $i \times i$ rook graphs (with vertex set $[i] \times [i]$ and an edge between $(a,b)$ and $(c,d)$ if $a=c$ or $b=d$), with $i \geqslant 3$, is a class with unbounded symmetric difference.
However the hereditary closure of this class is not small.

Having bounded symmetric difference is a prerequisite to having bounded twin-width.
A~first step towards~\cref{conj:small} would be to show that small hereditary classes have bounded symmetric difference.
Even that is unclear.
For $K_{2,2}$-free classes or classes with girth at least 5, bounded symmetric difference simply implies bounded minimum degree.
Thus a very particular case of \cref{conj:small} is that there every small $K_{2,2}$-free hereditary class has bounded minimum degree.  

Let us present some elements supporting the conjecture.
First and foremost, bounded twin-width seems to ``stop at the right place'' in the sparse and dense realms.
Unit interval graphs (a small class) have bounded twin-width while interval graphs (a non-small class) do not.
Similarly among sparse classes, proper minor-closed classes (small) have bounded twin-width, whereas subcubic graphs (non-small) have unbounded twin-width.
We will also see that some expander classes have bounded twin-width (and are small), unlike random cubic graphs. 

An interesting test is the case of the $s$-subdivisions (where each edge of a graph is subdivided $s \geqslant 1$ times).
Since the number of subcubic graphs on $[n]$ is $n^{3n/2+O(n/\log n)}$, the $o(\log n)$-subdivisions of subcubic graphs is still a non-small class.
Thus by~\cref{thm:main}, they have unbounded twin-width.
We show a more fine-grained version of that fact by a direct proof.
We also build in polynomial time $O(1)$-sequences for $\Omega(\log n)$-subdivisions of $K_n$, which yields the following. 
\begin{theorem}\label{thm:subd}
  The $s$-subdivision of $K_n$ has bounded twin-width if and only if $s = \Omega(\log n)$.
  More precisely, for every integer $d$, there are $\ell_d < u_d$ such that the $\lfloor c \log n \rfloor$-subdivision of $K_n$ has twin-width at least $d$ for every $1 \leqslant c \leqslant \ell_d$, and at most $d$ for every $c \geqslant u_d$.    
\end{theorem}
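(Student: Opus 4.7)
I would build an explicit $d$-sequence for the $s$-subdivision $G$ of $K_n$ via a logarithmic-depth sorting-network scheme. Fix an $n$-wire sorting network of depth $D = O(\log n)$ (e.g., AKS). The network has $D$ layers, each applying comparators to pairwise disjoint pairs of wires. For $s \geq u_d D$ with $u_d$ a sufficiently large constant, I simulate the comparator at layer $\ell$ on wires $i, j$ by a constant number of contractions near the $\ell$-th subdivision vertex of the path $P_{v_i, v_j}$. This progressively merges wires as the network compares them. Since each layer uses pairwise disjoint comparators, at most a constant number of ongoing contractions touch any super-vertex, keeping the red degree bounded by a constant $d$ depending only on the local structure of the network. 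After all $D$ layers, all branch vertices are merged into one super-vertex and the sequence finishes greedily. Because AKS is polynomial-time constructible, so is the contraction sequence.

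\textbf{Lower bound.} Suppose the $s$-subdivision $G$ of $K_n$ admits a $d$-sequence. Let $t^*$ be the earliest step whose contraction merges two super-vertices $U \ni a$ and $V \ni b$ into $W$, where $a \neq b$ are branch vertices of $K_n$; prior to step $t^*$ every super-vertex of the trigraph contains at most one branch vertex. For each other branch vertex $c$, let $P_{a,c} = v_{a,c}^{(1)}, \ldots, v_{a,c}^{(s)}$ be the subdivision path from $a$ to $c$, and define $k_a(c)$ as the largest $j$ with $v_{a,c}^{(1)}, \ldots, v_{a,c}^{(j)} \in U$ (setting $k_a(c) = 0$ if $v_{a,c}^{(1)} \notin U$); define $k_b(c)$ symmetrically. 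Since $c \notin U \cup V$, both values are at most $s$. Let $Y_c$ be the super-vertex containing $v_{a,c}^{(k_a(c)+1)}$; it is adjacent to $U$ in the trigraph $G_{t^*+1}$ via the edge $v_{a,c}^{(k_a(c))} v_{a,c}^{(k_a(c)+1)}$ of $G$. A case analysis then shows that $Y_c$ typically has different adjacency to $U$ than to $V$, and since the paths $P_{a, c}$ for distinct $c$ are pairwise disjoint, the super-vertices $Y_c$ are pairwise distinct; each such $c$ contributes a red edge incident to $W$. An averaging argument using $\sum_c k_a(c) \leq |U|$ and $\sum_c k_b(c) \leq |V|$, combined with a bound on $|U|, |V|$ derived from the red-degree budget spent before step $t^*$, yields either a direct inequality $d \geq \Omega(n^{1/(s+1)})$, or a recursion $\tww(G) \geq \tww(G')-O(1)$ where $G'$ is the $(s-1)$-subdivision of $K_{n'}$ with $n' \geq \Omega(n/d)$, iterable down to the base case $\tww(K_m) = m - 1$.

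\textbf{Main obstacle.} The delicate step is the averaging / induction underlying the lower bound. One must carefully control how large $U$ and $V$ can grow while keeping all prior red degrees at most $d$; these sets are arbitrary subsets of $V(G)$ (not necessarily connected in $G$), so their growth is not directly tied to geometric neighborhoods around $a, b$. One must also handle the pathological case where $Y_c$ coincides with $V$, so that no red edge is produced. An amortized accounting of the red-degree budget over the whole sequence, or a clean extraction of a ``smaller subdivided clique'' substructure from the trigraph right after step $t^*$, seems necessary. An alternative route is to use the mixed-minor characterization (\cref{thm:gridtheorem}): show that every ordering of the adjacency matrix of $G$ contains an $\Omega(n^{1/s})$-mixed minor, which directly lower-bounds the twin-width.
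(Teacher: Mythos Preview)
Your sorting-network intuition is in the right neighborhood---the paper even advertises the connection---but the translation from ``comparator on wires $i,j$'' to ``a constant number of contractions near the $\ell$-th subdivision vertex of $P_{v_i,v_j}$'' is not defined, and the sentence ``this progressively merges wires'' is not what a sorting network does (it permutes, it does not merge). You never say what is being sorted; note that the natural permutation here lives on $\binom{n}{2}$ elements (the subdivided edges), not on $n$ wires. The paper does \emph{not} build an explicit contraction sequence. Instead it orders $V(K_n^{(k)})$ level by level, decomposing the $\binom{n}{2}$-element permutation between the two extremal levels as a product of $k-1$ \emph{parallel $t$-merges} (merge sort with branching $t=\lceil 2^{2c}\rceil$), shows each parallel $t$-merge matrix is $(t{+}1)$-grid free, and then argues the full adjacency matrix is $O(t)$-grid free; twin-width follows from \cref{thm:gridtheorem}. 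This buys a clean, self-contained argument with no need for AKS and no red-degree bookkeeping.

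\textbf{Lower bound.} Your setup (look at the first contraction joining two branch vertices) is exactly how the paper starts, but you are making it much harder than necessary. You try to track, for each other branch vertex $c$, how deep into $U$ and $V$ the paths $P_{a,c}, P_{b,c}$ penetrate, and then amortise over $|U|,|V|$; you correctly flag that $U,V$ need not be connected and that the case $Y_c=V$ is pathological, and you never close either gap. The paper sidesteps all of this with a one-line pigeonhole along the levels $V_1,\dots,V_{k+1}$ of the $n-1$ paths out of $x$: every $v\in V_1$ is a neighbour of $x$ but not of $y$, so the parts meeting $V_1$ are all red-adjacent to $P_0$, hence at most $d+1$ of them, hence one part holds $\ge (n-1)/(d+1)$ vertices of $V_1$; their neighbours in $V_2$ are disjoint, so again at most $d+1$ parts cover them, etc. After $k+1$ steps you reach $V_{k+1}$, which consists of original $K_n$ vertices and is therefore shattered into singletons, forcing $(n-1)/(d+1)^{k+1}\le 1$. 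No averaging, no control on $|U|,|V|$, no recursion---just iterate ``bounded red degree $\Rightarrow$ at most $d+1$ parts'' down the layers.
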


The hereditary closure of $\Omega(\log n)$-subdivisions of $K_n$ is indeed a small class.
But \cref{thm:subd} in particular implies that this class does have bounded twin-width.
Dvořák and Norine~\cite{Dvorak10} show that, for any constants $c, \varepsilon > 0$, classes with expansion $r \mapsto c^{r^{1/3 - \varepsilon}}$ are small, while the class of all graphs with expansion $r \mapsto 6 \cdot 3^{\sqrt{r \log{(r+e)}}}$ is not small.
If the small conjecture is true, then bounded twin-width contains polynomial expansion (actually even expansion $r \mapsto 2^{r^{0.33}}$). 
Thus another possible first step to \cref{conj:small} is to show that bounded twin-width classes have polynomial expansion.

A supplementary motivation for the small conjecture appears if its proof is algorithmic, that is, yields on any small hereditary class a polytime algorithm which takes any graph of the class and outputs a (non-necessarily optimal) $O(1)$-sequence.
In light of~\cref{thm:main} and considering that $\omega(1)$-sequences are not as algorithmically useful, that would be almost as good as a constant approximation of twin-width in general graphs.

\subsection{Implicit representations}\label{sec:prelim:als}

A class $\mathcal C$ has an \emph{$f(n)$-bits adjacency labeling scheme} (or simply \emph{labeling scheme}, for short) if there is a decoding function $A: \{0,1\}^* \times \{0,1\}^* \rightarrow \{0,1\}$ such that for every $n$-vertex graph $G \in \mathcal C$ there is a labeling function $\ell: V(G) \rightarrow \{0,1\}^*$, satisfying $|\ell(u)| \leqslant f(n)$ for every $u \in V(G)$, and $A(\ell(u),\ell(v))=1$ if and only if $uv \in E(G)$.
Here we will further impose that the labeling function $\ell$ is injective.
For example trees now have $\log n + O(1)$-bits adjacency labeling scheme \cite{Alstrup17}, which up to the constant term, is optimal.
It is known that a class $\mathcal C$ has a $c \log n$-bits adjacency labeling scheme if and only if, for every integer $n$, there is a \emph{universal graph} graph $U_n$ (not necessarily in $\mathcal C$) on at most $n^c$ vertices such that every $n$-vertex graph of $\mathcal C$ is an induced subgraph of $U_n$~(see for instance \cite{Spinrad03}).
This becomes apparent when one considers the possible labels as the vertex set of the universal graph. 

Several classes, such as interval graphs and $K_t$-minor free graphs, are known to have $O(\log n)$-bits labeling schemes.
By a direct counting argument, only factorial classes can expect to admit $O(\log n)$-bits labeling scheme.
Indeed the number of distinct labels is $2^{O(\log n)}=n^{O(1)}$.
Thus the number of $n$-vertex graphs that can be induced subgraphs of the universal graph is only ${n^{O(1)} \choose n}=n^{O(n)}$.
The \emph{implicit graph conjecture} asserts that every factorial hereditary class has an $O(\log n)$-bits labeling scheme~\cite{Kannan92}.
We show the conjecture in the particular case of bounded twin-width classes. 

\begin{theorem}\label{thm:als}
  Every bounded twin-width class admits an $O(\log n)$-bits labeling scheme.
\end{theorem}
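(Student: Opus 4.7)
The plan is to leverage the parallel $d'$-contraction sequence of length $L = O(\log n)$ produced by the earlier linear-arity contraction tree result, which the introduction explicitly flags as the structural tool behind the adjacency labeling scheme. I would view this as a rooted binary tree $T$ whose $n$ leaves are the vertices of $G$, whose internal nodes are the super-vertices produced by iterated disjoint pairwise contractions, and whose depth is $L$; write $G_i$ for the trigraph on the nodes of $T$ at depth $L-i$, so $G_0 = G$, $G_L = K_1$, and each $G_i$ has red degree at most some $d' = f(d)$.

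For each vertex $v$, the label $\ell(v)$ would consist of two parts: the leaf-to-root path $\pi(v)$ in $T$, encoded by $L = O(\log n)$ left/right bits (which also implicitly names $v$ uniquely); and, for each of the $L$ ancestors $a_i^v$ of $v$, a constant-size packet of trigraph data describing how $a_i^v$ sits locally inside $G_i$, namely the color of the edges (black, white, or red) between $a_i^v$ and each of its at most $d'$ red neighbors, with the red neighbors referenced by short $[d']$-indices rather than global pointers. Since each level contributes $O(d')$ bits, the total label length is $O(d' \log n) = O(\log n)$.

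The decoder $A(\ell(u), \ell(v))$ then walks the two paths $\pi(u), \pi(v)$ top-down from the root. At each level it can read from the paths whether the ancestors $a_i^u$ and $a_i^v$ coincide; if they do not, it attempts to identify $a_i^u$ among the (at most $d'$) red neighbors of $a_i^v$ using the local packet inside $\ell(v)$ (and symmetrically using $\ell(u)$). Because trigraph edge colors can only transition from black or white into red and never back, as soon as a level with a non-red edge is located the decoder outputs the corresponding adjacency; at worst this happens at level $0$, where the trigraph is exactly $G$ itself, so the process always terminates.

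The main obstacle is guaranteeing that the $[d']$-indices used to refer to red neighbors in $\ell(v)$ and in $\ell(u)$ can be matched by the decoder from the two paths $\pi(u), \pi(v)$ alone, without falling back on full $\Theta(\log n)$-bit global pointers (which would blow the label up to $\Theta(\log^2 n)$ bits, the naive bound). I would resolve this by choosing a \emph{canonical} ordering of each node's red neighbors driven purely by $T$, for instance by the depth-first position of their topmost leaf in $T$, so that $\ell(u)$ and $\ell(v)$ automatically refer to the same red neighbor by the same local index. Once this tree-based indexing is fixed, correctness of the decoder follows from the transition rules for trigraph edge colors, and the $O(\log n)$ bound on the label length follows from bounded red degree together with the logarithmic depth of $T$.
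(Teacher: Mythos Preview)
Your overall architecture matches the paper's: a parallel $D$-sequence of logarithmic length, a label that records one constant-size packet per level, and a decoder that tracks the colour of the edge between the two ancestors level by level until it becomes black or white. The gap is precisely at the point you flag as the ``main obstacle'', and your proposed resolution does not work. A canonical DFS-based ordering of the red neighbours of $a_i^v$ is indeed well-defined, but the decoder cannot \emph{use} it: given only $\pi(u)$, $\pi(v)$, and the constant-size packets, it has no way to compute the \emph{rank} of $a_i^u$ among the red neighbours of $a_i^v$, because that rank depends on the root-paths of the \emph{other} red neighbours, which appear in neither label. Storing those paths is exactly the $\Theta(\log^2 n)$ blow-up you are trying to avoid. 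A secondary symptom of the same confusion is that your packet at level~$i$ stores ``the colour of the edges between $a_i^v$ and each of its at most $d'$ red neighbours'', which is vacuously all-red; what must actually be stored is the colour in $G_i$ between $a_i^v$ and each \emph{child} of each red neighbour of $a_{i+1}^v$.

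The paper's fix is to bootstrap the indexing through the decoder itself. The decoding function $A$ is designed to output not merely $\{0,1,r\}$ but $\{0,1,r_1,\ldots,r_D\}$, with the guarantee that distinct red neighbours of a given vertex receive distinct subscripts. Concretely, at level $i{+}1$ the decoder has already (inductively) computed $A(\ell'(a_{i+1}^u),\ell'(a_{i+1}^v)) = r_j$; this $j$, together with the child-bit of $a_i^u$ read from $\pi(u)$, is exactly the address needed to look up the colour of $a_i^u a_i^v$ inside $\ell(v)$'s level-$i$ packet. The new red indices at level~$i$ are then assigned by enumerating the red entries that occur within the level-$i$ packet itself, so the indexing is defined recursively from the labels and never requires global tree information. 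Once you incorporate this self-referential indexing, the rest of your outline goes through.
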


This result is at the same time quite strong and quite weak.
Its strength lies in its broad generality.
We produce a unified labeling scheme for very different sparse and dense classes. 
However there are two caveats, both linked to its generality.
The first one is that we still do not know if the labeling function can be computed in polynomial time.
Indeed it requires a $d$-sequence (even a so-called \emph{parallel $D$-sequence} of logarithmic length).
If we know how to compute this sequence in many bounded twin-width classes, we do not know in the full generality of all the graphs with twin-width at most~$d$.
In the latter case, we currently need exponential time to find the sequence, and then to compute the labeling.
The decoding function, that is the adjacency test, runs in time $O(\log n)$ in the RAM model with unit-cost arithmetic operations over words of logarithmic length.
The second caveat is that when restricted to particular classes, the multiplicative constant preceding $\log n$ given by our proof is much larger than in the shortest known labeling schemes.
For instance, the current best labeling scheme for $K_t$-minor free graphs requires $2 \log n + o(\log n)$ bits per vertex \cite{Gavoille07}, while our multiplicative constant is double-exponential in $t$.

Improving the constant $c$ of existing $(c+o(1)) \log n$-bits labeling schemes is topical in implicit representations.  
Recently planar graphs were shown to admit a $(1+o(1)) \log n$-bits adjacency labeling scheme~\cite{Dujmovic-als}.
It is optimal up to the second-order term.
The labeling scheme is actually more general, and works for all subgraphs of strong products $H \boxtimes P$ where $H$ is a bounded-treewidth chordal graph (or \emph{$k$-tree}, for some fixed~$k$), and $P$ is a path.
A class $\mathcal C$ is said~\emph{flat} if there is an integer $k$ such that $\mathcal C \subseteq \sub(\mathcal H \boxtimes \mathcal P)$ where $\mathcal P$ is the set of all paths, and $\mathcal H$ is a set of graphs with treewidth at most~$k$.
An ongoing program (not specific to adjacency labeling schemes), dubbed \emph{graph product structure theorem}, established that many small and sparse classes are flat.  
This was initiated by a paper by Pilipczuk and Siebertz~\cite{PilipczukS19} showing a similar result for planar graphs.
This property was extended to apex-minor free~\cite{Dujmovic-queue}, bounded-degree minor-free~\cite{Dujmovic-clustered}, and $k$-planar classes~\cite{Dujmovic-kplanar}.
Hence they all enjoy a $(1+o(1)) \log n$-bits adjacency labeling scheme.
Interestingly all these classes have bounded twin-width (minor-free classes and $k$-planar graphs have bounded twin-width~\cite{twin-width1}).
This is no coincidence.
We will see that the strong product of two bounded twin-width graphs, one of which has bounded degree, has bounded twin-width. 
\begin{restatable}{theorem}{product}\label{thm:product-stability}
  Let $G$ and $H$ be two graphs.
  Then $\tww(G \boxtimes H) \leqslant \max\{\tww(G)(\Delta(H)+1)+2\Delta(H),\tww(H)+\Delta(H)\}$.
\end{restatable}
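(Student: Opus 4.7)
The plan is to build a contraction sequence of $G \boxtimes H$ in two phases. Let $d_G := \tww(G)$ and $d_H := \tww(H)$, and fix a $d_G$-sequence of $G$ and a $d_H$-sequence of $H$. View $G \boxtimes H$ as $|V(H)|$ copies $G_v := G[\{(u,v) : u \in V(G)\}]$ of $G$, one per $v \in V(H)$. In Phase~1 I process the $G$-sequence step by step: whenever it merges two parts $P_a, P_b$ of the current $G$-trigraph, I successively contract $(P_a, v)$ and $(P_b, v)$ in $G \boxtimes H$ for each $v \in V(H)$ in some fixed ordering. At the end of Phase~1 every copy $G_v$ has collapsed to a single super-vertex $s_v$. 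In Phase~2 I then mimic the $d_H$-sequence on these super-vertices.

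The key bookkeeping fact is that two super-vertices $(Q, v), (Q', v')$ of the current trigraph of $G \boxtimes H$ (with $Q, Q' \subseteq V(G)$) are linked by a black edge iff $v' \in N_H[v]$ and the bipartite $G$-adjacency between $Q$ and $Q'$ is complete, by a red edge iff $v' \in N_H[v]$ and this bipartite adjacency is mixed, and are non-adjacent otherwise. In Phase~1, consider an intermediate moment where the current $G$-step has advanced some copies $G_{v'}$ from partition $\mathcal P_k$ to $\mathcal P_{k+1}$ while others remain in $\mathcal P_k$. A super-vertex $(Q, v)$ then has at most $d_G$ red neighbors inside $G_v$ (mirroring the $G$-trigraph), and for each $v' \in N_H(v)$ at most $d_G + 2$ red neighbors inside $G_{v'}$: the red neighbors of $Q$ in the relevant $G$-trigraph, plus the ``diagonal'' $Q' = Q$ when $Q$ is not a clique in $G$, plus at worst the two transient parts $\{P_a, P_b\}$ in the sub-case $Q = P_{ab}$ while $G_{v'}$ still sits in $\mathcal P_k$. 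Summing gives a Phase~1 red degree of at most $d_G + \Delta(H)(d_G + 2) = d_G(\Delta(H)+1) + 2\Delta(H)$.

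In Phase~2, I would show by induction that every edge of the current super-vertex trigraph $\tilde H^{(k)}$ is red, and that the edge set of $\tilde H^{(k)}$ corresponds exactly to the full (black plus red) edge set of the current trigraph $H^{(k)}$ of the $d_H$-sequence. Therefore the red degree of $s_Q$ in $\tilde H^{(k)}$ equals the total degree of $Q$ in $H^{(k)}$. The red part is at most $d_H$. For the black part, any black neighbor $R$ of $Q$ in $H^{(k)}$ is bipartite-complete with $Q$ in $H$, so for any fixed $v \in Q$ we have $R \subseteq N_H(v)$; the distinct black neighbors are therefore disjoint nonempty subsets of $N_H(v)$, of which there are at most $|N_H(v)| \leq \Delta(H)$. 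Hence the Phase~2 red degree is at most $d_H + \Delta(H)$. (The corner case where $G$ is complete is even simpler, since Phase~1 then ends in a purely black trigraph isomorphic to $H$ blown up by twins.)

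The hard part will be the Phase~1 transitional analysis: one must check each sub-case (depending on whether $Q = P_{ab}$ and whether $G_{v'}$ has been advanced or not) to verify the $d_G + 2$ cross-copy red-degree bound, and in particular confirm that the transient parts $P_a, P_b$ account only for a constant additive overhead. Phase~2, by contrast, rests on the clean observation that black degrees in an $H$-sequence are controlled by common neighborhoods in $H$, which is what prevents the red degree there from blowing up with the potentially large total degree of $Q$ in $H^{(k)}$.
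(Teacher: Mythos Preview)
Your proposal is correct and follows essentially the same two-phase approach as the paper: first mimic a $d_G$-sequence of $G$ simultaneously in every $H$-indexed copy to collapse $G \boxtimes H$ to a trigraph over $H$, then finish with a $d_H$-sequence of $H$. Your Phase~1 bound $d_G + \Delta(H)(d_G+2) = d_G(\Delta(H)+1)+2\Delta(H)$ and its justification (at most $d_G$ cross-copy red neighbors from the $G$-trigraph, plus the diagonal, plus the two transient parts $P_a,P_b$ when $Q=P_{ab}$) match the paper's accounting exactly; the paper states the same bound more tersely, attributing $d_G(\Delta+1)$ to the copies in $N_H[j]$ and $2\Delta$ to edges towards $(\star,j')$ with $\star\in\{u,v,w\}$. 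Your Phase~2 argument---black neighbors $R$ of $Q$ in $H^{(k)}$ are disjoint subsets of $N_H(v)$ for any fixed $v\in Q$, hence there are at most $\Delta(H)$ of them---is precisely the content of the paper's preparatory Lemma~\ref{lem:degree-tww} (``black degree never increases along a contraction sequence''), which the paper invokes as a black box rather than reproving inline.
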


As cliques have twin-width 0, taking subgraphs does in general not preserve twin-width at all.
Nevertheless on ``sparse'' classes, bounded twin-width is subgraph-closed.
We show that if the strong product of a bounded twin-width class $\mathcal G$ with a bounded-degree bounded twin-width class $\mathcal H$ is $K_{t,t}$-free, then the subgraphs of $\mathcal G \boxtimes \mathcal H$ have bounded twin-width.
\begin{restatable}{theorem}{subproduct}\label{thm:product-stability-class}
Let $\mathcal G$ and $\mathcal H$ two classes such that $\mathcal G \boxtimes \mathcal H$ is $K_{t,t}$-free.   
Then $\tww(\sub(\mathcal G \boxtimes \mathcal H)) \leqslant f(\tww(\mathcal G),\tww(\mathcal H),\Delta(\mathcal H),t)$.
\end{restatable}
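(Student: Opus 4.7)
The plan is to reduce \cref{thm:product-stability-class} to the sparse twin-width equivalence announced in the introduction, rather than construct a contraction sequence directly on subgraphs of $\mathcal G \boxtimes \mathcal H$. First I would apply \cref{thm:product-stability} to every pair $G \in \mathcal G$, $H \in \mathcal H$, yielding
\[ \tww(\mathcal G \boxtimes \mathcal H) \leqslant \max\bigl\{\tww(\mathcal G)(\Delta(\mathcal H)+1)+2\Delta(\mathcal H),\ \tww(\mathcal H)+\Delta(\mathcal H)\bigr\} =: d'. \]
Let $\mathcal C$ denote the hereditary closure of $\mathcal G \boxtimes \mathcal H$. Since twin-width is monotone under induced subgraphs, $\tww(\mathcal C) \leqslant d'$; since $K_{t,t}$-freeness as a subgraph passes to induced subgraphs, $\mathcal C$ remains $K_{t,t}$-free; and $\sub(\mathcal C) = \sub(\mathcal G \boxtimes \mathcal H)$, so it suffices to bound $\tww(\sub(\mathcal C))$.

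Next I would invoke the sparse twin-width five-way equivalence previewed in the introduction, specifically the implication $(1) \Rightarrow (4)$: for a hereditary class of bounded twin-width, $K_{t,t}$-freeness implies that the subgraph closure also has bounded twin-width, with a bound depending only on the twin-width and on $t$. Applied to $\mathcal C$, this gives $\tww(\sub(\mathcal C)) \leqslant g(d', t)$ for some function $g$, and composing with the bound on $d'$ produces the required $f(\tww(\mathcal G),\tww(\mathcal H),\Delta(\mathcal H),t)$.

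The real difficulty lies not in this reduction but in the equivalence itself. The natural route is $K_{t,t}$-free together with bounded twin-width $\Rightarrow$ linearly many edges (by applying \cref{thm:marcustardos} to the adjacency matrix in a twin-ordering and invoking \cref{thm:gridtheorem} to rule out a large mixed minor) $\Rightarrow$ bounded expansion at every depth $\Rightarrow$ the subgraph closure has bounded twin-width, the last implication being the substantive upgrade. This step is highly nontrivial, since in general subgraphs can wreck twin-width (cliques have twin-width $0$ while every graph is a subgraph of a clique). A more hands-on alternative would try to build a contraction sequence on a subgraph $G' \subseteq G \boxtimes H$ by adapting a near-optimal sequence of $G$ and using the bounded degree of $\mathcal H$ together with $K_{t,t}$-freeness to control newly created red edges; however, this seems to amount to a special case of the equivalence, so the indirect route through the sparse twin-width theorem looks preferable.
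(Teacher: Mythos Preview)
Your reduction is exactly the paper's: bound $\tww(\mathcal G \boxtimes \mathcal H)$ via \cref{thm:product-stability}, then invoke the sparse twin-width equivalence $(i)\Rightarrow(iv)$ of \cref{thm:sparseboundedtww}. The only cosmetic difference is that you pass to the hereditary closure to meet the stated hypothesis, whereas the paper simply remarks that the chain $(i)\Rightarrow(ii)\Rightarrow(iv)$ never uses heredity; also, your speculated route through ``linearly many edges $\Rightarrow$ bounded expansion $\Rightarrow$ bounded twin-width of $\sub(\mathcal C)$'' is needlessly indirect---the paper's actual argument is that $K_{t,t}$-free plus $(2d'+2)$-mixed free gives a $d$-grid free ordering, and grid-freeness is trivially inherited by subgraphs, so $\sub(\mathcal C)$ is $d$-grid free hence of bounded twin-width by \cref{thm:gridtheorem}.
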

In particular flat classes have bounded twin-width (since graphs with bounded treewidth have bounded twin-width, and flat classes are $K_{t,t}$-free).
By essence, the ``flat class'' approach to $(1+o(1)) \log n$-bits labeling scheme is limited to classes that are $K_t$-free.
Another interesting limit case is minor-free classes which are not apex-minor free, like all the $K_6$-minor free graphs for example.
Dujmovi\'c et al.~\cite{Dujmovic-queue} show that these classes are not flat.

We hope that the \emph{versatile tree of contractions} (see~\cref{lem:versatile-tww}) or the \emph{short parallel contraction sequence} (see~\cref{lem:short-d-sequence}) may help for small dense classes and $K_t$-minor free graphs.
We optimistically conjecture that our~\cref{thm:als} can be improved to an optimal labeling scheme up to the second-order term.
\begin{conjecture}\label{conj:als}
Every bounded twin-width class has a $(1+o(1)) \log n$-bits labeling scheme.  
\end{conjecture}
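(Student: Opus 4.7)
The plan is to bootstrap from \cref{thm:main}, whose proof shows that a class with bounded twin-width contains at most $n! c^n$ graphs on $[n]$. This pins the information-theoretic target to $\log n + O(1)$ bits per vertex. The natural starting object is the short parallel contraction sequence promised (for bounded twin-width graphs) by \cref{lem:short-d-sequence}: a contraction tree $T$ of height $O(\log n)$ whose leaves are in bijection with $V(G)$ and whose internal nodes carry $D$-trigraphs for some $D = D(d)$. A vertex $v \in V(G)$ is then naturally encoded by the address of its leaf in $T$, which would already fit in $(1+o(1))\log n$ bits, up to an appropriate balancing of $T$. The core task is to read off adjacency from the pair of labels in $O(\log n)$ RAM time and \emph{without inflating the label length beyond $(1+o(1))\log n$}.

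First, I would formulate adjacency via the \emph{lowest common contraction}: for $u,v \in V(G)$, walk down from the root until their addresses diverge; at that node, look at the edge color between the two children ancestor-branches that contain $u$ and $v$. If the color is black or absent, adjacency is determined; the problem is the red case. Since each trigraph has red degree at most $D$, the number of red ambiguities encountered by $v$ along its root-to-leaf path is bounded, and at each of the $O(\log n)$ levels one needs $O(\log \log n)$ bits to index the offending red neighbor among the $\Theta(n / 2^{\text{level}})$ branches still active. A naïve accounting gives $O(\log n \log \log n)$ extra bits; the first real obstacle is to bring this down to $o(\log n)$, for instance by using the versatile tree structure of \cref{lem:versatile-tww} to re-route addresses so that red edges always point to nearby addresses and can be recorded with $O(1)$ bits per level.

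Second, I would try to mimic the product-structure route that works for planar graphs, using \cref{thm:product-stability-class}: in the sparse regime ($K_{t,t}$-free), bounded twin-width classes embed into strong products $\mathcal{G} \boxtimes \mathcal{H}$ with $\mathcal{H}$ of bounded degree, and an approximate such decomposition into a bounded-treewidth factor and a ``linear'' factor would let me import the $(1+o(1))\log n$-bits scheme of Dujmovi\'c et al.\ almost directly. For the dense regime (e.g.\ subclasses of permutation graphs, unit interval graphs), the analogue would be a canonical one-dimensional ordering in which, modulo contractions, adjacency is determined by $O(1)$-bit data per position; the twin-ordered matrices from \cref{thm:gridtheorem} suggest such an order exists, but making the error data fit in $o(\log n)$ extra bits per vertex will require a global encoding.

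The main obstacle, and the reason this is only a conjecture, is the multiplicative constant: every generic construction I can envisage stores, for each vertex, a bounded (but $d$-dependent, often double-exponential) amount of local data — red-neighborhood indices, branch identities, or correction bits — which yields a scheme of size $\Theta(\log n)$ but with a constant that does not tend to $1$ as $n \to \infty$. Pushing this constant to $1$ demands an \emph{amortized} labeling, where most vertices use only $\log n + o(\log n)$ bits and the few expensive ones are compensated for globally. A secondary obstacle is uniformity: the conjecture must cover dense classes for which no product-structure theorem is known, so the argument cannot be purely sparsity-based, and likely requires a new normal form for $d$-contraction sequences that exposes, in the dense regime, the same ``path of bounded-width slices'' phenomenon that makes the planar case work.
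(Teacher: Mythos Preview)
The statement you are addressing is \cref{conj:als}, which is a \emph{conjecture}, not a theorem: the paper does not prove it and does not claim to. The paper only establishes the weaker \cref{thm:als} (an $O(\log n)$-bits scheme with a multiplicative constant that is double-exponential in~$d$), and explicitly flags the $(1+o(1))\log n$ bound as an optimistic open problem. There is therefore no ``paper's own proof'' to compare your proposal against.

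Your write-up is not a proof either, and to your credit you say so: you identify the same bottleneck the paper implicitly leaves open, namely that every construction based on the parallel $D$-sequence or the versatile tree stores $\Theta_D(1)$ bits of red-neighborhood data per level, yielding a constant in front of $\log n$ that depends on~$d$ and does not tend to~$1$. Your two suggested routes (amortizing red-edge data along the contraction tree, and importing product-structure labeling in the sparse case) are reasonable research directions, but neither is carried to a conclusion, and the second one cannot cover dense bounded twin-width classes even in principle, as you note. So the honest assessment is: this is a sketch of where a proof might be sought, not a proof proposal, and it matches the paper's state of knowledge rather than advancing beyond it.
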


\subsection{Sparse twin-width}

The trace of bounded twin-width on sparse classes is also an interesting and potentially new class.
There are five natural ways of forcing a bounded twin-width class to be ``sparse'': forbidding $K_{t,t}$ as a subgraph, forbidding a $d$-grid minor in its adjacency matrix (and not a mere $d$-mixed minor), requiring that every graph has bounded average degree, requiring that the subgraphs also have bounded twin-width, and requiring that the class has bounded expansion. 
Let $A_\sigma(G)$ denote the adjacency matrix of $G$ when $V(G)$ is ordered by $\sigma$.
We say that a class $\mathcal C$ is \emph{$d$-grid free} if for every $G \in \mathcal C$ there is an ordering $\sigma$ of $V(G)$ such that $A_\sigma(G)$ is $d$-grid free.

We show that all five definitions are actually equivalent.
\begin{restatable}{theorem}{sparse}\label{thm:sparseboundedtww}
  If $\mathcal C$ is a hereditary class of bounded twin-width, the following are equivalent.
  \begin{compactitem}
  \item(i) There is an integer $t$ such that no graph of $\mathcal C$ contains $K_{t,t}$ as a subgraph.
  \item(ii) There is an integer $d$ such that $\mathcal C$ is $d$-grid free.
  \item(iii) There is an integer $g$ such that every $n$-vertex graph $G \in \mathcal C$ has at most $gn$ edges.
  \item(iv) The subgraph closure $\sub(\mathcal C)$ has bounded twin-width.
  \item(v) There is a function $f$ such that $\nabla_r(\mathcal C) \leqslant f(r)$ for every $r$.
  \end{compactitem}
\end{restatable}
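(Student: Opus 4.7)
The plan is to establish a cycle of implications (iv) $\Rightarrow$ (i) $\Rightarrow$ (ii) $\Rightarrow$ (iii) $\Rightarrow$ (iv) together with the equivalence (iv) $\Leftrightarrow$ (v), under the standing hypothesis $\tww(\mathcal C) \leq d$. The quickest step is (iv) $\Rightarrow$ (i): half-graphs are subgraphs of $K_{t,t}$, and half-graphs have unbounded twin-width, so if $\mathcal C$ contained $K_{t,t}$ as a subgraph for arbitrarily large $t$, then $\sub(\mathcal C)$ would contain arbitrarily large half-graphs, contradicting (iv). For (i) $\Rightarrow$ (ii), I would fix the ordering $\sigma$ making $A_\sigma(G)$ be $(2d+2)$-mixed free (Theorem~\ref{thm:gridtheorem}) and argue that any large enough grid minor can be regrouped $t$-by-$t$ into a $(D,D)$-division whose cells contain at least $t^2$ entries 1 spread over $\geq t$ rows and $\geq t$ columns; $K_{t,t}$-freeness forbids such a cell from being all-one, so every cell is mixed, and for $D = 2d+3$ this would yield a forbidden $D$-mixed minor. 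Step (ii) $\Rightarrow$ (iii) is immediate from Theorem~\ref{thm:marcustardos}: a $d$-grid-free adjacency matrix has at most $c_d n$ entries 1, hence $|E(G)| \leq c_d n/2$.

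The main obstacle is (iii) $\Rightarrow$ (iv): bounded twin-width plus linear edges should force the subgraph closure to have bounded twin-width. My plan is to keep the ordering $\sigma$ that makes $A_\sigma(G)$ be $(2d+2)$-mixed free and to argue that for every $G' \subseteq G$ the matrix $A_\sigma(G')$ remains $f(d,g)$-mixed free. Since $A_\sigma(G')$ is obtained from $A_\sigma(G)$ by turning certain 1s into 0s, the new mixed zones come precisely from previously horizontal or vertical zones that lose their distinguishing 1-entries; the linear-edge hypothesis caps the total number of such entry-changes by $O(n)$, and a careful accounting of how a single change can create mixed zones along its row and column should yield the desired mixed-freeness bound. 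Theorem~\ref{thm:gridtheorem} would then convert $f(d,g)$-mixed freeness of $A_\sigma(G')$ into a twin-width bound for $G'$.

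For the equivalence (iv) $\Leftrightarrow$ (v), direction (v) $\Rightarrow$ (iii) is the $r=0$ case of bounded expansion. For (iv) $\Rightarrow$ (v), I would bootstrap the cycle just proved. Any $r$-shallow minor $H = G'/\pi$ of some $G \in \mathcal C$ arises from a subgraph $G' \in \sub(\mathcal C)$ by contracting connected branch sets of radius~$\leq r$; marking one representative per branch set with a unary predicate and using the FO-definability of distance $\leq r$ for fixed $r$, $H$ is realised by an FO transduction of $G'$, so by Theorem~\ref{thm:transduction} its twin-width is bounded in terms of $\tww(\sub(\mathcal C))$ and $r$. The class $\mathcal C^r$ of $r$-shallow minors of $\mathcal C$ is thus hereditary and of bounded twin-width, and its subgraph closure $\sub(\mathcal C^r)$ equals the class of $r$-shallow minors of $\sub(\mathcal C)$, which has bounded twin-width for the same reason. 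Applying the chain (iv) $\Rightarrow$ (i) $\Rightarrow$ (ii) $\Rightarrow$ (iii) to $\mathcal C^r$ yields $|E(H)| = O(|V(H)|)$ for every $r$-shallow minor $H$, giving bounded expansion of $\mathcal C$.
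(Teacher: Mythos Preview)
Your cycle has two genuine gaps.

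\textbf{The step (iv) $\Rightarrow$ (i) fails.} Half-graphs do \emph{not} have unbounded twin-width: in the half-graph on $a_1,\dots,a_n,b_1,\dots,b_n$ with $a_ib_j\in E$ iff $i<j$, the vertices $a_{n-1}$ and $a_n$ differ only on $b_n$, so contracting them creates a single red edge, and iterating gives a $1$-sequence. The correct argument (as in the paper) is that if $\mathcal C$ contains $K_{t,t}$ for all $t$, then $\sub(\mathcal C)$ contains \emph{every} bipartite graph, in particular the $1$-subdivisions of all cliques, which do have unbounded twin-width.

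\textbf{The step (iii) $\Rightarrow$ (iv) does not work as described, and is the wrong route.} Mixed-freeness is \emph{not} monotone under deleting edges: turning a single $1$ into a $0$ inside a constant all-$1$ zone makes it mixed. Your proposed accounting (``at most $O(n)$ entry changes, each creates boundedly many new mixed zones'') cannot succeed, because a single flipped entry sits in \emph{every} $(D,D)$-division simultaneously, and there is no per-division budget to spend. The linear edge bound on $G$ gives you nothing about how many zones of a \emph{fixed} division become mixed in $G'$.

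The paper avoids this entirely by routing the cycle through (ii) rather than (iii): it proves (iii) $\Rightarrow$ (i) using heredity (if $\mathcal C$ is hereditary and not $K_{t,t}$-free, it contains $K_{n,n}$ for all $n$, so has unbounded average degree), and then (ii) $\Rightarrow$ (iv) is a one-liner because \emph{grid}-freeness, unlike mixed-freeness, is trivially preserved when $1$-entries are deleted. This is the key structural insight you are missing: once you know $A_\sigma(G)$ has no $d$-grid minor, the same $\sigma$ witnesses $d$-grid freeness for every subgraph $G'\subseteq G$, and then Theorem~\ref{thm:gridtheorem} bounds $\tww(G')$.

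A smaller point: in your (i) $\Rightarrow$ (ii), ``$K_{t,t}$-freeness forbids such a cell from being all-one, so every cell is mixed'' is not the right implication. A non-mixed cell need not be all-one; it is horizontal or vertical. The actual argument is the contrapositive: since $A_\sigma(G)$ is $(2d{+}2)$-mixed free, after regrouping $t$-by-$t$ some coarse cell is, say, horizontal; then each of its $t$ fine row-parts contains a full row of $1$s (because that row already has a $1$ in every fine column-part and the coarse cell is horizontal), and these $t$ rows against one column per fine column-part form the forbidden $K_{t,t}$. Your transduction argument for (iv) $\Rightarrow$ (v) is essentially the paper's, though the paper bicolours the edges to separate branch-set edges from minor edges; with bare distance-$\le 2r$ formulas you may pick up spurious edges through neighbouring branch sets, so you should either adopt the edge-colouring or argue more carefully that $G'$ can be chosen so this cannot happen.
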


Ignoring item $(iv)$, a compact version of this theorem reads: For a hereditary class of bounded twin-width having bounded grid minors, bicliques, average degree, or expansion are all equivalent. 

Thus we say that a hereditary class has \emph{bounded sparse twin-width} if it has bounded twin-width and satisfies any of the five items (that is, satisfies all five).
One may wonder whether bounded sparse twin-width coincides with some existing sparse class.
More generally it is interesting to see how bounded sparse twin-width compares to the established sparse classes.
A few candidates come to mind: polynomial expansion, bounded expansion, bounded queue number, bounded stack number, bounded nonrepetitive coloring classes.
Although we do not prove it for bounded queue or stack number, we argue that these classes do not coincide with bounded sparse twin-width.

As cubic graphs have unbounded twin-width, bounded expansion is strictly more general than bounded sparse twin-width.
For the same reason, bounded nonrepetitive coloring does not imply bounded sparse twin-width.
It is possible however that bounded sparse twin-width classes have bounded nonrepetitive coloring.
The existence of an infinite family of cubic expanders with bounded twin-width implies that bounded sparse twin-width classes do not necessarily have polynomial expansion.
If the small conjecture is true, polynomial expansion would be a strict subset of bounded sparse twin-width. 
We will show that classes with bounded queue number or bounded stack number have bounded (sparse) twin-width.
We believe that this inclusion is strict and that the expanders based on random 2-lifts have unbounded queue and stack numbers.

\subsection{Organization of the rest of paper}

In~\cref{sec:small} we show \cref{thm:main}, that every class of bounded twin-width is small.
From this we conclude that non-small classes such as subcubic graphs, interval graphs, and triangle-free unit segment graphs have unbounded twin-width.
This can be respectively put in perspective with the fact that some cubic expanders (as we see in \cref{sec:expanders}), unit interval graphs, and $K_t$-free unit $d$-dimensional ball graphs, have bounded twin-width~\cite{twin-width1}.
In~\cref{sec:adjacency} we leverage the results from the previous section to present $O(\log n)$-bits adjacency labeling schemes on bounded twin-width classes.
We then explore the converse of \cref{thm:main} for hereditary classes.
In~\cref{sec:expanders} we show that the small class of cubic expanders obtained by iterated 2-lifts from $K_4$ has indeed bounded twin-width.
In~\cref{sec:subd-cliques} we prove that the $s$-subdivision of the clique $K_n$, with $s>0$, has bounded twin-width if and only if $s = \Omega(\log n)$.
In~\cref{sec:sparse-tww} we prove \cref{thm:sparseboundedtww}, the list of characterizations of bounded sparse twin-width.
We then show that flat classes, and classes with bounded queue or stack number have bounded (sparse) twin-width.
In~\cref{sec:groups} we investigate the twin-width of the finite induced subgraphs of a fixed Cayley graph.
We show that such classes are small for every finitely generated group.
This is a rare example of a small class for which we still do not know if the twin-width is bounded.

\section{Bounded twin-width classes are small}\label{sec:small}

In this section we show that graphs of bounded twin-width have bounded \emph{versatile twin-width}.
Informally it says that whenever we can find a sequence (or path) of $d$-contractions, we can even find a tree of $D$-contractions with linear arity, for some $D$ bounded by a function of~$d$. 
This result is fairly technical but shares some ideas and arguments with Section 5 of our previous paper~\cite{twin-width1}.
We made the current section self-contained.
We nevertheless mention some frequent parallels with~\cite{twin-width1}.
Finally we can follow the end of the proof of Norine et al.~\cite{Norine06} --that proper minor-closed classes are small-- to extend the result to bounded twin-width.

\subsection{The proof for proper minor-closed classes and how (not) to tune it}\label{subsec:minor-closed}
Let us first give a brief sketch of Norine et al.'s proof, which works by induction on~$n$.
They say that a vertex is \emph{$d$-good} if it has degree at most~$d$ and either has a twin or has a neighbor with degree at most~$d$.
They show the following technical lemma: $K_t$-minor free $n$-vertex graphs have at least $n/d$ $d$-good vertices, for some $d$ function of $t$ only.
Let $\mathcal I_{n,t}$ be the set of $K_t$-minor free graphs on $[n]$, and $\mathcal K_{n,t}$ be the subset of all those graphs of $\mathcal I_{n,t}$ where vertex $n$ is $d$-good.
By their lemma $n/d \cdot |\mathcal I_{n,t}| \leqslant n |\mathcal K_{n,t}|$, hence $|\mathcal I_{n,t}| \leqslant d |\mathcal K_{n,t}|$.
Furthermore, any graph of $\mathcal K_{n,t}$ admits an index $i \in [n-1]$ such that either $i$ and $n$ are false twins, or $i$ and $n$ are adjacent and have at most $d-1$ other neighbors each.
Therefore any $G \in \mathcal K_{n,t}$ can be obtained from a $G' \in \mathcal I_{n-1,t}$ and $i \in [n-1]$ by either introducing a new vertex $n$ false twin of $i$ (one graph), or by splitting $i$ into $i$ and a new vertex $n$ adjacent to $i$, and by distributing in $G$ the at most $2(d-1)$ neighbors of $i$ in $G'$ into: neighbors of $i$ only, neighbors of $n$ only, and common neighbors (at most $3^{2(d-1)}$ graphs).
Hence $|\mathcal I_{n,t}| \leqslant d(1+3^{2(d-1)})(n-1)|\mathcal I_{n-1,t}| \leqslant (1+3^{2(d-1)})(n-1) \cdot (n-1)!c^{n-1} \leqslant n!c^n$, by taking $c := d(1+3^{2(d-1)})$.

We need to redefine the notion of being $d$-good for bounded twin-width classes.
A very natural candidate for that would be to say that a vertex is $d$-good if it admits a $d$-contraction with another vertex.
After all, there is always such a vertex (or such a pair of vertices) in a $d$-trigraph.
However, we cannot expect $d$-trigraphs to have linearly many such vertices.
Think for instance of a path on $n$ vertices.
It has twin-width 1, but only four vertices (the two endpoints and their neighbor) that can be contracted to yield a 1-sequence.
Surely we could allow mere $D$-contractions, for some $D \gg d$, but then we would leave the class of $d$-trigraphs.
So it would be unclear which class we are bounding the size of.
It is indeed noteworthy in the above sketch that by deleting a vertex or contracting adjacent vertices, one remains in the class of $K_t$-minor free graphs.

To overcome that issue, we introduce a more robust notion of bounded twin-width.
A \emph{tree of $d$-contractions} of a $d$-trigraph $G$ is a rooted tree, whose root is labeled by $G$, and whose leaves are all labeled by 1-vertex graphs $K_1$, and such that one can go from any parent to any child by a $d$-contraction.
With this new definition, $d$-sequences coincide with trees of $d$-contractions which are in fact paths.
We say that a trigraph $G$ has \emph{versatile twin-width~$d$} if there exists some~$p$, function of $d$ only, such that $G$ admits a tree of $d$-contractions in which every internal node has at least $|V(\cdot)|/p$ children with distinct labels (where $|V(\cdot)|$ denotes the number of vertices of the corresponding node label).
Such a tree is then called a \emph{versatile tree of $d$-contractions}.

Let us say that a contraction is \emph{$d$-correct} (or simply \emph{correct} when we precise that it is a $d$-contraction) if the obtained graph has twin-width at most $d$.
The inductive nature of versatile twin-width provides us the desired stability.
Not only $G$ admits linearly many correct $d$-contractions, but it admits linearly many $d$-contractions towards graphs of versatile twin-width $d$.
This is indeed witnessed by the subtrees rooted at each child of the root labeled by $G$.
We now focus on proving that every trigraph with twin-width $d$ has a versatile tree of $D$-contractions, for a larger $D$ function of $d$ only.
This is a bit technical, but once it is done, we will be able to mimic the end of Norine et al.'s proof.

\subsection{Neatly divided symmetric $0,1,r$-matrices}

Recall that $r$ (for red) is the error symbol.
It will now be convenient to tune some of the notions developed in our previous paper specifically for $0,1,r$-matrices with particular divisions.
The notions introduced without a definition are all formalized in \cref{sec:prelim} of the present paper, as well as in~\cite[Section 5]{twin-width1}.
Reading first \cite[Section 5]{twin-width1} does not harm, but it is not necessary to understand the current section.

We will manipulate divisions of $0,1,r$-matrices such that every zone either contains only $r$~entries or contains no $r$ entry and is horizontal or vertical (or both).
Let us call \emph{neat} such a division.
Zones filled with $r$ entries are now called \emph{mixed}. 
A \emph{neatly divided matrix} is a pair $(M,(\mathcal R,\mathcal C))$ where $M$ is a $0,1,r$-matrix and $(\mathcal R,\mathcal C)$ is a neat division of $M$.
A \emph{$t$-mixed minor} in a neatly divided matrix is a $(t,t)$-division which coarsens the neat subdivision, and contains in each of its $t^2$ zones at least one mixed zone (filled with $r$ entries) or a 0,1-corner.
See~\cref{fig:new-mixed-minor} for an illustration.
A neatly divided matrix is said \emph{$t$-mixed free} if it does not admit a $t$-mixed minor. 

A \emph{mixed cut of a row-part $R \in \mathcal R$} of a neat division $(\mathcal R, \mathcal C=\{C_1,C_2,\ldots\})$ is an index $i$ such that both $R \cap C_i$ and $R \cap C_{i+1}$ are \emph{non}-mixed, and there is a $0,1$-corner in the 2-by-$|R|$ zone defined by the last column of $C_i$, the first column of $C_{i+1}$, and $R$.
Importantly, a mixed cut cannot border a mixed zone.
(This is a difference with the definition of~\cite[Section 5]{twin-width1}.)
The \emph{mixed value of a row-part $R \in \mathcal R$} of a neat division $(\mathcal R, \mathcal C=\{C_1,C_2,\ldots\})$ is the number of mixed zones $R \cap C_j$ plus the number of mixed cuts between two (adjacent non-mixed) zones $R \cap C_j$ and $R \cap C_{j+1}$.
Note that a mixed cut counts for one unit in the mixed value, regardless of the number of corners overlapping the two adjacent zones.
We similarly define the mixed value of a column-part $C \in \mathcal C$.
The \emph{mixed value of a neat division} of a $0,1,r$-matrix is the maximum of the mixed values taken over every part.
The \emph{part size} of a division (resp.~partition) $(\mathcal R,\mathcal C)$ is defined as $\max(\max_{R \in \mathcal R}|R|,\max_{C \in \mathcal C}|C|)$.
A division is \emph{symmetric} if the largest row index of each row-part and the largest column index of each column-part define the same set of integers, that is informally, if the horizontal separations are symmetric of the vertical separations about the main diagonal.
For instance the division depicted on \cref{fig:new-mixed-minor} is symmetric since both the largest row indices of the row-parts and the largest column indices of the column-parts define the set $\{2,3,4,6\}$. 
We call \emph{symmetric fusion} of a symmetric division the fusion of two consecutive parts in $\mathcal C$ and of the two corresponding parts in $\mathcal R$.
A symmetric fusion on a symmetric division yields another symmetric division.
A matrix $A := (a_{i,j})_{i,j}$ is said \emph{symmetric} in the usual sense, namely, for every entry $a_{i,j}$ of~$A$, $a_{i,j}=a_{j,i}$.

\begin{figure}[h!]
  \centering
  \begin{tikzpicture}
    \def\s{0.5}
    \def\hb{\s/2}
    \def\vb{\s/2}
    \def\he{8.5 * \s}
    \def\ve{8.5 * \s}
     \foreach \i/\j/\v in {1/1/1,1/2/1,1/3/r,1/4/r,1/5/0,1/6/0,1/7/1,1/8/1, 2/1/0,2/2/0,2/3/r,2/4/r,2/5/0,2/6/1,2/7/1,2/8/1, 3/1/1,3/2/1,3/3/0,3/4/0,3/5/0,3/6/1,3/7/1,3/8/0, 4/1/1,4/2/1,4/3/1,4/4/0,4/5/0,4/6/0,4/7/1,4/8/1, 5/1/r,5/2/r,5/3/1,5/4/1,5/5/0,5/6/0,5/7/1,5/8/0, 6/1/r,6/2/r,6/3/0,6/4/0,6/5/0,6/6/1,6/7/1,6/8/0, 7/1/0,7/2/0,7/3/1,7/4/1,7/5/1,7/6/0,7/7/r,7/8/r, 8/1/1,8/2/1,8/3/0,8/4/0,8/5/1,8/6/1,8/7/r,8/8/r}{
      \node (e\i\j) at (\s * \i,\s * \j) {$\v$} ;
    }
    \draw (\hb-0.05,\vb) -- (\hb-0.05,\ve) --++(0.1,0) ;
    \draw (\hb-0.05,\vb) --++(0.1,0) ;
    \draw (\he+0.05,\vb) -- (\he+0.05,\ve) --++(-0.1,0) ;
    \draw (\he+0.05,\vb) --++(-0.1,0) ;
    \foreach \i in {2,3,4,6}{
      \draw (\i * \s + \s/2,\vb) -- (\i * \s + \s/2,\ve) ;
    }
    \foreach \j in {2,4,5,6}{
      \draw (\hb,\j * \s + \s/2) -- (\he,\j * \s + \s/2) ;
    }
    \begin{scope}[xshift=7cm]
    \foreach \i/\j/\v in {1/1/1,1/2/1,1/3/r,1/4/r,1/5/0,1/6/0,1/7/1,1/8/1, 2/1/0,2/2/0,2/3/r,2/4/r,2/5/0,2/6/1,2/7/1,2/8/1, 3/1/1,3/2/1,3/3/0,3/4/0,3/5/0,3/6/1,3/7/1,3/8/0, 4/1/1,4/2/1,4/3/1,4/4/0,4/5/0,4/6/0,4/7/1,4/8/1, 5/1/r,5/2/r,5/3/1,5/4/1,5/5/0,5/6/0,5/7/1,5/8/0, 6/1/r,6/2/r,6/3/0,6/4/0,6/5/0,6/6/1,6/7/1,6/8/0, 7/1/0,7/2/0,7/3/1,7/4/1,7/5/1,7/6/0,7/7/r,7/8/r, 8/1/1,8/2/1,8/3/0,8/4/0,8/5/1,8/6/1,8/7/r,8/8/r}{
      \node (e\i\j) at (\s * \i,\s * \j) {$\v$} ;
    }
    \draw (\hb-0.05,\vb) -- (\hb-0.05,\ve) --++(0.1,0) ;
    \draw (\hb-0.05,\vb) --++(0.1,0) ;
    \draw (\he+0.05,\vb) -- (\he+0.05,\ve) --++(-0.1,0) ;
    \draw (\he+0.05,\vb) --++(-0.1,0) ;
    \foreach \i in {3,6}{
      \draw[very thick] (\i * \s + \s/2,\vb) -- (\i * \s + \s/2,\ve) ;
    }
    \foreach \i in {2,4}{
      \draw[thin] (\i * \s + \s/2,\vb) -- (\i * \s + \s/2,\ve) ;
    }
    \foreach \j in {4,6}{
      \draw[very thick] (\hb,\j * \s + \s/2) -- (\he,\j * \s + \s/2) ;
    }
    \foreach \j in {2,5}{
      \draw[thin] (\hb,\j * \s + \s/2) -- (\he,\j * \s + \s/2) ;
    }
    \foreach \i/\j in {1/3,1/5,2/7,5/1,5/5,4/7,7/2,7/5,7/7}{
    \draw[fill,fill opacity=0.05,red,dashed] (\i * \s - \s / 2+0.05, \j * \s - \s / 2+0.05) --++(0.85,0) --++(0,0.9) --++(-0.85,0) -- cycle ;}
    \end{scope}
  \end{tikzpicture}
  \caption{To the left, a neat division: each zone is horizontal, or vertical, or full with $r$~entries (mixed zone). 
    Note that the division is symmetric but not the matrix.
    To the right, in bold, a 3-mixed minor of the neat division.
    Observe that it coarsens the neat division and contains in each of its 9 zones either a 0,1-corner or a mixed zone (framed by red dashed boxes).}
  \label{fig:new-mixed-minor}
\end{figure}
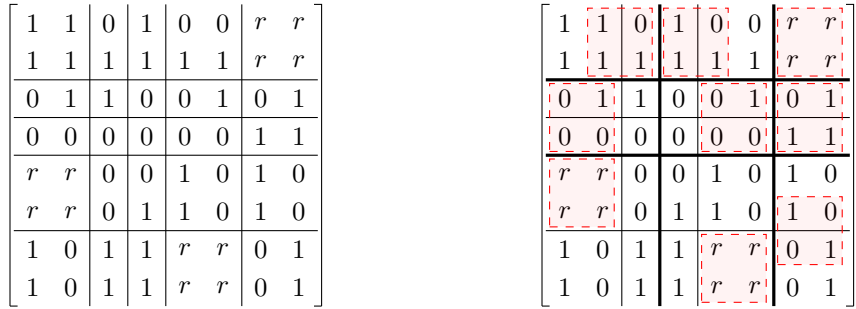

The following definition is crucial.
It lists the invariants that we want to keep in our neatly divided matrices in order to build a versatile tree of contractions.

\begin{definition}
  Let $\mathcal M_{n,d}$ be the class of the neatly divided $n \times n$ symmetric $0,1,r$-matrices $(M,(\mathcal R,\mathcal C))$, such that $(\mathcal R,\mathcal C)$ is symmetric and has:
  \begin{itemize}
  \item mixed value at most $4c_d$,
  \item part size at most $2^{4c_d+2}$, and
  \item no $d$-mixed minor.
  \end{itemize}
\end{definition}

In the previous definition, $c_d := 8/3(d+1)^22^{4d}$ as defined in the improvement of Marcus-Tardos bound~\cite{Cibulka16}.
The conditions of the first and second bullets are enough to bound the red number of a neatly divided matrix of $\mathcal M_{n,d}$.
\begin{lemma}\label{lem:ndm-red-number}
  Let $(M,(\mathcal R,\mathcal C))$ be in $\mathcal M_{n,d}$.
  The red number of $M$ is at most $4c_d \cdot 2^{4c_d+2}$.
\end{lemma}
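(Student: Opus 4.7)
The proof is a short counting argument that directly exploits the two numerical conditions defining $\mathcal M_{n,d}$. The plan is to bound the number of $r$ entries in an arbitrary row of $M$; by symmetry of the matrix (and of the division) the same bound will hold for columns.

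First I would fix a row $i$ of $M$, and let $R \in \mathcal R$ be the row-part containing $i$. By the neatness of the division, every entry of row $i$ lies in a zone $R \cap C_j$ which is either mixed (entirely filled with $r$) or non-mixed (containing no $r$ entry at all). Hence the $r$ entries on row $i$ are exactly the union, over the mixed zones $R \cap C_j$, of the corresponding sub-row of that zone, each contributing $|C_j|$ many $r$'s.

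Next I would use the two numerical invariants in sequence. The number of indices $j$ such that $R \cap C_j$ is mixed is at most the mixed value of $R$, which is bounded by $4c_d$. Each such column-part $C_j$ has width at most the part size, bounded by $2^{4c_d+2}$. Multiplying these two bounds gives
\[
\#\{j \in [n] : M_{i,j} = r\} \;\leq\; 4c_d \cdot 2^{4c_d+2}.
\]
The same estimate applied to an arbitrary column-part $C \in \mathcal C$ (whose mixed value is again bounded by $4c_d$ and whose intersecting row-parts have size at most $2^{4c_d+2}$) bounds the number of $r$ entries per column by the same quantity, yielding the stated bound on the red number.

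There is no real obstacle here: the argument is essentially a definition unfold. The only subtle point is to make sure the right quantities are being multiplied: the mixed value counts mixed \emph{zones} (not $r$ entries), and one controls the contribution of each mixed zone by the part size of the orthogonal partition. The third invariant (absence of a $d$-mixed minor) is not used in this particular lemma and plays its role only later, when one wants to coarsen the division while preserving membership in $\mathcal M_{n,d}$.
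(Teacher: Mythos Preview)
Your proposal is correct and matches the paper's proof essentially line for line: both observe that the $r$ entries in a fixed row (or column) lie only in mixed zones, bound the number of such zones by the mixed value $4c_d$, and bound the width of each by the part size $2^{4c_d+2}$. Your additional remarks (that only mixed zones contribute $r$ entries by neatness, and that the $d$-mixed-free condition is not used here) are accurate and make explicit what the paper leaves implicit.
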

\begin{proof}
  Any row or column intersects at most $4c_d$ mixed zones (filled with $r$ entries).
  Each mixed zone has width and length bounded by the part size $2^{4c_d+2}$.
  Hence the maximum total number of $r$ entries on a single row or column is at most $4c_d \cdot 2^{4c_d+2}$.
\end{proof}

\subsection{Finding invariant-preserving coarsenings}

A \emph{coarsening} of a neatly divided matrix $(M,(\mathcal R,\mathcal C))$ is a neatly divided matrix $(M',(\mathcal R',\mathcal C'))$ such that $(\mathcal R',\mathcal C')$ is a coarsening of $(\mathcal R,\mathcal C)$, and $M'$ is obtained from $M$ by setting to $r$ all entries that lie, in $M$ divided by $(\mathcal R',\mathcal C')$, in a zone with at least one $r$ entry or a 0,1-corner.
We also refer to the process of going from $(M,(\mathcal R,\mathcal C))$ to $(M',(\mathcal R',\mathcal C'))$ as \emph{coarsening operation} (or simply \emph{coarsening}). 
A coarsening operation from $(M,(\mathcal R,\mathcal C)) \in \mathcal M_{n,d}$ to $(M',(\mathcal R',\mathcal C'))$ is said \emph{invariant-preserving} if $(M',(\mathcal R',\mathcal C')) \in \mathcal M_{n,d}$, and \emph{elementary} if it consists of a single symmetric fusion.
The following lemma shows that not having a $t$-mixed minor is preserved for free in coarsenings of neatly divided matrices.

\begin{lemma}\label{lem:coarsening-mixed-free}
  Every coarsening $(M',(\mathcal R',\mathcal C'))$ of a $t$-mixed neatly divided free matrix $(M,(\mathcal R,\mathcal C))$ is $t$-mixed free itself.
\end{lemma}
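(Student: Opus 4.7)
The plan is to prove the contrapositive: if the coarsening $(M',(\mathcal R',\mathcal C'))$ admits a $t$-mixed minor, then so does $(M,(\mathcal R,\mathcal C))$. Concretely, I will take a witnessing $(t,t)$-division in $M'$ and show that the \emph{same} division works in $M$, by examining on a zone-by-zone basis what a mixed zone or a 0,1-corner of $M'$ corresponds to in~$M$.

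First I would observe a structural fact: since $(\mathcal R',\mathcal C')$ coarsens $(\mathcal R,\mathcal C)$, any $(t,t)$-division $(\mathcal R'',\mathcal C'')$ that coarsens $(\mathcal R',\mathcal C')$ automatically coarsens $(\mathcal R,\mathcal C)$ as well. So the candidate witness survives the change of underlying neat division; only the contents of its zones must be inspected. Another important fact, used throughout, is that the coarsening operation only overwrites entries of $M$ with $r$ (it never turns an $r$ back into a $0$ or $1$, nor swaps $0$ and $1$). In particular, whenever $m'_{i,j} \in \{0,1\}$ we have $m_{i,j} = m'_{i,j}$.

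Now fix a zone $Z$ of $(\mathcal R'',\mathcal C'')$. By the $t$-mixed minor assumption in $M'$, $Z$ (viewed inside $M'$ and refined by $(\mathcal R',\mathcal C')$) contains either a mixed zone $Z'$ of $(\mathcal R',\mathcal C')$, i.e.\ a subzone of $Z$ filled with $r$, or a 0,1-corner. In the corner case, by the ``no overwriting $\{0,1\}$'' observation the same four contiguous entries already form a 0,1-corner of $M$, and this corner lies inside $Z$; we are done. In the mixed-zone case, unfold the definition of the coarsening: the zone $Z'$ was set to $r$ because, when $M$ is divided by $(\mathcal R',\mathcal C')$, $Z'$ already contained some $r$ entry or a 0,1-corner in $M$. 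An $r$ entry in $Z'$ must itself lie in some mixed zone of the finer division $(\mathcal R,\mathcal C)$ (since $(M,(\mathcal R,\mathcal C))$ is neat), and that mixed zone is contained in $Z' \subseteq Z$; a 0,1-corner in $Z'$ in $M$ is, in particular, a 0,1-corner in $Z$. Either way, $Z$ contains a mixed zone of $(\mathcal R,\mathcal C)$ or a 0,1-corner in $M$, as required.

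Since this holds for every one of the $t^2$ zones, $(\mathcal R'',\mathcal C'')$ is a $t$-mixed minor of $(M,(\mathcal R,\mathcal C))$, contradicting the hypothesis. I do not anticipate a serious obstacle: the whole argument is a careful bookkeeping of which entries of $M$ get rewritten during coarsening, plus the monotonicity of ``contains a mixed zone or corner'' under inclusion of zones. The only subtle point is remembering that coarsening from $(\mathcal R,\mathcal C)$ to $(\mathcal R',\mathcal C')$ may create new $r$-filled zones but cannot destroy preexisting mixed zones or 0,1-corners below the finer division level, which is exactly what Case~2 above relies on.
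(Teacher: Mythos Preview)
Your proof is correct and follows essentially the same approach as the paper: both argue the contrapositive, transport the witnessing $(t,t)$-division back to $(M,(\mathcal R,\mathcal C))$ via transitivity of coarsening, and handle each zone by the same two-case analysis (0,1-corner persists because coarsening only writes $r$'s; an $r$-filled subzone $Z'$ traces back either to a preexisting $r$ entry or to a 0,1-corner in $M$). Your version is slightly more explicit in noting that a preexisting $r$ entry must sit inside a mixed zone of the finer neat division, which is exactly what the definition of $t$-mixed minor requires.
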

\begin{proof}
  Assume there is a $t$-mixed minor $(\mathcal R^*,\mathcal C^*)$ of $(M',(\mathcal R',\mathcal C'))$.
  Let us consider the $(t,t)$-division $(\mathcal R^*,\mathcal C^*)$ in $(M,(\mathcal R,\mathcal C))$.
  By transitivity, $(\mathcal R^*,\mathcal C^*)$ coarsens $(\mathcal R,\mathcal C)$.

  There are two possibilities for a zone $Z$ of $(\mathcal R^*,\mathcal C^*)$ in $(M',(\mathcal R',\mathcal C'))$.
  Either it contains a $0,1$-corner, but then, $Z$ contains the same $0,1$-corner in $(M,(\mathcal R,\mathcal C))$.
  This is because the coarsening operation of a neatly divided matrix never replaces entries by 0 or 1 entries (we may only add $r$ entries).
  Or $Z$ contains an $r$ entry, or more precisely a zone $Z' \subseteq Z$ of $(M',(\mathcal R',\mathcal C'))$ filled with $r$ entries. 
  Either one of these $r$ entries was already present in $(M,(\mathcal R,\mathcal C))$, or the $r$ entries of $Z'$ appear after the fusion of a zone $Z_1 \subseteq Z'$ adjacent to a zone $Z_2 \subseteq Z'$ such that $Z_1 \cup Z_2$ contained a $0,1$-corner (and $Z_1 \cup Z_2 \subseteq Z' \subseteq Z$).
  
  Therefore, in any case, $Z$ in $(M,(\mathcal R,\mathcal C))$ contains an $0,1$-corner or an $r$ entry.
  We conclude that $(\mathcal R^*,\mathcal C^*)$ is a $t$-mixed minor of $(M,(\mathcal R,\mathcal C))$.
\end{proof}

The previous lemma will in particular give us some control on the average mixed value among the parts of a coarsening of a neatly divided matrix in $\mathcal M_{n,d}$.
This turns out crucial to find a coarsening which preserves the imposed upper bound on the overall mixed value.

\begin{lemma}\label{lem:coarsening-amv}
  Let $(M,(\mathcal R,\mathcal C))$ be in $\mathcal M_{n,d}$, and $(M',(\mathcal R,\mathcal C'))$ be a coarsening of $(M,(\mathcal R,\mathcal C))$ with $|\mathcal C'| \geqslant \lceil |\mathcal C|/2 \rceil$.
  Then the average mixed value among all the parts of $\mathcal C'$ on $(M',(\mathcal R,\mathcal C'))$ is at most $2c_d$.
\end{lemma}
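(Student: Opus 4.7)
Write $X$ for the total number of mixed zones of $(\mathcal R, \mathcal C')$ in $M'$ and $Y$ for the total number of vertical mixed cuts (between consecutive row-parts within some $C' \in \mathcal C'$). The sum of the mixed values over all column-parts of $\mathcal C'$ is exactly $X+Y$, so the plan is to prove $X+Y \leqslant 2 c_d |\mathcal C'|$ by contradiction, via the Marcus--Tardos theorem (\cref{thm:marcustardos}) and \cref{lem:coarsening-mixed-free}.

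First I would build an auxiliary $|\mathcal R| \times |\mathcal C'|$ zero-one matrix $A$ that encodes the mixed features of $(M', (\mathcal R, \mathcal C'))$: place a $1$ at $(R, C')$ whenever $R \cap C'$ is a mixed zone of $M'$, and whenever $C'$ has a vertical mixed cut incident to $R$ (marking both row-parts at the endpoints of each cut). Since mixed zones and mixed cuts at a common position $(R, C')$ are mutually exclusive---a mixed cut forces both adjacent zones to be non-mixed---$A$ has at least $X+Y$ ones. Assuming for contradiction that $X+Y > 2c_d |\mathcal C'|$, the symmetry of $(\mathcal R, \mathcal C)$ gives $|\mathcal R| = |\mathcal C| \leqslant 2 |\mathcal C'|$, so the number of ones of $A$ strictly exceeds $c_d \max(|\mathcal R|, |\mathcal C'|)$; \cref{thm:marcustardos} then extracts a $d$-grid minor of $A$, namely a $(d,d)$-division into consecutive row- and column-ranges with at least one $1$ per cell.

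I would then lift this grid to a $d$-mixed minor of $(M', (\mathcal R, \mathcal C'))$: a mixed-zone marker in a cell directly witnesses a mixed zone there, while a mixed-cut marker witnesses a $0,1$-corner spanning two consecutive row-parts of $\mathcal R$. Marking both endpoints of every cut ensures that, whenever the row-range of the grid contains at least two row-parts of $\mathcal R$, one of the two corresponding markers gives a $0,1$-corner entirely inside the cell; the degenerate case of a singleton row-range is absorbed by a standard boundary adjustment. By \cref{lem:coarsening-mixed-free}, $(M', (\mathcal R, \mathcal C'))$ inherits $d$-mixed freeness from $(M, (\mathcal R, \mathcal C)) \in \mathcal M_{n,d}$, contradicting the $d$-mixed minor just produced, and hence forcing $X+Y \leqslant 2c_d |\mathcal C'|$.

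The main technical obstacle is this lifting step: the $0,1$-corner associated to a mixed cut crosses a row-boundary of $\mathcal R$, and one must ensure the boundary is not torn apart by the row-partition of the grid minor. The double-marking of cuts, together with the factor-$2$ slack provided by the hypothesis $|\mathcal C'| \geqslant \lceil |\mathcal C|/2 \rceil$, is precisely what keeps the Marcus--Tardos constant equal to $c_d$ and yields the clean bound $2c_d$ in the conclusion rather than a degraded $2c_{2d}$.
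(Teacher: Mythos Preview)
Your counting is fine and the application of Marcus--Tardos is set up correctly, but the lifting step has a genuine gap. Consider a cell $(B_p,D_q)$ of the $d$-grid minor of $A$ with $B_p=\{R_a,\ldots,R_b\}$, and suppose the only $1$'s of $A$ in that cell are cut markers at $(R_a,C')$ coming from a cut at $a{-}1/a$ and at $(R_b,C')$ coming from a cut at $b/b{+}1$. Both associated $0,1$-corners lie \emph{outside} the lifted cell $\bigl(\bigcup_{j=a}^b R_j\bigr)\times\bigl(\bigcup_{C\in D_q}C\bigr)$, and nothing forces a further mixed feature inside it. Double-marking does not help here: the ``other'' marker of each of these cuts lands in the neighbouring row-block $B_{p-1}$ or $B_{p+1}$, not in $B_p$. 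Your sentence ``whenever the row-range contains at least two row-parts, one of the two markers gives a corner inside'' conflates ``$B_p$ has $\ge 2$ parts'' with ``the cut is internal to $B_p$''; these are independent. The vague ``standard boundary adjustment'' does not repair this either, since shifting a row-boundary to rescue one cell can break another.

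The paper's proof avoids exactly this obstacle by coarsening $\mathcal R$ in the two parities $\mathcal R_1=\{R_1\cup R_2,R_3\cup R_4,\ldots\}$ and $\mathcal R_2=\{R_1,R_2\cup R_3,\ldots\}$. Every mixed cut between $R_i$ and $R_{i+1}$ is then swallowed into a genuine mixed \emph{zone} of one of these two coarsenings, and an injection shows that the mixed value of each $C'\in\mathcal C'$ is at most the sum $a_1+a_2$ of its mixed-zone counts in $(\mathcal R_1,\mathcal C')$ and $(\mathcal R_2,\mathcal C')$. Hence one of the two averages exceeds $c_d$; since $|\mathcal R_1|\le\lceil|\mathcal C|/2\rceil\le|\mathcal C'|$, Marcus--Tardos applied to the mixed-zone indicator on $\mathcal R_1\times\mathcal C'$ (or $\mathcal R_2\times\mathcal C'$) gives a $d$-grid all of whose $1$'s are mixed zones, which lift trivially. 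The parity merge is precisely the device that turns your boundary-straddling corners into cell-contained witnesses; without it, the argument does not close.
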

\begin{proof}
  Note that in the coarsening operation of the lemma statement, we made only fusions in $\mathcal C$.
  Naturally the same would symmetrically work if only fusions in $\mathcal R$ were made.

  Let us assume by contradiction that the average mixed value $\gamma$, taken among every part $C \in \mathcal C'$ on $(M',(\mathcal R,\mathcal C'))$ is strictly greater than $2c_d$.
  We consider two coarsenings of $(M',(\mathcal R=\{R_1,\ldots,R_h\},\mathcal C'))$: $(M'_1,(\mathcal R_1=\{R_1 \cup R_2,R_3 \cup R_4,\ldots\},\mathcal C'))$ and $(M'_2,(\mathcal R_2=\{R_1, R_2 \cup R_3,R_4 \cup R_5,\ldots\},\mathcal C'))$.
  Let $C \in \mathcal C'$ be any part with $a$ mixed zones and $b$ mixed cuts on $(M',(\mathcal R,\mathcal C'))$.
  Let $a_1$, respectively $a_2$, be the number of mixed zones of $C$ on $(M',(\mathcal R_1,\mathcal C'))$, respectively on $(M',(\mathcal R_2,\mathcal C'))$.
  We claim that $a+b \leqslant a_1+a_2$.

  Indeed we can design the following injection from the mixed zones and cuts of $C$ on $(M',(\mathcal R,\mathcal C'))$ to the mixed zones of $C$ on $(M',(\mathcal R_1,\mathcal C'))$ and on $(M',(\mathcal R_2,\mathcal C'))$.
  We order the mixed zones and cuts of $C$ on $(M',(\mathcal R,\mathcal C'))$ from top to bottom, say, $x_1, x_2, \ldots, x_{a+b}$.
  For $i$ going from 1 to $a+b$, we attribute $x_i$ to $(\mathcal R_1,\mathcal C')$ or to $(\mathcal R_2,\mathcal C')$ based on the following rules.
  If $x_i$ is a mixed cut, there is a unique $j \in \{1,2\}$ such that $x_i$ is contained in a mixed zone of $C$ on $(M',(\mathcal R_j,\mathcal C'))$, so we map $x_i$ to this mixed zone.
  If $x_i$ is a mixed zone, we map it to the mixed zone containing $x_i$ in $(M',(\mathcal R_{3-j},\mathcal C'))$, where $x_{i-1}$ was mapped to a mixed zone in $(M',(\mathcal R_j,\mathcal C'))$.
  This is possible since there is a zone containing $x_i$ in both $(M',(\mathcal R_1,\mathcal C'))$ and $(M',(\mathcal R_2,\mathcal C'))$.
  For $x_1$ to be well-defined, we can imagine that there is a fictitious $x_0$ attributed to $(\mathcal R_2,\mathcal C')$.
  To see that this is indeed an injection we first need to recall that there is no mixed cut bordering a mixed zone.
  Suppose on the contrary that a same mixed zone $Z$ of $C$ on, say, $(M',(\mathcal R_1,\mathcal C'))$ has two preimages $x_i$ and $x_{i'}$, with $i<i'$.
  If $x_i$ and $x_{i'}$ are mixed zones, they need to be consecutive to both be in $Z$, hence $i'=i+1$.
  But then $x_{i'}$ should have been attributed to $(M',(\mathcal R_2,\mathcal C'))$ according to our rules.
  As $Z$ contains at most one mixed cut of $C$ on $(M',(\mathcal R,\mathcal C'))$, $x_i$ and $x_{i'}$ cannot be both mixed cuts.
  Finally it is impossible that exactly one of $x_i, x_{i'}$ is a mixed zone (and the other a mixed cut), since it would imply a mixed cut incident to a mixed zone. 
  See \cref{fig:injection} for an illustration.
  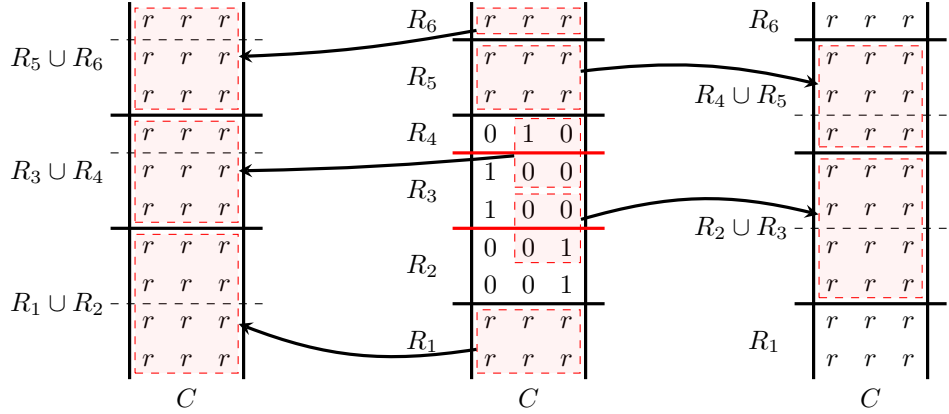
\begin{figure}
  \centering
  \begin{tikzpicture}
    \def\s{0.5}
    \def\hb{2 * \s}
    \def\vb{\s/2}
    \def\he{6 * \s}
    \def\ve{10.5 * \s}
    \foreach \i/\j/\v in {3/1/r,3/2/r,3/3/0,3/4/0,3/5/1,3/6/1,3/7/0,3/8/r,3/9/r,3/10/r, 4/1/r,4/2/r,4/3/0,4/4/0,4/5/0,4/6/0,4/7/1,4/8/r,4/9/r,4/10/r, 5/1/r,5/2/r,5/3/1,5/4/1,5/5/0,5/6/0,5/7/0,5/8/r,5/9/r,5/10/r}{
      \node (e\i\j) at (\s * \i,\s * \j) {$\v$} ;
    }
    \foreach \i in {2,5}{
      \draw[very thick] (\i * \s + \s/2,\vb) -- (\i * \s + \s/2,\ve) ;
    }
    \foreach \j in {2,7,9}{
      \draw[very thick] (\hb,\j * \s + \s/2) -- (\he,\j * \s + \s/2) ;
    }
      \foreach \j in {4,6}{
      \draw[red,very thick] (\hb,\j * \s + \s/2) -- (\he,\j * \s + \s/2) ;
    }
    \foreach \i/\j in {1.5/1,3.5/2,5.5/3,7/4,8.5/5,10/6}{
      \node at (0.6,\i * \s) {$R_\j$} ;
    }
    \node at (4 * \s,0) {$C$} ;

    \foreach \i/\j/\l in {e31/e52/x1,e44/e55/x2,e46/e57/x3,e38/e59/x4,e310/e510/x5}{
      \node[draw,fill,fill opacity=0.05,red,dashed,inner sep=-0.03cm,fit=(\i) (\j)] (\l) {} ;
    }
    
    \begin{scope}[xshift=-4.5cm]
      \foreach \i in {3,4,5}{
        \foreach \j in {1,...,10}{
          \node (e\i\j) at (\s * \i,\s * \j) {$r$} ;
          }
      }
    \foreach \i in {2,5}{
      \draw[very thick] (\i * \s + \s/2,\vb) -- (\i * \s + \s/2,\ve) ;
    }
    \foreach \j in {4,7}{
      \draw[very thick] (\hb,\j * \s + \s/2) -- (\he,\j * \s + \s/2) ;
    }
   \foreach \j in {2,6,9}{
      \draw[dashed] (\hb,\j * \s + \s/2) -- (\he,\j * \s + \s/2) ;
    }
    \foreach \i/\j/\k in {2.5/1/2,6/3/4,9/5/6}{
      \node at (0.3,\i * \s) {$R_\j \cup R_\k$} ;
    }
    \node at (4 * \s,0) {$C$} ;
     \foreach \i/\j/\l in {e31/e54/y1,e35/e57/y2,e38/e510/y3}{
      \node[draw,fill,fill opacity=0.05,red,dashed,inner sep=-0.03cm,fit=(\i) (\j)] (\l) {} ;
    }
    \end{scope}
    
     \begin{scope}[xshift=4.5cm]
       \foreach \i in {3,4,5}{
        \foreach \j in {1,...,10}{
          \node (e\i\j) at (\s * \i,\s * \j) {$r$} ;
          }
      }
    \foreach \i in {2,5}{
      \draw[very thick] (\i * \s + \s/2,\vb) -- (\i * \s + \s/2,\ve) ;
    }
    \foreach \j in {2,6,9}{
      \draw[very thick] (\hb,\j * \s + \s/2) -- (\he,\j * \s + \s/2) ;
    }
   \foreach \j in {4,7}{
      \draw[dashed] (\hb,\j * \s + \s/2) -- (\he,\j * \s + \s/2) ;
    }
    \foreach \i/\j in {1.5/1,10/6}{
      \node at (0.6,\i * \s) {$R_\j$} ;
    }
    \foreach \i/\j/\k in {4.5/2/3,8/4/5}{
      \node at (0.3,\i * \s) {$R_\j \cup R_\k$} ;
    }
    \node at (4 * \s,0) {$C$} ;
    \foreach \i/\j/\l in {e33/e56/z1,e37/e59/z2}{
      \node[draw,fill,fill opacity=0.05,red,dashed,inner sep=-0.03cm,fit=(\i) (\j)] (\l) {} ;
    }
     \end{scope}
     \foreach \i/\j/\z in {x1/y1/15,x2/z1/15,x3/y2/2,x4/z2/10,x5/y3/4}{
       \draw[very thick, -stealth] (\i) to [bend left=\z] (\j) ;
     }
  \end{tikzpicture}
  \caption{Illustration of the injection from the mixed zones and cuts of $C$ on $(M',(\mathcal R,\mathcal C'))$ to the mixed zones of $C$ on $(M',(\mathcal R_1,\mathcal C'))$ and $(M',(\mathcal R_2,\mathcal C'))$. The mixed cuts are represented by red solid lines, and an arbitrary choice of an overlapping $0,1$-corner.}
  \label{fig:injection}
\end{figure}


  Let $\alpha_1$, respectively~$\alpha_2$, be the average, taken among every $C \in \mathcal C'$, of the number of mixed zones on $(M',(\mathcal R_1,\mathcal C'))$, respectively~$(M',(\mathcal R_2,\mathcal C'))$.
  Summing up the last inequality for every $C \in \mathcal C'$, it holds that $\gamma \leqslant \alpha_1+\alpha_2$.
  Thus $\alpha_1+\alpha_2 > 2c_d$.
  Without loss of generality, we assume that $\alpha_1 > c_d$.
  An important point is that $|\mathcal R_1| \leqslant \lceil |\mathcal C|/2 \rceil \leqslant |\mathcal C'|$.
  So by Marcus-Tardos theorem (\cref{thm:marcustardos}) applied to the $0,1$-matrix with as many entries as zones of $(\mathcal R_1,\mathcal C')$, and a 1 in a mixed zone and a 0 otherwise, we obtain a $d$-mixed minor in a coarsening of $(M,(\mathcal R,\mathcal C)) \in \mathcal M_{n,d}$.
  This contradicts \cref{lem:coarsening-mixed-free}, since neatly divided matrices of $M_{n,d}$ are $d$-mixed free.
\end{proof}

Finally we check again, with our slightly different definition of mixed value (compared to that of \cite[Section 5]{twin-width1}), that the column-part fusions can only decrease the mixed value of row-parts (and vice versa).

\begin{lemma}\label{lem:coarsening-mixed-value-decr}
  Let $(M',(\mathcal R,\mathcal C'))$ be the coarsening of a neatly divided matrix $(M,(\mathcal R,\mathcal C))$ resulting from the fusion of a single pair of consecutive parts $C, C' \in \mathcal C$, with $C \cup C'=C^*$.
  Then for every part $R \in \mathcal R$, the mixed value of $R$ on $(M',(\mathcal R,\mathcal C'))$ is at most the mixed value of $R$ on $(M,(\mathcal R,\mathcal C))$.
\end{lemma}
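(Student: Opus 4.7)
The plan is to fix a row-part $R \in \mathcal R$ and track, zone by zone, how its mixed value can only go down when we merge two consecutive column-parts $C,C' \in \mathcal C$ into $C^* = C \cup C'$. Since every zone $R \cap C_j$ with $C_j \notin \{C,C'\}$ is unchanged, the only contributions to the mixed value that can differ between $(M,(\mathcal R,\mathcal C))$ and $(M',(\mathcal R,\mathcal C'))$ involve the zones $R \cap C$, $R \cap C'$, $R \cap C^*$ and the mixed cuts at the three interfaces: between $R \cap C_{\text{prev}}$ and $R \cap C$, between $R \cap C$ and $R \cap C'$, and between $R \cap C'$ and $R \cap C_{\text{next}}$ (on the one hand) versus between $R \cap C_{\text{prev}}$ and $R \cap C^*$ and between $R \cap C^*$ and $R \cap C_{\text{next}}$ (on the other). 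I would isolate this local contribution and bound the difference.

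First I would argue that if at least one of $R \cap C$, $R \cap C'$ is mixed, then $R \cap C^*$ is mixed (it inherits an $r$ entry, and the coarsening rule then fills the whole zone with $r$). In that case the right-hand contribution contains exactly one mixed zone (and no mixed cuts, since by definition a mixed cut cannot border a mixed zone), while the left-hand contribution already contained at least one mixed zone plus any remaining cuts $c_{\text{prev}}, c_{\text{next}}$; the internal cut $c_{\text{int}}$ is forced to $0$, again because it would border a mixed zone. So the change in mixed value is $\leqslant 0$.

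If both $R \cap C$ and $R \cap C'$ are non-mixed, there are two subcases. If there is a mixed cut between them, then a $0,1$-corner straddles their common boundary, so the coarsening fills $R \cap C^*$ with $r$ and makes it mixed; the left side contributes the cut plus any boundary cuts, the right side contributes a single mixed zone with no bordering cuts, so the change is again $\leqslant 0$. If there is no mixed cut between them, then $R \cap C^*$ contains no $r$ entry; the only potential $0,1$-corner in $R \cap C^*$ that is not already in $R \cap C$ or $R \cap C'$ would have to straddle their common boundary, but this would exactly witness a mixed cut that we assumed absent. Thus by \cref{lem:corner}, $R \cap C^*$ is still non-mixed, and the cuts $c'_{\text{prev}}, c'_{\text{next}}$ at its two outer boundaries coincide with $c_{\text{prev}}, c_{\text{next}}$ (the relevant border columns are unchanged), so the contribution is exactly preserved.

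The main obstacle is precisely this last subcase: one must justify, via the corner characterization (\cref{lem:corner}), that no corner lies strictly inside $R \cap C^*$ except possibly across the $C$–$C'$ boundary, and that the definition of mixed cut (using only the last column of one part and the first column of the next) makes the outer cuts $c_{\text{prev}}, c_{\text{next}}$ agree with $c'_{\text{prev}}, c'_{\text{next}}$. Once these observations are in place, the case analysis yields $\leqslant 0$ in every case, proving the lemma.
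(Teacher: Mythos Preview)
Your proof is correct and follows essentially the same approach as the paper's: a local case analysis around the merged zone $R \cap C^*$, using that mixed cuts cannot border mixed zones and that corners characterize mixedness (\cref{lem:corner}). The only cosmetic difference is that the paper splits cases on whether $R \cap C^*$ is mixed in the coarsening, whereas you split on the status of $R \cap C$, $R \cap C'$ and the internal cut in the original; these case splits are in bijection and the accounting within each case is identical.
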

\begin{proof}
  Again this symmetrically works if we switch the role of $\mathcal R$ and $\mathcal C$.
  (The proof of that statement follows as in~\cite[Lemma 11]{twin-width1}.)
  If the zone $R \cap C^*$ is not mixed in $(M',(\mathcal R,\mathcal C'))$, then the mixed value of $R$ has not changed after the fusion of $C$ and $C'$.
  If, on the contrary, the zone $R \cap C^*$ is mixed in $(M',(\mathcal R,\mathcal C'))$, then at least one of the three following propositions holds: zone $R \cap C$ is mixed in $(M,(\mathcal R,\mathcal C))$, zone $R \cap C'$ is mixed in $(M,(\mathcal R,\mathcal C))$, the border between $C$ and $C'$ is a mixed cut for $R$.
  Thus we can charge the contribution of $R \cap C^*$ to the mixed value in $(M',(\mathcal R,\mathcal C'))$ to a unit of mixed value in $(M,(\mathcal R,\mathcal C))$.
  Besides, the borders of $R \cap C^*$ cannot contribute mixed cuts for $R$, since the zone is mixed (recall the definition of a mixed cut for neatly divided $0,1,r$-matrices).
  Finally the remaining mixed zones and mixed cuts of $R$ stayed unchanged between $(M,(\mathcal R,\mathcal C))$ and $(M',(\mathcal R,\mathcal C'))$.
\end{proof}

We are now equipped to find invariant-preserving coarsenings. 

\begin{lemma}\label{lem:coarsening-linear}
  We set $\ell := 2^{4c_d+1}$ and $s := 8 \ell$.
  Every neatly divided matrix $(M,(\mathcal R,\mathcal C)) \in \mathcal M_{n,d}$ has an invariant-preserving coarsening $(M',(\mathcal R',\mathcal C')) \in \mathcal M_{n,d}$ with at least $\lfloor n/s \rfloor$ disjoint pairs of identical columns.
\end{lemma}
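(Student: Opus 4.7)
The plan is to leverage two ingredients: a structural bound on the number of distinct columns per column-part, and a coarsening procedure that enlarges column-parts. First, I establish the structural bound: in any $(M, (\mathcal R, \mathcal C)) \in \mathcal M_{n,d}$, every column-part $C$ has at most $\ell = 2^{4c_d+1}$ distinct columns. Indeed, within $C$, horizontal zones $R \cap C$ force all columns of $C$ to agree (by definition of horizontal), mixed zones force them to agree as $r$, and only vertical-but-not-horizontal zones can distinguish columns of $C$; within such a zone, each column takes a constant $0/1$ value determined by the shared row-pattern. Hence the number of distinct columns in $C$ is at most $2^k$, where $k$ counts distinct row-patterns among vert-but-not-horiz zones of $C$. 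I bound $k$ using \cref{lem:corner}: if two vert-but-not-horiz zones in $C$ indexed by consecutive row-parts have distinct patterns, a short case analysis on the $2\times|C|$ boundary strip (classifying each $2\times 2$ block as horizontal, vertical, or mixed) shows that some block must be mixed; this $0,1$-corner between two non-mixed adjacent zones is a mixed cut of $C$. More generally, any structural transition involving vert-but-not-horiz zones along $C$ contributes either a mixed cut or a mixed zone to the mixed value of $C$, which is at most $4c_d$. So the sequence of zones along $C$ has at most $4c_d+1$ runs of same-pattern vert-but-not-horiz zones, yielding $k \leq 4c_d + 1$ and at most $2^{4c_d+1} = \ell$ distinct columns in $C$.

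Next, I construct the coarsening $(M', (\mathcal R', \mathcal C')) \in \mathcal M_{n,d}$ by iterated symmetric elementary coarsenings, maintaining all invariants of $\mathcal M_{n,d}$. \Cref{lem:coarsening-mixed-free} handles $d$-mixed-freeness automatically. For the size and mixed-value invariants, I apply \cref{lem:coarsening-amv} to a symmetric halving of $\mathcal C$ (and correspondingly of $\mathcal R$): the average mixed value of halved parts is at most $2c_d$, so by a Markov inequality at least half of the pairs $(C_{2i-1}, C_{2i})$ yield a fused part with mixed value $\leq 4c_d$. A second Markov argument on the combined part sizes, using $\sum |C| = n$ and the average pair-sum $2n/|\mathcal C|$, gives a linear number of pairs with combined size $\leq 2\ell$. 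Performing only those fusions that satisfy both criteria preserves the invariants, and iterating this procedure drives $|\mathcal C'|$ down close to its minimum value $n/(2\ell)$.

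Combining the two ingredients, the total number of disjoint pairs of identical columns in $(M', (\mathcal R', \mathcal C'))$ is at least $\sum_{|C'|>\ell} (|C'|-\ell)/2 \geq (n - \ell |\mathcal C'|)/2$, which is at least $\lfloor n/(8\ell)\rfloor = \lfloor n/s \rfloor$ once $|\mathcal C'|$ is sufficiently close to $n/(2\ell)$. The main obstacle I expect is controlling the joint interaction of the part-size and mixed-value constraints during coarsening: naive fusion may violate either, while too-conservative fusion leaves $|\mathcal C|$ too large. The combined Markov arguments (invoking \cref{lem:coarsening-amv} and \cref{lem:coarsening-mixed-value-decr} to ensure that fusing rows does not spoil column-part mixed values) are needed to guarantee that a linear fraction of fusions is always safely available at each step. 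The factor $s = 8\ell$ absorbs losses from the Markov on mixed value (factor $2$), the Markov on combined size (factor $2$), and the pair counting (factor $2$).
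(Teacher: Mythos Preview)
Your approach shares the paper's two-ingredient strategy and invokes the same key lemmas (\cref{lem:coarsening-amv}, \cref{lem:coarsening-mixed-value-decr}, \cref{lem:coarsening-mixed-free}); your structural bound on the number of distinct columns per part is essentially the paper's ``blocks'' argument in different clothing. The batch-fusion coarsening you sketch is a reasonable variant of the paper's one-fusion-at-a-time process, and the interaction between column fusions and row fusions is correctly handled via \cref{lem:coarsening-mixed-value-decr}.

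The gap is in your termination and final counting. You claim to iterate until $|\mathcal C'|$ is ``sufficiently close to $n/(2\ell)$'' and then invoke $(n - \ell|\mathcal C'|)/2 \geq \lfloor n/s \rfloor$, which requires $|\mathcal C'| \lesssim n/\ell$. Your two Markov arguments do not establish this. Once many parts become large (size $> \ell$), they can no longer participate in any fusion without violating the part-size invariant; your process can stall with $|\mathcal C'|$ still well above $n/\ell$, at which point $(n - \ell|\mathcal C'|)/2$ is negative and your inequality chain collapses. Concretely: the size constraint rules out at most $L$ pairs (where $L$ is the number of large parts), and the mixed-value Markov rules out at most $|\mathcal C'|/4$ pairs, so a good fusion is guaranteed only while $L < |\mathcal C'|/4$. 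At the stalling point you get $L \geq |\mathcal C'|/4 \geq n/(8\ell) = n/s$, but no upper bound on $|\mathcal C'|$ itself.

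The paper resolves exactly this by tracking $L$ rather than $|\mathcal C'|$: it performs single fusions of small consecutive parts and stops as soon as $L \geq \lfloor n/s \rfloor$, then extracts \emph{one} pair of identical columns from each large part (which is all the structural bound delivers, since a large part has more than $\ell$ columns but at most $\ell$ distinct ones). Your summation $\sum_{|C'|>\ell}(|C'|-\ell)/2$ is a valid lower bound on the number of disjoint identical pairs, but you cannot push it above $n/s$ without the unjustified control on $|\mathcal C'|$. Replacing your stopping criterion and count with ``stop when $L \geq \lfloor n/s \rfloor$, one pair per large part'' closes the gap and aligns your argument with the paper's.
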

\begin{proof}
  We maintain a set $B$ of parts of size at least $2^{4c_d+1}+1$, and refer to these parts as \emph{large}.
  Note that a large part has more than $\ell$ elements, and every part of a neatly divided matrix of $M_{n,d}$ has at most $2 \ell$ elements.
  A part with at most $\ell$ elements is called a \emph{small} part.
  The general plan is to coarsen $(\mathcal R,\mathcal C)$ by successive invariant-preserving symmetric fusions (i.e., elementary coarsening) of pairs of small parts, until $|B| \geqslant \lfloor n/s \rfloor$.
  At that point, we will be able to find a pair of identical columns in each large part.
  The crux of the current lemma is to show that we can always perform a symmetric fusion and remain in the class $M_{n,d}$ (mainly, keep the mixed value below $4c_d$), even when a small fraction of the parts can no longer be merged (mainly, because they are large).

  As an important rule for the fusion, we never merge a large part with another part.
  We set $h := |\mathcal R|=|\mathcal C|$, and greedily find $z := h - 2n/s$ disjoint pairs of small consecutive parts in $\mathcal C$, say, $(C_1,C'_1), \ldots, (C_z,C'_z)$.
  As $n \leqslant 2 \ell h$, it holds that $z \geqslant n/(2 \ell) - 2n/s = 2n/s$.
  We call \emph{frozen} any part of $\mathcal C$ which is not among $(C_1,C'_1), \ldots, (C_z,C'_z)$ (because it is large or next to a large part). 
 
  Let $(\mathcal R,\mathcal C^*)$ be the division resulting from the fusion of the pair of consecutive parts $(C_i,C'_i)$ into say, $C^*_i$, for every $i \in [z]$.
  As $h \leqslant 2|\mathcal C^*|$, the average mixed value among the parts of $\mathcal C^*$ is, by \cref{lem:coarsening-amv}, at most $2c_d$.
  Since $z > 2n/s$ there are more parts $C^*_i$ than frozen parts.
  Hence the average mixed value among the non-frozen parts of $\mathcal C^*$ on $(\mathcal R,\mathcal C^*)$ is at most~$4c_d$.
  This means that there is a merged part $C^*_i$ whose mixed value on $(\mathcal R,\mathcal C^*)$, hence on $(\mathcal R,\mathcal C \cup \{C^*_i\} \setminus \{C_i,C'_i\})$, is at most $4c_d$.
  We perform this fusion.
  Every zone of $C^*_i$ which is mixed is filled with $r$ entries.
  This may come from the fusion of a mixed zone with any other zone, or two zones whose union has a $0,1$-corner. 
  Immediately afterwards we perform the fusion of the corresponding two parts in $\mathcal R$, and the similar update of the entries.
  If $C^*_i$ is large, we add it to $B$.

  Let us show that this elementary fusion (i.e., single symmetric fusion) is invariant-preserving.
  We already established that the mixed value of $C^*_i$ is at most $4c_d$.
  By \cref{lem:coarsening-mixed-value-decr}, the other mixed values have not increased, so they still do not exceed $4c_d$.
  The same applies after the symmetric fusion of two parts of $\mathcal R$.
  After that elementary coarsening, the matrix and the division are still symmetric.
  By \cref{lem:coarsening-mixed-free}, the new neatly divided matrix is still $d$-mixed free.
  Finally because we merged two small parts in $\mathcal C$ and two small parts in $\mathcal R$, still no part exceeds $2 \ell = 2^{4c_d+2}$.
  Hence the new neatly divided matrix is indeed still in $\mathcal M_{n,d}$.

  We proceed with these invariant-preserving elementary fusions until $B$ contains at least $\lfloor n/s \rfloor$ parts.
  Let $(M',(\mathcal R',\mathcal C')) \in \mathcal M_{n,d}$ be the neatly subdivided matrix that we eventually reach.
  We claim that there is a pair of identical column in each part $C$ of $B$.
  Since the mixed value of $C$ on $(\mathcal R',\mathcal C')$ is at most $4c_d$, we claim that the number of different columns is at most~$2^{4c_d+1}=\ell$.
  (This part of the proof follows the second paragraph of the proof of~\cite[Theorem 9]{twin-width1}.)
  Indeed let us consider maximal \emph{blocks} of consecutive (non-mixed) vertical zones $C \cap R_i$ not separating by a mixed cut.
  A block ends at a mixed cut or just before a mixed zone, so there are at most $4c_d+1$ such blocks.
  Observe that a block, seen as a single zone, is vertical (otherwise there would be a $0,1$-corner, hence a mixed cut).
  We also notice that outside of these blocks all the columns of $C$ are equal, since they traverse mixed zones (filled with $r$~entries) and horizontal zones.
  Finally there are only two columns within a block: all 0 entries or all 1 entries.
  Therefore there are at most $2^{4c_d+1}$ pairwise-distinct columns.
  
  By definition of a large part, $|C| \geqslant 2^{4c_d+1}+1$.
  Thus we find two equal columns in~$C$.
\end{proof}

Now it will become apparent why we are filling the mixed zones with $r$ entries.
This allows to simulate a contraction as a simple deletion of an equal row (and a symmetric equal column).
The following lemma is straightforward and states that this operation is invariant-preserving in $\mathcal M_{\cdot,d}$.

\begin{lemma}\label{lem:simple-deletion}
  Let $(M,(\mathcal R,\mathcal C)) \in \mathcal M_{n,d}$ be a neatly divided matrix with two equal rows $\rho, \rho'$ in a part $R \in \mathcal R$, hence symmetrically two equal columns $\gamma, \gamma'$ in a part $C \in \mathcal C$.
  Then removing row $\rho'$ and the symmetric column $\gamma'$ yields a neatly divided matrix of $\mathcal M_{n-1,d}$.
\end{lemma}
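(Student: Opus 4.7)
The plan is to verify each defining invariant of $\mathcal M_{n-1,d}$ in turn, using in each case the fact that the removed row $\rho'$ (resp.\ column $\gamma'$) equals a row $\rho$ (resp.\ column $\gamma$) that remains, and that $\rho,\rho'$ lie in a common row-part $R \in \mathcal R$ (resp.\ $\gamma, \gamma' \in C \in \mathcal C$). Let $(M^-,(\mathcal R^-,\mathcal C^-))$ denote the neatly divided matrix obtained after deletion, with $R$ and $C$ shrunk by one element each.

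\textbf{Symmetry and neatness.} Since we delete a row and its symmetric column, $M^-$ is symmetric and $(\mathcal R^-,\mathcal C^-)$ is a symmetric division. For every zone $Z^-$ of $(M^-,(\mathcal R^-,\mathcal C^-))$, the corresponding zone $Z$ of $(M,(\mathcal R,\mathcal C))$ is obtained from $Z^-$ by possibly re-adding one row and/or one column. If $Z$ is mixed (i.e.\ filled with $r$), so is $Z^-$ (the remaining rows and columns still lie in $R$ and $C$, which are nonempty after deletion); if $Z$ is horizontal (resp.\ vertical), then restricting to a subset of rows (resp.\ columns) preserves the property. Hence the neat structure is inherited.

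\textbf{Mixed value and part size.} The part sizes can only decrease, so the bound $2^{4c_d+2}$ persists. For the mixed value, a mixed zone of the original division corresponds to a mixed zone of the new one, and a mixed cut in $(M^-,(\mathcal R^-,\mathcal C^-))$ would be witnessed by a $0,1$-corner which, by reinserting $\rho'$ and $\gamma'$, yields a $0,1$-corner in the same position of $(M,(\mathcal R,\mathcal C))$ (entries of the kept rows and columns are unchanged). So no mixed zone or mixed cut is created, and the mixed value of every row- and column-part is at most its original value, hence at most $4c_d$.

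\textbf{Absence of $d$-mixed minor.} Suppose for contradiction that $(M^-,(\mathcal R^-,\mathcal C^-))$ has a $d$-mixed minor, given by a $(d,d)$-division $(\mathcal R^*,\mathcal C^*)$ coarsening $(\mathcal R^-,\mathcal C^-)$. Lift this to a $(d,d)$-division of $(M,(\mathcal R,\mathcal C))$ by placing $\rho'$ in the row-part of $\mathcal R^*$ that contains $\rho$ and $\gamma'$ in the column-part of $\mathcal C^*$ that contains $\gamma$; this is consistent with the coarsening of $(\mathcal R,\mathcal C)$. In each of the $d^2$ zones, any $0,1$-corner of $(M^-,\ldots)$ remains a $0,1$-corner in $(M,\ldots)$, and any mixed subzone remains mixed, so we obtain a $d$-mixed minor of $(M,(\mathcal R,\mathcal C))$, contradicting $(M,(\mathcal R,\mathcal C))\in \mathcal M_{n,d}$.

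These four points together give $(M^-,(\mathcal R^-,\mathcal C^-)) \in \mathcal M_{n-1,d}$. No step is an obstacle: the argument is only a bookkeeping check, but it relies crucially on our choice in earlier definitions to fill each mixed zone with $r$ entries and to forbid mixed cuts from bordering mixed zones, which is exactly what makes deletion of an equal row/column behave like the trigraph contraction of two twins.
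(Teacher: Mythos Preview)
Your proposal is correct and follows the same approach as the paper: you verify, invariant by invariant, that symmetry, neatness, the part-size bound, the mixed-value bound, and $d$-mixed freeness are all preserved under deleting a duplicated row together with its symmetric column. The paper's own proof is a terse three-sentence version of exactly this check; your write-up simply spells out the bookkeeping (and your remark that a $0,1$-corner of $M^-$ persists ``in the same position'' in $M$ is slightly loose---reinserting $\rho'$ or $\gamma'$ may separate the two rows or columns of the corner---but the surrounding $3\times 2$ block is then mixed and hence still contains a $0,1$-corner, so the conclusion stands).
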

\begin{proof}
  By design the new matrix and division are symmetric.
  The new neatly divided matrix remains $d$-mixed free.
  The part size can only decrease, as well as the mixed value.
\end{proof}

\subsection{Bounded twin-width classes have bounded versatile twin-width}

We can now use \cref{lem:coarsening-linear} to find linearly many pairs of vertices that can be contracted, and \cref{lem:simple-deletion} to recurse.
This will be our scheme to find a versatile tree of contractions.

\begin{lemma}\label{lem:versatile-tww}
  Every trigraph of twin-width~$d$ has versatile twin-width at most $4c_{2d+2}2^{4c_{2d+2}+2}$.  
\end{lemma}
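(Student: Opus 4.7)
The idea is to translate $G$ into its $d$-twin-ordered adjacency matrix, interpret it as a neatly divided matrix in $\mathcal M_{n,2d+2}$, and then iterate the two previous lemmas: apply \cref{lem:coarsening-linear} at each node to locate $\Theta(k)$ disjoint pairs of identical columns (and symmetric rows), then perform each such contraction via \cref{lem:simple-deletion}, recursing in every branch. The resulting tree is the desired versatile tree of $D$-contractions with $D:=4c_{2d+2}\, 2^{4c_{2d+2}+2}$.

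\textbf{Setup.} Pick a $d$-twin-ordering of $V(G)$ and let $M$ be the symmetric $n\times n$ matrix over $\{0,1,r\}$ encoding $G$ ($r$ for each red edge, $0/1$ for black). By \cref{thm:gridtheorem}, $M$ is $(2d+2)$-mixed free. Endow $M$ with the finest symmetric partition $(\mathcal R_0,\mathcal C_0)$: each row and each column is its own singleton part. Then every zone is a single entry, the mixed zones are exactly the $r$-entries, and no mixed cut can arise (the $2\times 1$ slab between two consecutive singleton parts is too small to contain a $0,1$-corner). Consequently the mixed value of each part equals its number of $r$-entries, at most $d\le 4c_{2d+2}$; the part size is $1\le 2^{4c_{2d+2}+2}$; hence $(M,(\mathcal R_0,\mathcal C_0))\in\mathcal M_{n,2d+2}$.

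\textbf{Recursion.} Set $s:=8\cdot 2^{4c_{2d+2}+1}$ (the constant from \cref{lem:coarsening-linear} with parameter $2d+2$). I prove by induction on $k$ that every $(N,(\mathcal R,\mathcal C))\in\mathcal M_{k,2d+2}$ admits a tree of $D$-contractions whose root has at least $\max(1,\lfloor k/s\rfloor)$ pairwise-distinct children, each of which corresponds to a matrix in $\mathcal M_{k-1,2d+2}$. Apply \cref{lem:coarsening-linear} to obtain an invariant-preserving coarsening $(N',(\mathcal R',\mathcal C'))\in\mathcal M_{k,2d+2}$ exhibiting $\lfloor k/s\rfloor$ disjoint pairs $(\gamma_i,\gamma'_i)$ of identical columns and their symmetric pairs of identical rows. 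By \cref{lem:ndm-red-number}, $N'$ has red number at most $D$, so it is the adjacency matrix of a trigraph $H$ of red degree at most $D$ on the same vertex set as $N$. Because $N'$ is obtained from $N$ merely by reddening entries, $H$ is obtained from the trigraph of $N$ only by reddening edges; any further contraction in $H$ is therefore a valid $D$-contraction from the trigraph of $N$. The pair $\gamma_i,\gamma'_i$ consists of true twins in $H$, so deleting $\gamma'_i$ and its symmetric row (which is exactly the effect of contracting the two twins) yields, by \cref{lem:simple-deletion}, a matrix in $\mathcal M_{k-1,2d+2}$. Apply the induction hypothesis to each of the $\lfloor k/s\rfloor$ children. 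Small cases $k<s$ with $\lfloor k/s\rfloor=0$ are handled by performing any single valid $D$-contraction, since the requirement $k/p<1$ is then automatic.

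\textbf{Distinctness and main obstacle.} The pairs $(\gamma_i,\gamma'_i)$ are disjoint, so each of the $\lfloor k/s\rfloor$ contractions removes a different vertex $\gamma'_i$; the children therefore differ on their vertex sets and are pairwise distinct as labeled trigraphs. Starting the induction from $(M,(\mathcal R_0,\mathcal C_0))$ produces a versatile tree of $D$-contractions of $G$ with $p:=s$, establishing versatile twin-width at most $D=4c_{2d+2}\, 2^{4c_{2d+2}+2}$. The main technical hurdle is interpretive rather than combinatorial: ensuring each matrix-level step (coarsen, then delete an identical pair) faithfully corresponds to a $D$-contraction in the underlying trigraph. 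This is handled jointly by \cref{lem:ndm-red-number} (bounding red degree by $D$ throughout), \cref{lem:simple-deletion} (preserving membership in $\mathcal M_{\cdot,2d+2}$ under the identical-row/column deletion), and the simple remark that coarsening only reddens entries---an over-approximation harmless for upper-bounding red degree.
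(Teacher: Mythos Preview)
Your proposal is correct and takes essentially the same approach as the paper: encode $G$ via its $d$-twin-ordered adjacency matrix as an element of $\mathcal M_{n,2d+2}$ under the finest division, then recursively apply \cref{lem:coarsening-linear} to produce $\lfloor k/s\rfloor$ disjoint identical-column pairs and \cref{lem:simple-deletion} to branch, with \cref{lem:ndm-red-number} certifying the $D$-bound on red degree throughout. You are somewhat more explicit than the paper about the base case $k<s$, the distinctness of the children, and why the matrix-level deletion of an identical column pair faithfully realizes a $D$-contraction in the underlying trigraph; the paper compresses all of this into a couple of sentences.

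One small remark: the paper's own proof begins ``Let $G$ be an $n$-vertex \emph{graph}'', so its initial matrix is over $\{0,1\}$ and the matrix-sense $(2d+2)$-mixed freeness from \cref{thm:gridtheorem} coincides with the neatly-divided sense required for membership in $\mathcal M_{n,2d+2}$. You instead start from a genuine trigraph, so your $M$ already carries $r$ entries; then the neatly-divided notion of $(2d+2)$-mixed minor (each cell contains an $r$ entry \emph{or} a $0,1$-corner) is formally stronger than what \cref{thm:gridtheorem} delivers, and your sentence ``By \cref{thm:gridtheorem}, $M$ is $(2d+2)$-mixed free'' does not literally establish the third bullet of $\mathcal M_{n,2d+2}$. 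This is a minor technicality (and moot for the graph case, which is all the paper uses downstream), but worth being aware of.
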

\begin{proof}
  Let $G$ be an $n$-vertex graph of twin-width $d$, and let $A := A_{\sigma}(G)$ be its adjacency matrix in an order $\sigma$ compatible with a $d$-sequence of $G$.
  By definition $A$ is $d$-twin-ordered, so by \cref{thm:gridtheorem} it is $2d+2$-mixed free.
  We set $d' := 2d+2$, $\ell := 2^{4c_{d'}+1}$, $s := 8 \ell$, and $D := 4 c_{d'} \cdot 2 \ell = 4c_{d'}2^{4c_{d'}+2}$.
  We initialize $(\mathcal R,\mathcal C)$ to the finest division of $A$, that is, $|\mathcal R|=|\mathcal C|=n$.
  Then $(A,(\mathcal R,\mathcal C))$ is a neatly divided matrix of $\mathcal M_{n,d'}$.
  Indeed the mixed value is 0.

  We apply \cref{lem:coarsening-linear} to $(A,(\mathcal R,\mathcal C))$ and find a coarsening $(A',(\mathcal R',\mathcal C')) \in \mathcal M_{n,d'}$ with $\lfloor n/s \rfloor$ disjoint pairs of identical columns $(\gamma_1,\gamma'_1), \ldots, (\gamma_{\lfloor n/s \rfloor},\gamma'_{\lfloor n/s \rfloor})$.
  These pairs of columns correspond to the pairs of vertices $(a_1,b_1), \ldots, (a_{\lfloor n/s \rfloor},b_{\lfloor n/s \rfloor})$.
  We now argue that, for every $i \in [\lfloor n/s \rfloor]$, the contraction of $a_i$ and $b_i$, resulting in $ab_i$, is $D$-correct.
  First let us justify that it is a $D$-contraction.
  The red degree of $ab_i$ is bounded by the number of red entries of $\gamma_i$ (since we filled the mixed zones with $r$ entries).
  So by \cref{lem:ndm-red-number}, it is bounded by $4c_{d'}2^{4c_{d'}+2} = D$.
  The red degree of the other vertices can increase by one, but again by \cref{lem:ndm-red-number}, it does not exceed $D$.
  The contraction is $D$-correct.
  Indeed applying repeatedly \cref{lem:coarsening-linear} followed by \cref{lem:simple-deletion} gives a sequence of $D$-contractions.
  This stops when ``$\lfloor n/s \rfloor = 0$'', that is $n < s < D$.
  At that point, finishing the contraction sequence in any way builds a complete $D$-sequence.
  Thus every element of $\mathcal M_{\cdot,d'}$ has twin-width $D$.

  Therefore we have found $\lfloor n/s \rfloor$ $D$-correct contractions on disjoint pairs of vertices.
  They constitute the children of the root labeled by $G$ in a versatile tree of $D$-contractions.
  For each $i \in [\lfloor n/s \rfloor]$, by \cref{lem:simple-deletion}, we build the subtree whose root is labeled by $G/a_i,b_i$ with the neatly divided matrix of $\mathcal M_{n-1,d'}$ obtained by removing to $(A',(\mathcal R',\mathcal C'))$ the column $\gamma'_i$ and its symmetric row.
  Thus $G$ has versatile twin-width $D$.
\end{proof}

\subsection{Finishing the proof}
\cref{lem:versatile-tww} is all we need to mimic Norine et al.'s proof for $K_t$-minor free graphs \cite{Norine06}, as described in \cref{subsec:minor-closed}.

\begin{theorem}\label{thm:tww-counting}
  There is a triple-exponential function $f: \mathbb N \rightarrow \mathbb N$ such that the number of $n$-vertex trigraphs with twin-width at most $d$ is at most $n!f(d)^n$.
\end{theorem}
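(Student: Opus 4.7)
The plan is to follow closely the Norine et al.\ scheme outlined in Section~\ref{subsec:minor-closed}, with Lemma~\ref{lem:versatile-tww} playing the role of their ``$d$-good vertex'' lemma. Set $D := 4c_{2d+2}\, 2^{4c_{2d+2}+2}$ and let $p$ be the linear-arity constant witnessing versatile twin-width~$D$. For any finite vertex set $S$, write $\mathcal V_S$ for the set of trigraphs on $S$ of versatile twin-width at most~$D$; by symmetry $|\mathcal V_S|$ depends only on~$|S|$, and we set $a_n := |\mathcal V_{[n]}|$. Lemma~\ref{lem:versatile-tww} ensures that every $n$-vertex trigraph of twin-width at most $d$ lies in $\mathcal V_{[n]}$, so bounding $a_n$ suffices.

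The core of the proof is a double count of the triples $(G, i, j)$ with $G \in \mathcal V_{[n]}$, $1 \leq i < j \leq n$, and $(i, j)$ a \emph{good pair}, meaning that the contraction of $i$ and $j$ in $G$ (keeping the label $i$ for the merged vertex $w$) appears as a child of the root in some versatile tree of $D$-contractions of $G$. Distinct good pairs produce distinct children, so the versatile property yields the lower bound $a_n \cdot n/p$ on the number of triples. For the upper bound I fix a pair $i<j$ and $H \in \mathcal V_{[n]\setminus\{j\}}$, and count the trigraphs $G \in \mathcal V_{[n]}$ such that $G/\{i,j\} = H$. To recover $G$ from $(H,i,j)$ one must choose: (a) the edge type between $i$ and $j$ in $G$ (three options: none, black, or red); and (b) for each vertex $x \neq i, j$, the pair $(xi, xj)$ of edge types in $G$. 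By the trigraph contraction rule, $x$ non-adjacent to $w$ in $H$ forces $(xi, xj) = (\text{none}, \text{none})$; $xw$ black in $H$ forces $(xi, xj) = (\text{black}, \text{black})$; and $xw$ red in $H$ leaves exactly seven of the nine elements of $\{\text{none}, \text{black}, \text{red}\}^2$ admissible, all but $(\text{black}, \text{black})$ and $(\text{none}, \text{none})$. Since $w$ has red degree at most $D$ in $H$, this gives at most $3 \cdot 7^D$ preimages per $(H,i,j)$.

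Summing over the $\binom{n}{2}$ pairs and over the $a_{n-1}$ choices of $H$ for each pair yields
\[
 a_n \cdot \frac{n}{p} \;\leq\; \binom{n}{2} \cdot 3 \cdot 7^D \cdot a_{n-1},
\]
hence $a_n \leq c\,(n-1)\,a_{n-1}$ with $c := 3p \cdot 7^D / 2$, and iterating from $a_1 = 1$ gives $a_n \leq n!\, c^n$. Since $c_{2d+2} = 2^{O(d)}$ forces $D = 2^{2^{O(d)}}$, we obtain $c = 2^{O(D)} = 2^{2^{2^{O(d)}}}$, a triple-exponential function of~$d$, which is the desired $f$.

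The main technical point to verify carefully is the reverse-contraction count, where one must check case by case, against the trigraph contraction rule recalled in Section~\ref{sec:prelim}, that each non-red-neighbour of $w$ in $H$ admits a unique preimage while each red neighbour admits exactly seven colour configurations. Everything else is a routine adaptation of Norine et al.'s double-counting, using the inheritance of versatile twin-width by subtrees rooted at any child (so that each contraction does land in $\mathcal V_{[n]\setminus\{j\}}$) to ensure the recurrence is well-typed.
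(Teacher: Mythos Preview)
Your proof is correct and follows essentially the same Norine-et-al.\ double-counting scheme as the paper, via Lemma~\ref{lem:versatile-tww}. Two points where your version is slightly more careful: you recurse on the class of trigraphs of \emph{versatile} twin-width at most~$D$ (making the induction manifestly well-typed, whereas the paper recurses on twin-width at most~$D$ and tacitly needs that this larger class also enjoys the linear-arity property), and you correctly count seven colour-pair options per red neighbour of~$w$ rather than the paper's five, which omits $(\text{black},\text{none})$ and $(\text{none},\text{black})$; neither point affects the triple-exponential bound.
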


\begin{proof}
  Let $G=(V=[n],E,R)$ be a trigraph with twin-width at most $d$.
  By \cref{lem:versatile-tww}, $G$ has versatile twin-width at most $D := 4c_{2d+1}2^{4c_{2d+2}+2}$, and admits a versatile tree of $D$-contractions.
  We now say that a vertex $u$ is \emph{$D$-good} if there is another vertex $v$ such that the contraction of $u$ and $v$ is $D$-correct.
  The versatile tree of $D$-contractions offers $\lfloor n/s \rfloor$ of $D$-good vertices, with $s := 8 \cdot 2^{4c_{2d+2}+1}= 2^{4(c_{2d+2}+1)}$.
  
  Let $\mathcal I_{n,D}$ be the class of trigraphs with twin-width at most $D$ on vertex set $[n]$ and $\mathcal L_{n,D}$ the subset of $\mathcal I_{n,D}$ consisting of trigraphs such that vertex $n$ is $D$-good.
  Since $\lfloor n/s \rfloor |\mathcal I_{n,D}| \leqslant n|\mathcal L_{n,D}|$, it holds that $|\mathcal I_{n,D}| \leqslant (s+1)|\mathcal L_{n,D}|$.

  Any graph of $\mathcal L_{n,D}$ admits an index $i \in [n-1]$ such that the contraction of vertex $n$ and vertex $i$ is $D$-correct.  
  Therefore any $H \in \mathcal L_{n,D}$ can be obtained from a $H' \in \mathcal I_{n-1,D}$ and $i \in [n-1]$ by splitting $i$ into $i$ and a new vertex $n$, and by linking them to the rest of $H$ observing the following rules.
  Every black edge between $i$ and $j$ in $H'$ forces two black edges $ij$ and $nj$ in $H$.
  Every red edge between $i$ and $j$ in $H'$ forces one of the five alternatives in $H$: a red edge between $i$ and $j$ and anything between $n$ and $j$ (3 alternatives: non-edge, black edge, red edge), a red edge between $n$ and $j$ and a black edge or a non-edge between $i$ and $j$ (2 alternatives).
  Additionally, there might be a non-edge, black edge, or red edge between $i$ and $n$.
  In total, the number of possible graphs $H$ is bounded by $3 \cdot 5^D$.
  Hence $|\mathcal I_{n,t}| \leqslant (s+1) \cdot 3 \cdot 5^D(n-1)|\mathcal I_{n-1,t}| \leqslant 3 \cdot 5^D(s+1)(n-1) \cdot (n-1)!f(d)^{n-1} \leqslant n!f(d)^n$, by setting $f(d) := 3 \cdot 5^D(s+1)=2^{2^{2^{O(d)}}}$.
\end{proof}

\subsection{Showing that a class has unbounded twin-width by counting}\label{sec:application-small}

We have shown that bounded twin-width classes are small.
This may be used to establish that the twin-width of some graphs is unbounded, namely if these graphs do \emph{not} form a small class.
It is not so easy to show that cubic graphs have unbounded twin-width by direct arguments.
\cref{thm:tww-counting} implies this fact by a simple counting argument.
A bipartite cubic graph is the disjoint union of three perfect matchings.
Each matching can be defined in $(n/2)!$ different ways, leading to at least $(n/2)!^3/3^{3n/2}=n^{3n/2+o(n)}$ graphs on vertex set $[n]$, well above $n!c^n = n^{n+o(n)}$.
Similarly, two arbitrary total orders on~$[n]$ can be defined in $(n!)^2$ ways, hence cannot have bounded twin-width.

We will now define a simple class of graphs capturing two arbitrary orders.
Then we will show that these graphs are representable by intervals and by unit disks, and conclude that interval graphs and unit disk graphs have unbounded twin-width.
Of course we did not expect these classes to have bounded twin-width\footnote{In \cite{twin-width1} we show that FO model checking is FPT on bounded twin-width graphs given with a $d$-sequence.}, since FO model checking is W[1]-hard on interval graphs~\cite{MarxS13}, while the mere \textsc{Maximum Independent Set} is W[1]-hard on unit disk graphs~\cite{Marx05}. 
We give a more satisfactory proof of that fact, not using the complexity-theoretic assumption FPT $\neq$ W[1].

We define the (non-hereditary) class $\mathcal B$ by its slices $\mathcal B_n$ of graphs on vertex set $[3n]$.
Each graph of $\mathcal B_n$ has its vertex set partitioned into three cliques of size $n$, say, $(A,B,C)$.
There is no edge between $A$ and $C$.
There are two arbitrary \emph{half-graphs} between $A$ and $B$, and between $B$ and $C$.
To build a half-graph between $A$ and $B$, we first choose an order for the vertices of $A$, say, $a_1, a_2, \ldots, a_n$, and an order for $B$, $b_1, b_2, \ldots, b_n$.
Then we put an edge between $a_i$ and $b_j$ if and only if $i < j$.
The half-graph between $B$ and $C$ is built similarly.
We choose another order for the vertices of $B$, say, $b'_1, b'_2, \ldots, b'_n$, and an order for $C$, $c_1, c_2, \ldots, c_n$.
Then we put an edge between $b'_i$ and $c_j$ if and only if $i < j$.
It is important that the choice of the orders $b_1, \ldots, b_n$ and $b'_1, \ldots, b'_n$ are independent.  

Let us estimate the number of graphs in $\mathcal B_n$, ignoring the single-exponential factors such as the one required to fix the partition $(A,B,C)$.
The half-graph between $A$ and $B$ is defined by choosing a total order for $A$ and a total order for $B$.
There are $n!^2$ such pairs of orders.
Defining the half-graph between $B$ and $C$ requires an additional total order for $B$ (recall that this second ordering of $B$ is independent of its order for the half-graph on $A \cup B$) and a total order for $C$.
Again this amounts to $n!^2$.
Overall there are more than $n!^4$ graphs in $\mathcal B_n$.
Thus $|\mathcal B_n|$ grows like $n^{4n + o(n)}$, while the number of bounded twin-width graphs with vertices labeled by $[3n]$ is only at most $(3n)! c^{3n} = n^{3n + o(n)}$.

One can describe an unlabeled graph of $\mathcal B_n$ with a single permutation $\sigma$ over $[n]$ such that $b'_{\sigma(i)}=b_i$.
\cref{fig:int-unit-disks} shows how to realize a graph of $\mathcal B_n$ as the intersection graph of intervals or as the intersection graph of unit disks, for any given permutation $\sigma$. 

\begin{figure}[h!]
  \centering
  \begin{tikzpicture}
    \def\t{0.2}
    \foreach \b/\e/\j/\c in {1.2/2/1/blue,1.2/2.3/2/blue,1.2/2.6/3/blue,1.2/2.9/4/blue,1.2/3.2/5/blue, 2.1/4.9/1/red,2.4/4/2/red,2.7/5.2/3/red,3/4.6/4/red,3.3/4.3/5/red, 5/6.5/1/black!30!green,4.1/6.5/2/black!30!green,5.3/6.5/3/black!30!green,4.7/6.5/4/black!30!green,4.4/6.5/5/black!30!green}{
      \draw[very thick,color=\c] (\b, \j * \t) -- (\e, \j * \t) ;
    }
    \node at (2,1.25) {$A$} ;
    \node at (3.75,1.25) {$B$} ;
    \node at (5.5,1.25) {$C$} ;

    \begin{scope}[xshift=10cm, yshift=-0.6cm]
      \def\z{0.1}
      \foreach \i in {1,...,5}{
        \draw[fill=blue,fill opacity=0.2] (0,\i * \z) circle (1cm) ;
      }
      \foreach \i in {1,...,5}{
        \draw[fill=black!30!green,fill opacity=0.2] (1.5+\i * \z,2) circle (1cm) ;
      }
      \def\ep{0.02}
      \foreach \j/\i in {1/4, 2/1, 3/5, 4/3, 5/2.3}{
        \draw[fill=red,fill opacity=0.2] (-0.5+\i * \z-\ep,2+\j * \z+\ep) circle (1cm) ;
      }
      \foreach \j/\i in {1/4, 2/1, 3/5, 4/3, 5/2.3}{
        \node[fill,circle,inner sep=-0.015cm] at (-0.5+\i * \z-\ep,2+\j * \z+\ep) {} ;
      }
      \node at (1.15,0.4) {$A$} ;
      \node at (-1.6,2.6) {$B$} ;
      \node at (1.8,3.2) {$C$} ;
    \end{scope}
  \end{tikzpicture}
  \caption{To the left, a representation of a graph of $\mathcal B_5$ by intervals.
    All intervals are obviously stacked up on a single real line, by projection on the $x$-axis.
    To the right, the same graph represented with unit disks.
    The permutation $\sigma$ associated to the graph is 41532.
    In both representations, one can read out the permutation matrix of 41532, where the first row is the bottom one, not the top one.
    For the intervals, this permutation matrix appears in the small gaps between the intervals of $B$ and $C$, while for the unit disks the matrix appears in the centers of the disks of $B$.}
  \label{fig:int-unit-disks}
\end{figure}
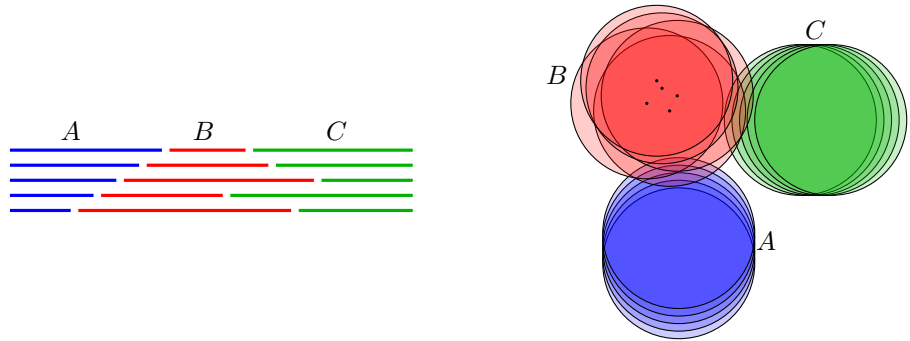

Unit $d$-dimensional ball intersection graphs with bounded clique number have bounded twin-width~\cite{twin-width1}.
One could wonder if $K_t$-free string graphs have bounded twin-width.
\cref{fig:unit-segment} shows that even triangle-free unit segment graphs have unbounded twin-width.
Indeed it shows how to represent any graph of $\mathcal B'_n$ with axis-parallel triangle-free unit segments, where $\mathcal B'_n$ is defined analogously to $\mathcal B_n$ but the sets $A, B, C$ induce now independent sets, and not cliques.
The same argument establishes that the growth of $\mathcal B'_n$ is not the one of a small class.

Let us say that a class $\mathcal C$ is \emph{$t$-bounded} if there is a function $f_{\mathcal C}$ such that every $K_t$-free graph $G$ of $\mathcal C$ have twin-width at most $f_{\mathcal C}(t)$. 
The previous remark shows that there are classes that are $\chi$-bounded but not $t$-bounded, since unit segment graphs are $\chi$-bounded \cite{Suk14}.
In a subsequent paper \cite{twin-width3}, we show that classes of bounded twin-width are $\chi$-bounded.
This implies in particular that every $t$-bounded class is $\chi$-bounded, hence the set of $t$-bounded classes is a proper subset of the set of $\chi$-bounded classes.

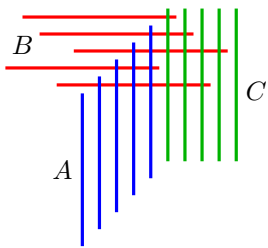
\begin{figure}[h!]
  \centering
  \begin{tikzpicture}[scale=0.75]
    \def\s{0.3}
    \def\h{2.7}
    \foreach \j/\i in {5/2,4/3,3/5,2/1,1/4}{
      \draw[very thick,red] (\s * \i,\s * \j) --++(-\h,0) ;
    }
    \foreach \i in {1,...,5}{
      \draw[very thick,black!30!green] (\s * \i + \s/2,5 * \s + \s/2) --++(0,-\h) ;
    }
    \foreach \i in {1,...,5}{
      \draw[very thick,blue] (\s * \i - 5 * \s + \s/2,\s * \i - \s/2) --++(0,-\h) ;
    }
    \node at (-1.4,-1.2) {$A$} ;
    \node at (-2.1,1) {$B$} ;
    \node at (2,0.2) {$C$} ;
  \end{tikzpicture}
  \caption{A representation of the graph of \cref{fig:int-unit-disks}, where the cliques induced by $A, B, C$ are replaced by independent sets, with axis-parallel triangle-free unit segments. The upside-down permutation matrix of $\sigma = 41532$ is still visible as the right endpoints of the red segments.}
  \label{fig:unit-segment}
\end{figure}

\section{Short parallel $d$-sequences and adjacency labeling schemes}\label{sec:adjacency}

Every $d$-contraction sequence of an $n$-vertex graph has length exactly $n-1$, since each of its steps contracts exactly one pair of vertices.
What if we allow \emph{parallel contractions} where disjoint pairs of vertices may be contracted in a single step?
In this section we adapt the results of \cref{sec:small} on versatile twin-width to prove the existence of parallel contraction sequences of logarithmic length.
We then use them to provide an $f(d) \log n$-adjacency labeling scheme for graphs of twin-width at most~$d$.

A \emph{parallel contraction} in a trigraph $G$ consists of the successive contractions of any number of pairs of vertices $\{a_1,b_1\},\dots,\{a_{\ell},b_{\ell}\}$, where $a_1,\dots,a_\ell,b_1,\dots,b_\ell$ are all distinct.
One can check that the resulting trigraph does not depend on the order in which the pairs are contracted.
Thus instead of the contraction of a \emph{sequence} of pairs, we may as well speak of the parallel contraction of a \emph{set} of disjoint pairs.
A \emph{sequence of parallel $d$-contractions}, or \emph{parallel $d$-sequence} of a trigraph~$G$ is a sequence of $d$-trigraphs $G_k, \ldots, G_1$ where $G_k = G$, $G_1 = K_1$ is the one-vertex (tri)graph, and~$G_{i-1}$ is obtained from~$G_i$ by a parallel contraction (of disjoint pairs of vertices).
It is noteworthy that the existence of a parallel contraction sequence is equivalent to the existence of a (regular) contraction sequence, up to a multiplicative factor in the red degree.

\begin{proposition}\label{lem:par-contr-equiv}
  Let $G$ be a trigraph, and $d \in \mathbb N$.
  \begin{compactitem}
    \item If~$G$ admits a $d$-sequence, then~$G$ also admits a parallel $d$-sequence.
    \item If~$G$ admits a parallel $d$-sequence, then~$G$ also admits a $(2d+1)$-sequence.
  \end{compactitem}
\end{proposition}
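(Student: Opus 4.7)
The first item is immediate from the definitions: a (regular) $d$-sequence is precisely a parallel $d$-sequence in which every parallel contraction involves a single pair. So the plan focuses on the converse direction.

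Given a parallel $d$-sequence $G = G_k, G_{k-1}, \ldots, G_1 = K_1$, I would serialize each parallel step $G_i \to G_{i-1}$ into individual contractions. Writing the step $G_i \to G_{i-1}$ as the parallel contraction of disjoint pairs $\{a_1, b_1\}, \ldots, \{a_\ell, b_\ell\}$ into new vertices $c_1, \ldots, c_\ell$, set $H_0 := G_i$ and, for $j \in [\ell]$, let $H_j$ be obtained from $H_{j-1}$ by the single contraction of $\{a_j, b_j\}$. By the order-independence of parallel contraction, $H_\ell = G_{i-1}$. Concatenating these serializations over all $i$ produces a bona fide contraction sequence from $G$ to $K_1$, and the task reduces to showing that every intermediate trigraph $H_j$ is a $(2d+1)$-trigraph.

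The key observation is that $H_j$ is a partial \emph{un-contraction} of $G_{i-1}$: starting from $G_{i-1}$ and splitting each $c_m$ with $m > j$ back into $a_m, b_m$. Define a projection $\pi \colon V(H_j) \to V(G_{i-1})$ which maps $a_m \mapsto c_m$ and $b_m \mapsto c_m$ for $m > j$, and is the identity otherwise. By a short case analysis on the types of endpoints (an original vertex untouched by the step, an already-formed $c_m$ with $m \leq j$, or a split vertex $a_m$ or $b_m$ with $m > j$), one checks that whenever $vw$ is a red edge of $H_j$ with $\pi(v) \neq \pi(w)$, the edge $\pi(v) \pi(w)$ is red in $G_{i-1}$. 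This follows because $c_j w'$ being non-red in $G_{i-1}$ forces \emph{all} four underlying adjacencies to agree (and be non-red), which in particular forces the single adjacency recorded in $H_j$ to be non-red. Since $\pi$ is at most $2$-to-$1$, each red neighbor of $\pi(v)$ in $G_{i-1}$ accounts for at most two red neighbors of $v$ in $H_j$; the only red neighbor of $v$ not captured by $\pi$ is a possible partner (if $v = a_m$ or $b_m$ for some $m > j$, the edge to its partner $b_m$ or $a_m$ is collapsed by $\pi$). This yields the bound $\deg_R(v) \leq 2 \deg_R(\pi(v)) + 1 \leq 2d+1$ in $H_j$.

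The main obstacle is precisely this case-analytic verification that $\pi$ sends red edges to red edges, together with the bookkeeping ensuring the ``$+1$ for the partner'' is counted at most once. Nothing beyond unfolding the definition of contraction is needed, which is why the loss factor is only the expected $2$.
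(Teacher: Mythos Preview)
Your proposal is correct and follows essentially the same route as the paper: sequentialize each parallel step, define the natural projection $\pi$ from the intermediate trigraph to the fully contracted one, observe that red edges map to red edges (or collapse), and use that $\pi$ is at most $2$-to-$1$. The paper packages the ``$+1$ for the partner'' slightly differently, working with the \emph{closed} red ball $B^R(x)=\{x\}\cup N^R(x)$ so that the partner (which maps to $\pi(x)$ itself) is absorbed automatically, yielding $|B^R_{H_j}(v)|\le 2|B^R_{G_{i-1}}(\pi(v))|\le 2(d+1)$ and hence red degree at most $2d+1$; but this is a cosmetic difference, not a different argument.
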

\begin{proof}
  The first item is clear since parallel contractions generalize mere contractions.

  We now show the second item.
  Let $G$ and $G'$ be $d$-trigraphs, with~$G'$ obtained from $G$ by the parallel contraction of $\{a_1,b_1\}, \ldots,\{a_\ell,b_\ell\}$.
  This parallel contraction can be sequentialized as $G_0, \ldots, G_\ell$ where $G_0 = G$, and~$G_i$ is obtained from~$G_{i-1}$ by contracting $\{a_i,b_i\}$ into $ab_i$, so that $G_{\ell} = G'$.
  We claim that~$G_i$ is a $(2d+1)$-trigraph for any $i \in [0,\ell]$.

  Consider $x \in V(G_i)$, and let $B^R_{G_i}(x) \subseteq V(G_i)$ be composed of~$x$ and all its red neighbors in~$G_i$.
  There is a natural embedding $e : V(G_i) \to V(G')$ through contraction, namely $e(a_j) = e(b_j) = ab_j$ for $i < j \le \ell$, and $e(x) = x$ for any other vertex.
  By definition of trigraph contractions, if~$xy$ is a red edge in~$G_i$, then either $e(x) = e(y)$, or $e(x)e(y)$ is a red edge in~$G'$.
  Hence $e\left(B^R_{G_i}(x)\right) \subseteq B^R_{G'}\left(e(x)\right)$.
  Furthermore, because $e$ corresponds to the contraction of disjoint pairs, any $X \subseteq V(G_i)$ satisfies $\card{X} \le 2 \card{e(X)}$.
  Finally, we have $\card{B^R_{G'}(e(x))} \le d+1$ because $G'$ is a $d$-trigraph.
  Combining these three claims, we get
  \[ \card{B^R_{G_i}(x)}
    \le 2 \card{e\left(B^R_{G_i}(x)\right)}
    \le 2 \card{B^R_{G'}\left(e(x)\right)}
    \le 2(d+1).
  \]
  Hence the red degree of $x$ in $G_i$ is $\card{B^R_{G_i}(x)}-1 \le 2d+1$.

  Thus if~$G$ and $G'$ are $d$-trigraphs and~$G'$ is obtained from~$G$ by a parallel contraction, then any sequentialization of the parallel contraction produces a sequence of~$(2d+1)$-trigraphs.
  Applying this result to every step of a parallel $d$-sequence yields a $2d+1$-sequence.
\end{proof}

Our main result on parallel contraction sequences is that one can always find a parallel sequence of logarithmic length, at the cost of an increase in the red degree.
This is a variant of the versatile twin-width theorem presented in \cref{sec:small} (\cref{lem:versatile-tww}).
\begin{lemma}\label{lem:short-d-sequence}
  Any $n$-vertex graph $G$ with twin-width at most~$d$ admits a parallel $D$-sequence of length $O(s \cdot \log n)$
  where $s,D$ are double exponential functions of $d$.
\end{lemma}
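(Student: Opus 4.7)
The plan is to iterate \cref{lem:coarsening-linear} and \cref{lem:simple-deletion}, observing that the $\Omega(n/s)$ disjoint pairs produced at each step can be contracted simultaneously rather than sequentially. Setting $d' := 2d+2$, $s := 2^{4(c_{d'}+1)}$, and $D := 4 c_{d'} \cdot 2^{4 c_{d'}+2}$ as in the proof of \cref{lem:versatile-tww}, I first order $V(G)$ along a $d$-sequence so that the adjacency matrix $A$ is $d'$-mixed free by \cref{thm:gridtheorem}; taking $(\mathcal R, \mathcal C)$ to be the finest division places $(A, (\mathcal R, \mathcal C))$ in $\mathcal M_{n, d'}$.

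One parallel step proceeds as follows. Apply \cref{lem:coarsening-linear} to produce an invariant-preserving coarsening containing $\ell := \lfloor n/s \rfloor$ pairwise disjoint pairs of identical columns and, by symmetry, of identical rows, corresponding to vertex-disjoint pairs $(a_1, b_1), \ldots, (a_\ell, b_\ell)$ in the current trigraph. Apply \cref{lem:simple-deletion} once per pair to remove the columns $b_1, \ldots, b_\ell$ and their symmetric rows; the result lies in $\mathcal M_{n-\ell, d'}$ and, by \cref{lem:ndm-red-number}, has red number at most $D$. I then argue that this sequential deletion in the matrix encodes exactly the \emph{parallel} contraction of the pairs in the trigraph: since each pair $(a_i, b_i)$ has identical columns and identical rows, the four entries $a_i a_j, a_i b_j, b_i a_j, b_i b_j$ in any other coordinate $j$ all agree, so the pairwise adjacency between merged vertices $ab_i, ab_j$ is well-defined, and the adjacency between $ab_i$ and any untouched vertex is the common value on column $a_i = b_i$. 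Thus the parallel contraction produces a $D$-trigraph equal to the one encoded by the reduced matrix, regardless of the order in which the individual deletions are carried out.

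Iterating, the number of vertices satisfies $n_{i+1} \leq n_i(1 - 1/s) + 1$, so $O(s \log n)$ parallel steps bring $n$ down to $O(s)$; the remaining $O(s)$ vertices can then be finished in $O(s)$ additional parallel steps using the $D$-sequence guaranteed by \cref{lem:versatile-tww} applied to the residual trigraph. This yields a parallel $D$-sequence of total length $O(s \log n)$, with both $s$ and $D$ double-exponential in $d$, matching the statement. The only delicate point, and the main obstacle, is justifying the equivalence between the sequential matrix deletions and the single parallel contraction in the trigraph; this is a routine unpacking of the contraction rules on true twins and uses only disjointness of the pairs and the identity-of-columns condition, so no additional twin-width machinery beyond the lemmas already established in \cref{sec:small} is required.
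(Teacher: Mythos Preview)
Your proposal is correct and takes essentially the same route as the paper: iterate \cref{lem:coarsening-linear} and \cref{lem:simple-deletion} to contract $\lfloor n/s\rfloor$ disjoint identical-column pairs per round in parallel while the matrix stays in $\mathcal M_{\cdot,d'}$, giving $O(s\log n)$ rounds and red degree bounded by $D$ via \cref{lem:ndm-red-number}. The only difference is cosmetic: for the endgame the paper simply observes that once fewer than $s<D$ vertices remain any contraction sequence is trivially a $D$-sequence, so your appeal to \cref{lem:versatile-tww} on the residual trigraph (whose twin-width is not known to be~$d$) is unnecessary and, as literally stated, not directly applicable---but the fix is the one-line observation just mentioned.
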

\begin{proof}
  The proof is very similar to the one of \cref{lem:versatile-tww}.
  Let~$G$ be an $n$-vertex graph with twin-width at most~$d$, and let~$A$ be a $d$-twin-ordered adjacency matrix of $G$.
  By \cref{thm:gridtheorem}, $A$ is $2d+2$-mixed free.
  We set $d' := 2d+2$, $\ell := 2^{4 c_{d'} + 1}$, $s := 8\ell$ and $D := 4 c_{d'} \cdot 2\ell$.
  Applying \cref{lem:coarsening-linear} to the finest division of~$A$ yields a coarsening $(A',(\mathcal{R}',\mathcal{C}')) \in \mathcal{M}_{n,d'}$ with $\lfloor n/s \rfloor$ disjoint pairs of identical columns $(\gamma_1,\gamma'_1), \ldots, (\gamma_{\lfloor n/s \rfloor},\gamma'_{\lfloor n/s \rfloor})$, corresponding to pairs of vertices $(a_1,b_1), \ldots, (a_{\lfloor n/s \rfloor},b_{\lfloor n/s \rfloor})$.

  The difference with \cref{lem:versatile-tww} is that we want to prove that
  the \emph{parallel} contraction of these pairs of vertices is $D$-correct.
  Nonetheless, the arguments remain the same.
  For any~$i$, since $\gamma_i = \gamma'_i$, the contraction of $(a_i,b_i)$ can be done by simply deleting $\gamma'_i$.
  This yields a neat division of the contracted graph.
  \Cref{lem:simple-deletion} readily generalizes to parallel contractions, hence this new division is still in $\mathcal{M}_{\cdot,d'}$.
  By \cref{lem:ndm-red-number}, the red number of this new division is at most $D$.
  This in turn bounds the red degree of the contracted graph (since mixed zones are filled with $r$ entries).
  Hence the parallel $D$-contraction preserves the membership to $\mathcal M_{\cdot,d'}$.
  Applying repeatedly \cref{lem:coarsening-linear,lem:simple-deletion} gives a sequence of parallel $D$-contractions until reaching a graph of size~$n$ with $n < s < D$, at which point the $D$-sequence can be completed in any way.

  This gives a parallel $D$-sequence $G = G_k, \ldots, G_1 = K_1$ for~$G$.
  Furthermore, for $s \le i \le k$, it satisfies $\card{V(G_{i-1})} \le \lceil (1 - \frac{1}{s})\card{V(G_i)}\rceil$.
  It follows that the length of the sequence is $O(s \cdot \log n)$.
\end{proof}

We now use these short parallel contraction sequences to design adjacency labeling schemes for bounded twin-width graphs.

\begin{lemma}\label{lem:parallel-adjacency-scheme}
  For any $d \in \mathbb N$, there exists a function $A : (\{0,1\}^*)^2 \to \{0,1,r_1,\ldots,r_d\}$ such that any trigraph~$G$ with a parallel $d$-sequence of length $k$ has a labeling $\ell : V(G) \to \{0,1\}^*$ satisfying the following:
  \begin{enumerate}
    \item \label{item:label-size} for any $x \in V(G)$,
      $\card{\ell(x)} = k \cdot \lceil 1+(2d+1)\log 3 \rceil$,
    \item \label{item:label-injective} $\ell$ is injective on $V(G)$, 
    \item \label{item:label-color} for any distinct $x,y \in V(G)$,
      \[
        \begin{cases}
          A(\ell(x),\ell(y)) = 0 & \text{if $xy \not\in E(G) \cup R(G)$ (i.e.,~$xy$ is a non-edge)} \\
          A(\ell(x),\ell(y)) = 1 & \text{if $xy \in E(G)$ (i.e.,~$xy$ is a black edge)} \\
          A(\ell(x),\ell(y)) \in \{r_1,\dots,r_d\} & \text{if $xy \in R(G)$ (i.e.,~$xy$ is a red edge).}
        \end{cases}
      \]
    \item \label{item:label-red-unique} for any distinct $x,y,z \in V(G)$,
      if $A(\ell(x),\ell(y)) = r_i$ and $A(\ell(x),\ell(z)) = r_j$, then $i \neq j$.
  \end{enumerate}
  Note that we do not require $A$ to be symmetric: one may have $A(w_1,w_2) = r_j$ and $A(w_2,w_1) = r_{j'}$ with $j \neq j'$.
  In particular, condition~\labelcref{item:label-red-unique} need not properly $d$-color the red edges.
\end{lemma}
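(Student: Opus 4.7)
I would proceed by induction on $k$, the length of the parallel $d$-sequence $G = H_1, H_2, \ldots, H_k = K_1$. The base case $k = 1$ is immediate. For the inductive step, note that $H_2$ has a parallel $d$-sequence of length $k-1$, so it comes with labels $\ell' : V(H_2) \to \{0,1\}^*$ of length $(k-1)\lceil 1+(2d+1)\log 3\rceil$ and a decoder $A'$. For each $u \in V(G)$, write $U^2 \in V(H_2)$ for the super-vertex containing $u$, and define $\ell(u) := c_1(u) \cdot \ell'(U^2)$, where $c_1(u)$ is a fresh chunk of $\lceil 1+(2d+1)\log 3\rceil$ bits. This chunk, together with $c_1(v)$, must let the decoder recover the type of $uv$ in $H_1$ in the two cases where $A'$ is not enough: when $A'$ indicates $U^2 = V^2$, and when $A'$ returns a red edge.

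To build $c_1(u)$, I would first sequentialize the parallel contraction $H_1 \to H_2$ into a sequence of ordinary contractions via the argument of \cref{lem:par-contr-equiv}, so that every intermediate trigraph has red degree at most $2d+1$. Fixing this sequentialization together with a proper edge coloring of the red graph of each intermediate trigraph using colors from $[2d+1]$, each red edge receives a canonical slot in $[2d+1]$. The $(2d+1)$ ternary values of $c_1(u)$ record, for each slot $i$, the type (non-edge, black, red) in $H_1$ of the edge between $u$ and the $i$-th red neighbor of $u$'s intermediate super-vertex. The extra bit records whether $u$ is merged at step $1 \to 2$.

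The decoder $A(\ell(x), \ell(y))$ strips $c_1(x), c_1(y)$ off the front and applies $A'$ to the tails. If the tails are equal (so $X^2 = Y^2$ by the inductive injectivity of $\ell'$), then $x$ and $y$ were merged at step $1$ and $c_1(x)$ pinpoints the slot of the merging partner together with the type of $xy$. Otherwise $A'$ returns the type of $X^2 Y^2$ in $H_2$: a non-edge or black answer lifts unchanged to $H_1$, because contraction preserves full non-adjacency and full adjacency; a red answer means $X^2 Y^2$ is mixed in $H_2$, and the slot of $c_1(x)$ corresponding to $y$ records the actual type of $xy$ in $H_1$. Condition~\labelcref{item:label-red-unique} comes from the properness of the edge coloring, so within a single chunk distinct neighbors cannot share a slot.

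The main obstacle is to guarantee that the slot representing $y$ in $c_1(x)$ is recoverable from $\ell(y)$ alone, which forces the edge coloring and the sequentialization to be globally fixed rather than chosen independently for each vertex's label. A secondary subtlety is injectivity when $U^2 = V^2$: the chunks $c_1(u)$ and $c_1(v)$ of two vertices merged at step $1$ must differ, which I would arrange by placing the merging partner edge in a designated slot and using its ternary value together with the extra bit to distinguish the two sides of the contracted pair.
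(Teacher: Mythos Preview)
Your inductive skeleton is right, and the bit budget $\lceil 1 + (2d+1)\log 3 \rceil$ per chunk matches the paper. But the two obstacles you flag at the end are real and your proposal does not resolve them; the paper's proof dissolves both with one idea you are missing.

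The gap is the slot mechanism. You propose to sequentialize the parallel step $H_1 \to H_2$ and properly edge-colour the red graphs of the intermediate trigraphs, so that each relevant neighbour gets a slot in $[2d+1]$. But, as you say yourself, the decoder has no way to recover this slot from $\ell(x),\ell(y)$: the colouring lives outside the labels, and ``fixing it globally'' does not help because $A$ must be a single function of two bitstrings, independent of the particular sequence. Your injectivity fix also fails as written: the bit ``is $u$ merged?'' is identical for both partners of a contracted pair, and the ternary value on the partner edge is symmetric in $u$ and $v$, so $c_1(u)=c_1(v)$ is not ruled out.

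The paper's key idea is to let the \emph{inductive hypothesis itself} supply the slots. Condition~\labelcref{item:label-red-unique} for $\ell'$ says precisely that for each $X^2\in V(H_2)$ the values $A'(\ell'(X^2),\ell'(\cdot))$ assign pairwise distinct labels $r_1,\dots,r_d$ to the red neighbours of $X^2$; so define $n_{r_j}(X^2)$ to be the unique red neighbour getting $r_j$. Now the chunk for $x$ stores one bit $c\in\{0,1\}$ saying which preimage of $X^2$ it is, one ternary value for the edge between $x$ and its contraction partner, and for each $j\in[d]$ and $c'\in\{0,1\}$ the colour in $H_1$ of the edge from $x$ to the $c'$-th preimage of $n_{r_j}(X^2)$: that is $2d+1$ ternary values and one bit, exactly your budget. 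The decoder recovers the slot $(j,c')$ for free---$j$ is the recursive output $A'(\ell'(X^2),\ell'(Y^2))=r_j$, and $c'$ is read off the leading bit of $\ell(y)$. No sequentialization, no external colouring. Injectivity is immediate from the bit $c$ together with injectivity of $\ell'$, and condition~\labelcref{item:label-red-unique} for $\ell$ is re-established by enumerating, in a fixed order, the red entries among the $2d+1$ ternary values of $c_1(x)$.
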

\begin{proof}
  We proceed by induction on the length of the parallel $d$-sequence.
  The base case $G = K_1$ is trivial, with the unique label being empty.

  Let $G$ be a trigraph, and let~$G'$ be obtained from~$G$ by parallel contraction of the pairs $\{a_1,b_1\},\ldots,\{a_h,b_h\}$.
  By induction, let us consider a labeling $\ell' : V(G') \to \{0,1\}^*$ for~$G'$ satisfying conditions~\labelcref{item:label-injective,item:label-color,item:label-red-unique}.
  Before defining a labeling on $G$, let us introduce some notations.
  For $i \in [h]$, let $ab_i \in V(G')$ be the vertex obtained from the contraction of $\{a_i,b_i\}$.
  We define two partial functions $p_0, p_1 : V(G') \to V(G)$, corresponding to the predecessors with respect to contraction:
  \[
    \left\{
      \begin{array}{@{}l@{\text{ and }}l@{\quad}l@{}}
          p_0(ab_i) = a_i & p_1(ab_i) = b_i & \text{for $1 \le i \le h$} \\
          p_0(x) = x & \text{$p_1(x)$ is undefined} & \text{for any other $x \in V(G')$.}
      \end{array}
    \right.
  \]
  Note that any $y \in V(G)$ can be uniquely written as $p_c(x)$ for some $x \in V(G')$ and $c \in \{0,1\}$.
  Next, for $x \in V(G')$ and $j \in [d]$, let us define the $j$-th red neighbor of~$x$, denoted by $n_{r_j}(x)$.
  By condition~\labelcref{item:label-red-unique}, there can be at most one $y \in V(G') \setminus \{x\}$ such that $A(\ell'(x),\ell'(y)) = r_j$.
  We define $n_{r_j}(x)$ to be this unique~$y$ if it exists, and to be undefined otherwise.

  Finally for a trigraph~$H$ and any two distinct vertices $x, y \in V(H)$, the color of $xy$ is
  \[ col_H(x,y) =
    \begin{cases}
      1 & \text{if $xy \in E(H)$} \\
      r & \text{if $xy \in R(H)$} \\
      0 & \text{otherwise.}
    \end{cases}
  \]

  We can now define the labeling $\ell : V(G) \to \{0,1\}^*$.
  Given $y \in V(G)$, let $c \in \{0,1\}$, $x \in V(G')$ be such that $y = p_c(x)$.
  Then, $\ell(y)$ consists of the following fields:
  \begin{enumerate}
    \item \label{item:field-recurse} $\ell'(x)$
    \item \label{item:field-id} $c$
    \item \label{item:field-twin-col} $col_G(p_0(x),p_1(x))$
    \item \label{item:field-red-col} For every $j \in [d]$ and $c' \in \{0,1\}$, $col_G(y, p_{c'}(n_{r_j}(x)))$.
  \end{enumerate}

  The fields~\labelcref{item:field-twin-col,item:field-red-col} call partial functions (namely $p_0$, $p_1$, and~$n_{r_j}$).
  If any of these functions is undefined on the relevant values, we use the convention to set the color to~$0$ ($1$ would also be acceptable, but $r$ must be avoided).

  Let us now explain how $A$ is defined to inductively decode these labels.
  Note first that fields~\labelcref{item:field-id,item:field-twin-col,item:field-red-col} have fixed size.
  Thus distinguishing the different fields is not an issue.
  Let $y_1, y_2 \in V(G)$ be two distinct vertices, with $y_1 = p_{c_1}(x_1)$ and $y_2 = p_{c_2}(x_2)$.
  As a first step, we want to retrieve $col_G(y_1,y_2)$ from $\ell(y_1),\ell(y_2)$.
  There are several cases.
  \begin{itemize}
    \item If $x_1 = x_2$, i.e., $y_1$ and $y_2$ are contracted together, then $col_G(y_1,y_2)$ is given by field~\labelcref{item:field-twin-col}.
      Furthermore, we are able to test if $x_1 = x_2$ using~$\ell'(x_1)$ and~$\ell'(x_2)$ (field~\labelcref{item:field-recurse}), since $\ell'$ is injective (condition~\labelcref{item:label-injective}).
    \item Otherwise, if $col_{G'}(x_1,x_2) \in \{0,1\}$, then necessarily $col_G(y_1,y_2) = col_{G'}(x_1,x_2)$ by definition of a trigraph contraction.
      Furthermore we can compute $col_{G'}(x_1,x_2)$ from $\ell'(x_1),\ell'(x_2)$
      since~$\ell'$ correctly encodes the colors in~$G'$ (condition~\labelcref{item:label-color}).
    \item Otherwise, we have $col_{G'}(x_1,x_2) = r$.
      Then let $j \in [d]$ be such that $A(\ell'(x_1),\ell'(x_2)) = r_j$.
      By definition of~$n_{r_j}$, we have $x_2 = n_{r_j}(x_1)$, hence
      \[ col(y_1,y_2) = col(p_{c_1}(x_1), p_{c_2}(n_{r_j}(x_1))) \]
      is given in field~\labelcref{item:field-red-col} of $\ell(y_1)$.
      The position of this information in field~\labelcref{item:field-red-col} is given by~$j$ (obtained from $\ell'(x_1),\ell'(x_2)$ via $A$) and~$c_2$ (field~\labelcref{item:field-id} in $\ell(y_2)$).
  \end{itemize}

  As a second step, when $col(y_1,y_2) = r$, we need to define the \emph{numbered} red label~$r_j$ such that $A(\ell(y_1),\ell(y_2)) = r_j$, with~$j$ unique among the red edges incident to~$y_1$.
  Here we use the fact that all the red edges incident to~$y_1$ appear in fields~\labelcref{item:field-twin-col,item:field-red-col} of~$\ell(y_1)$.
  Thus, given~$\ell(y_1)$, we can enumerate the red edges incident to~$y_1$, and we fix the numbers on red labels according to this enumeration order.
  Since~$G$ has red degree at most $d$ by hypothesis, labels $r_1, \ldots, r_d$ are sufficient (here, it is important that the color of ``undefined'' fields avoids~$r$).
  Therefore conditions~\labelcref{item:label-color} and \labelcref{item:label-red-unique} are maintained.

  The equality $\ell(y_1)=\ell(y_2)$ implies that $c_1 = c_2$, since their field~\labelcref{item:field-id} should match, and that $x_1=x_2$, as $\ell'$ is injective.
  Thus it implies that $y_1 = p_{c_1}(x_1) = p_{c_2}(x_2) = y_2$, hence $\ell$ is injective.
  Finally, let us analyze the size of the labels.
  Field~\labelcref{item:field-id} uses 1 bit.
  Fields~\labelcref{item:field-twin-col,item:field-red-col} contain $2d+1$ colors, with $3$ possible values.
  This can be encoded on $\lceil (2d+1) \log 3 \rceil$ bits.
  Thus, the label sizes for $\ell$ increase by exactly $\lceil 1+(2d+1)\log 3 \rceil$ compared to $\ell'$, and condition~\labelcref{item:label-size} is preserved.
\end{proof}

From \cref{lem:short-d-sequence,lem:parallel-adjacency-scheme}, we immediately conclude the following. 
\begin{theorem}\label{thm:logn-adjacency-scheme}
  The class of graphs with twin-width at most~$d$ admits a $g(d) \log n$-bits adjacency labeling scheme, where $n$ is the number of vertices and $g$ is a double-exponential function.
\end{theorem}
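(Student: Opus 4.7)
The plan is to simply compose the two preceding lemmas. Given an $n$-vertex graph $G$ with $\tww(G) \le d$, I would first invoke \cref{lem:short-d-sequence} to obtain a parallel $D$-sequence for $G$ of length $k = O(s \log n)$, where both $s$ and $D$ are double-exponential functions of $d$. Then I would feed this parallel $D$-sequence into \cref{lem:parallel-adjacency-scheme}, which produces a labeling $\ell : V(G) \to \{0,1\}^*$ and a decoding function $A$ satisfying conditions (1)--(4) of that lemma; in particular the labels have size exactly $k \cdot \lceil 1 + (2D+1) \log 3 \rceil$.

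Multiplying the two bounds yields a label size of
\[
  O(s \log n) \cdot \lceil 1 + (2D+1) \log 3 \rceil \;=\; g(d) \log n
\]
for a function $g$ that is the product of $s$ and a linear function of $D$, both of which are double-exponential in $d$; thus $g$ is itself double-exponential in $d$, as claimed.

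The adjacency oracle is simply the decoding function $A$ from \cref{lem:parallel-adjacency-scheme}, restricted to its $\{0,1\}$ output (since $G$ itself is a graph, not a trigraph, the labels of $G$ at the root of the contraction process never produce a red answer): given two labels $\ell(u), \ell(v)$, we declare $uv \in E(G)$ iff $A(\ell(u),\ell(v)) = 1$, which is correct by condition~(3) of \cref{lem:parallel-adjacency-scheme}. Injectivity of $\ell$, required by the definition of labeling scheme, is guaranteed by condition~(2). There is no real obstacle in this composition; the only minor point to verify is that applying \cref{lem:parallel-adjacency-scheme} with parameter $D$ (rather than $d$) in the red-degree role does not break the size estimate, but indeed the formula $k \cdot \lceil 1 + (2D+1) \log 3 \rceil$ remains linear in $\log n$ since $k = O(s \log n)$ and $D$ is an absolute constant once $d$ is fixed. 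All the hard work was done in establishing the versatile/parallel contraction machinery of \cref{lem:short-d-sequence} and the inductive labeling of \cref{lem:parallel-adjacency-scheme}; here one only needs to chain them.
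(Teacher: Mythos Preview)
Your proposal is correct and matches the paper's own argument exactly: the paper simply states that the theorem follows immediately from \cref{lem:short-d-sequence} and \cref{lem:parallel-adjacency-scheme}, which is precisely the composition you spell out. Your additional remarks (applying \cref{lem:parallel-adjacency-scheme} with parameter~$D$, reading off adjacency via condition~(3) since $G$ has no red edges, and injectivity via condition~(2)) are the natural details one fills in, and they are all sound.
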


The labeling scheme can in particular be used to encode an $n$-vertex graph of twin-width at most~$d$ on $2^{2^{\gamma(d+1)}} n \log n$ bits, for some constant $\gamma$.
This offers a significant compression over adjacency lists, since cliques for instance have twin-width~0.
Now if the aim is only to globally compress the whole graph, and not to balance the lengths of the vertex labels, there is a simpler encoding with a better dependency in $d$.
It basically consists of ``reading'' the $d$-sequence $G=G_n, \ldots, G_1=K_1$ backwards.
The encoding of $K_1$ is an identifier on $\lceil \log n \rceil$ bits.
Then to go from $G_i$ to $G_{i+1}$, we write $3 \lceil \log n \rceil + 2$ bits corresponding to the ``split vertex'' $w$, in which two vertices $u, v$ vertex $w$ is split, and whether there is a non-edge, a black edge, or a red edge between $u$ and $v$, followed by $d(\lceil \log n \rceil + 4)$ bits corresponding to the edges between $u, v$ and the at most~$d$ vertices adjacent to $w$ in the red graph of~$G_i$.
The latter part is carried by writing down the identifier of each red neighbor $z$ of~$w$ followed by two pairs of bits encoding if there is a non-edge, a black edge, or a red edge between $u$ and $z$, and between $v$ and $z$. 
This permits to reconstruct $G$, and store it on only $(d+3) n \lceil \log n \rceil + (4d+2)n$ bits.

\section{Expanders with bounded twin-width}\label{sec:expanders}

A \emph{2-lift} of a graph $G$ is a graph $G'$ on twice as many vertices, built by duplicating every vertex $v \in V(G)$ into two copies, say, $v_1$ and $v_2$, and for every edge $vw \in E(G)$, adding to $E(G')$ either the edges $v_1w_1$ and $v_2w_2$ (parallel) or the edges $v_1w_2$ and $v_2w_1$ (crossing).
The choice, for each edge of $G$, of having two \emph{parallel} edges or two \emph{crossing} edges is called the \emph{signing} of the edges.
See~\cref{fig:lift} for an example of a 2-lift.
Observe that $G$ has $2^{|E(G)|}$ possible 2-lifts or signings.
For instance, the all-parallel signing gives two disjoint copies of $G$, while the all-crossing signing gives the bipartite adjacency graph of $G$.

\begin{figure}[h!]
  \centering
  \begin{tikzpicture}
    \def\s{1.5}
    \foreach \i/\j/\l in {0/0/a,0/1/b,1/0/c,1/1/d}{
      \node[draw,circle] (\l) at (\i * \s,\j * \s) {} ;
    }
    \foreach \i/\j in {a/b,a/c,a/d,b/c,b/d,c/d}{
      \draw (\i) -- (\j) ;
    }

    \draw[thick,-stealth] (1.75 * \s,\s /2) --++(1,0) ;

    \begin{scope}[xshift=5cm]
     \foreach \i/\j/\l in {-0.2/0/a1,0/-0.2/a2,-0.2/1/b1,0/1.2/b2,1/1.2/c1,1.2/1/c2,1.2/0/d1,1/-0.2/d2}{
       \node[draw,circle] (\l) at (\i * \s,\j * \s) {} ;
    }
    \foreach \i/\j in {a1/b1,a2/b2, b1/c1,b2/c2, a1/c1,a2/c2, d1/a2,d2/a1, d1/b1,d2/b2, d1/c2,d2/c1}{
      \draw (\i) -- (\j) ;
    }
    \end{scope}
  \end{tikzpicture}
  \caption{An example of a 2-lift of $K_4$.}
  \label{fig:lift}
\end{figure}
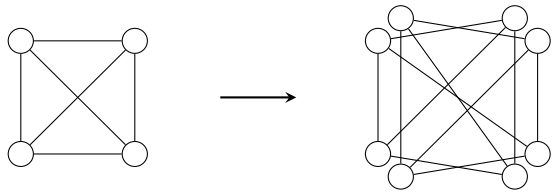

For $n$ a power of 2, performing a sequence of $\log n - 2$ randomly-signed 2-lifts starting on $K_4$ yields an $n$-vertex expander almost surely~\cite{Bilu06}.
Observe that the obtained graph is necessary cubic since the 2-lift operation preserves the degree.
Bilu and Linial~\cite{Bilu06} even exhibit a deterministic polytime procedure to actually find the signings leading from $K_4$ to a cubic expander.
The next result shows that cubic expanders can have bounded twin-width.

\begin{lemma}\label{lem:BiluLinial}
Every graph obtained from $K_4$ by performing a sequence of 2-lifts has twin-width at most~6.
\end{lemma}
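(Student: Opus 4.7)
The plan is a direct construction of a $6$-contraction sequence, exploiting the combinatorial structure of iterated $2$-lifts. Label each vertex of $G = K_4^{(k)}$ by a pair $(v,s)$ with $v \in V(K_4)$ and $s \in \{0,1\}^k$, where $v$ records the original $K_4$-vertex and $s$ encodes the choices made along the sequence of lifts (each successive lift appends one bit to~$s$). For distinct $v,w \in V(K_4)$, the edges between the groups $\{v\} \times \{0,1\}^k$ and $\{w\} \times \{0,1\}^k$ form a perfect matching, encoded by a map $\phi_{vw} \colon \{0,1\}^k \to \{0,1\}^k$ such that $(v,s)$ is adjacent to $(w,\phi_{vw}(s))$. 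Unfolding the recursive definition of $2$-lifts, one verifies a key \emph{prefix-locality} property: the $i$-th bit of $\phi_{vw}(s)$ depends only on the first $i$ bits of~$s$. Equivalently, whenever $s$ and $s'$ agree on their first $i$ bits, so do $\phi_{vw}(s)$ and~$\phi_{vw}(s')$.

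The contraction sequence proceeds in $k+1$ phases. For $1 \leq j \leq k$, phase~$j$ starts from a trigraph structurally isomorphic to $K_4^{(k-j+1)}$; this trigraph is the plain graph~$G$ for $j=1$, and has all edges red for $j \geq 2$. Phase~$j$ contracts, one at a time, each sibling pair $\{(v,t0),(v,t1)\}$ for $v \in V(K_4)$ and $t \in \{0,1\}^{k-j}$, processing the four groups of $V(K_4) = \{A,B,C,D\}$ in the fixed order $A,B,C,D$ (all pairs of one group before moving on). After the four groups are done, the current trigraph is structurally $K_4^{(k-j)}$ with all edges red. The final phase $k+1$ then contracts the remaining red $K_4$ to $K_1$ in the obvious way, staying below red-degree $3$.

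To bound the red degree by $6$, the crucial consequence of prefix-locality is that a sibling pair in $v$'s tree maps under $\phi_{vw}$ to a sibling pair in $w$'s tree. Hence once group $w$ has been processed in the current phase, its sibling pair has already been merged, and the contraction of the $v$-pair contributes only $1$ red edge towards $w$ instead of $2$. When $A$ is processed (first in the fixed order) in any phase, its merged pair accumulates exactly $2+2+2 = 6$ red edges towards the still-unprocessed groups $B,C,D$, meeting the budget tightly. For~$B$, $C$, $D$, the corresponding counts drop to $1+2+2=5$, $1+1+2=4$, and $1+1+1=3$, respectively. A symmetric accounting shows that earlier-merged vertices only lose red neighbors as subsequent pairs get contracted, so they too stay within the bound $6$ throughout each phase.

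The main obstacle is establishing the prefix-locality property rigorously: the signing of an edge at layer~$i$ depends on the endpoints of that edge in the intermediate graph $K_4^{(i-1)}$, which are themselves determined by the signings of earlier layers, so one has to unfold the recursive construction of iterated $2$-lifts. Once this combinatorial property is in hand, verifying that the proposed sequence is a $6$-contraction sequence reduces to the routine but careful bookkeeping sketched above.
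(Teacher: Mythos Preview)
Your proof is correct and follows essentially the same approach as the paper: undo each $2$-lift by contracting the duplicate pairs (your ``sibling pairs'' in the last coordinate), observing that merging two degree-$3$ vertices yields degree at most~$6$, and iterate down to~$K_4$. The paper's version is more concise---it works one $2$-lift at a time and simply bounds total degree rather than tracking red degree per group---so your prefix-locality property and the $6,5,4,3$ bookkeeping are accurate elaborations but not strictly needed for the bound.
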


\begin{proof}
  We show that if $G$ is a cubic graph and $G'$ is a 2-lift of $G$, then $G$ can be obtained from $G'$ by a sequence of contractions in which the maximum degree never goes above 6.
  It is enough to conclude since $K_4$ is obviously 6-collapsible, and we can assume that the cubic trigraph we start from has all its edges red.

  Let $v^1,v^2, \ldots, v^n$ be the vertices of $G$, and $v^i_1, v^i_2$ be the duplicates of $v^i$ in $G'$.
  For each $i$ running from 1 to $n$, we contract $v^i_1$ and $v^i_2$.
  By definition of a 2-lift, after these $n$ contractions, the graph obtained is $G$.
  We contracted disjoint pairs of vertices of degree 3, so we could not create vertices of degree more than 6. 
\end{proof}

This surprising result teaches us the following lessons.
First, bounded twin-width appears more general than expected.
Also, by \cref{thm:transduction}, not only there are some expanders with bounded twin-width but there are some FO transductions of expanders with that property.  
Second, it tells us that even among bounded-degree graphs, bounded twin-width is a novel class.
Indeed bounded twin-width could have coincided with polynomial expansion within the class of bounded-degree graphs.
Now we know that it is not the case.
There are cubic graphs with bounded twin-width but no strongly sublinear (i.e., of size at most $n^{1-\varepsilon}$ for some $\varepsilon > 0$) balanced separators.
Expanders have treewidth $\Theta(n)$ and therefore no strongly sublinear balanced separators, the latter being equivalent to polynomial expansion \cite{Plotkin94,Dvorak16}.

The third lesson is that designing good approximation algorithms in bounded twin-width classes promises to be challenging.
It is perfectly fitting and propitious to ask for other algorithmic applications of twin-width.
Before we understand enough to approximate in general bounded twin-width classes, an interesting first step is to approximate optimization problems such as \textsc{Maximum Independent Set} (\textsc{MIS} for short) on graphs with bounded degree and twin-width.
\textsc{MIS} is APX-hard in general cubic graphs, so we may ask for a polynomial-time approximation scheme (PTAS) when we add the condition of bounded twin-width.
A natural approach for that would be to show that these graphs have strongly sublinear balanced separators (this is how PTASes are obtained for planar, $H$-minor free graphs, etc.).
This approach is now ruled out.
Therefore, if \textsc{MIS} indeed admits a PTAS in bounded twin-width cubic graphs, this cannot be directly based on small balanced separators.
The simplest toy-problem in that direction is to explore PTASes for iterated 2-lifts of $K_4$.

\section{Subdivisions of cliques}\label{sec:subd-cliques}


For any non-negative integer $k$, the \emph{$k$-subdivision} of a graph~$G$, denoted by $G^{(k)}$, is the graph obtained by subdividing every edge of $G$ exactly~$k$ times.
For any $f : \mathbb N \to \mathbb N$, let $\mathcal G_f$ be the class formed by the $f(|V(G)|)$-subdivision of every graph $G$.

\begin{theorem}\label{thm:clique-subdivision}
  For every positive and non-decreasing $f$, $\mathcal G_f$ has bounded twin-width if and only if $f(n) = \Omega(\log n)$.
\end{theorem}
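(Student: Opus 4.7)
The plan is to prove the two implications separately.

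\smallskip
\emph{Necessity ($\tww(\mathcal{G}_f)$ bounded implies $f(n) = \Omega(\log n)$).} I would invoke \cref{thm:main}: bounded twin-width forces $\mathcal{G}_f$ to be small, so it is enough to show that $\mathcal{G}_f$ is not small whenever $f(n) = o(\log n)$. This is a counting argument in the spirit of \cref{sec:application-small}. There are at least $n^{3n/2 + o(n)}$ labeled bipartite cubic graphs on $[n]$; setting $N := n + \tfrac{3n}{2} f(n)$, each such $G$ together with any of the $(N-n)!$ labelings of its subdivision vertices by $[N] \setminus [n]$ yields a distinct labeled member of $\mathcal{G}_f$ on $[N]$. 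Multiplying by the $\binom{N}{n}$ choices of the $n$-subset of $[N]$ hosting the degree-$3$ original vertices (recoverable from the degree sequence), the number of labeled graphs of $\mathcal{G}_f$ on $[N]$ is at least
\[
  \frac{N!}{n!}\cdot n^{3n/2 + o(n)} \;\geq\; N! \cdot n^{n/2 + o(n)}.
\]
For this to fit inside $N! c^N$ one needs $f(n) \geq \Omega(\log n/\log c)$, so when $f(n) = o(\log n)$ the bound fails for every constant $c$, contradicting smallness.

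\smallskip
\emph{Sufficiency ($f(n) = \Omega(\log n)$ implies $\tww(\mathcal{G}_f)$ bounded).} Assume $f(n) \geq c \log n$ for some constant $c > 0$. I first reduce to $K_n$: for any $n$-vertex graph $G$, the subdivision $G^{(f(n))}$ is obtained from $K_n^{(f(n))}$ by deleting the internal subdivision vertices on the non-edges of $G$, hence is an induced subgraph of $K_n^{(f(n))}$; since twin-width is closed under induced subgraphs (a $d$-sequence of a graph restricts to one on any induced subgraph, as red edges of the restricted trigraph form a subset of the restricted red edges of the ambient trigraph), it suffices to show $\tww(K_n^{(f(n))}) \leq d(c)$ for some constant $d(c)$ depending only on $c$.

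For $K_n^{(f(n))}$ itself I would construct an explicit contraction sequence inspired by logarithmic-depth sorting/routing networks, as announced in the introduction. Write $\ell := \lceil \log_2 n \rceil$ and identify $V(K_n) = \{0,1\}^\ell$. Using that $f(n) \geq c\ell - O(1)$, each subdivided edge $uv$ can be partitioned into $\ell$ consecutive ``stations'' $S^{uv}_1, \ldots, S^{uv}_\ell$, where $S^{uv}_i$ records the routing state in which the first $i$ bits of $u$ have been flipped to those of $v$. The contraction proceeds in $\ell$ rounds: at round $i$, all stations $S^{uv}_i$ sharing the same routing state are identified. Because only a bounded number of routing states remain ``active'' at any round, the red degree stays bounded by a function of $c$ throughout. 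The main obstacle is certifying that the red degree does not accumulate across levels; I expect this to go through an invariant maintained by induction on the round, describing the partition of surviving subdivision vertices into routing-state cohorts and bounding their red neighborhoods together with those of the original vertices. The logarithmic depth of the routing is precisely what converts the $\Omega(\log n)$ subdivision budget into a constant red-degree bound.
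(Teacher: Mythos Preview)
Your necessity argument is correct but takes a different route from the paper. You invoke \cref{thm:main} and a counting estimate to show that $\mathcal G_f$ is not small when $f$ is sub-logarithmic; the paper instead gives a short, self-contained combinatorial argument directly on a $d$-contraction sequence of $K_n^{(k)}$. Looking at the first step where two original clique vertices are merged, a pigeonhole along the $n-1$ subdivided paths out of one of them forces $k \ge \log_{d+1}(n-1)-1$. The paper's approach avoids the machinery of Section~\ref{sec:small} and yields an explicit quantitative dependency between $d$ and $k$; yours is cleaner to state but black-boxes a much heavier theorem. One small imprecision: you write ``$f(n)=o(\log n)$'', whereas the negation of $\Omega(\log n)$ for non-decreasing $f$ only gives $\liminf f(n)/\log n = 0$; your counting still works along that subsequence, but you should say so.

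For sufficiency, the reduction to $K_n^{(f(n))}$ via induced subgraphs is exactly what the paper does. After that, however, your sketch is not a proof. You identify $V(K_n)$ with $\{0,1\}^\ell$ and propose to contract stations ``sharing the same routing state'', asserting that ``only a bounded number of routing states remain active''; but at each level there are $2^\ell=n$ routing positions, not $O(1)$, and the congestion of hypercube routing means many subdivided paths pass through the same intermediate state. Turning this into a bound on red degree requires a precise invariant that you do not supply, and phrases like ``I expect this to go through'' and ``the main obstacle is certifying\ldots'' confirm the gap. The paper sidesteps direct contraction-sequence construction entirely: it orders $V\bigl(K_n^{(k)}\bigr)$ in layers $V_0,\dots,V_k$, observes that choosing orders on $V_1,\dots,V_{k-1}$ amounts to decomposing a single permutation on $\binom{n}{2}$ elements as a product of $k-1$ permutations, and uses merge sort to realise this as a product of parallel $t$-merges with $t=\lceil 2^{2c}\rceil$. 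Parallel $t$-merge matrices are $(t{+}1)$-grid free, and a short structural analysis of the block-bidiagonal adjacency matrix then shows it is $(5t{+}11)$-grid free, whence bounded twin-width by \cref{thm:gridtheorem}. This ordering-plus-grid-freeness route is what makes the argument go through; your bit-flipping scheme might be salvageable, but as written it does not establish the bound.
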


Let us first observe that for any integer $k > 0$ and $n$-vertex graph $G$, $G^{(k)}$ is an induced subgraph of $K_n^{(k)}$.
Thus the class $\mathcal G_f$ is contained in the hereditary closure of the graphs $K_n^{(f(n))}$ for $n \ge 0$.
Since twin-width never increases when taking induced subgraphs, it suffices to consider graphs of the form $K_n^{(f(n))}$.
As hinted at in \cref{sec:prelim:small}, the forward implication of \cref{thm:clique-subdivision} could be derived from \cref{thm:tww-counting} and the fact that $o(\log n)$-subdivisions does \emph{not} form a small class.
We give a direct proof of a stronger statement.

\begin{proposition}
  For $d \ge 0$ and $k > 0$ integers, if $K_n^{(k)}$ has twin-width at most~$d$, then $k \ge \log_{d+1} (n-1) - 1$.
\end{proposition}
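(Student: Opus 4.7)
The plan is to fix a branch vertex $v_1$, select a suitable moment $t^\star$ in a $d$-contraction sequence of $K_n^{(k)}$, and show that every other branch vertex $v_j$ lies in a part at red-distance at most $k+1$ from the part of $v_1$. The bound on $n$ then follows from the standard ball-growth estimate for a graph of maximum degree~$d$.

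\emph{Choice of the critical time.} Let $t^\star$ be the latest time of the sequence (i.e., the time with the most parts remaining) at which the $n$ branch vertices still lie in pairwise distinct parts; this is well defined since the condition holds initially and fails at time $1$. Let $A$ be the part of $v_1$ at time $t^\star$ and, for each $j\neq 1$, let $C_j$ be the part of $v_j$. The parts $A, C_2, \ldots, C_n$ are pairwise distinct, giving $n$ distinct parts of the trigraph.

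\emph{Path-to-red-walk translation.} For each $j\neq 1$, consider the unique $v_1v_j$-path $v_1 = y_0, y_1, \ldots, y_{k+1} = v_j$ in $K_n^{(k)}$. Let $Q_i$ be the part containing $y_i$, and compress consecutive equal entries to obtain a walk $A = P_0, P_1, \ldots, P_s = C_j$ in the partition, with $s\le k+1$. The main claim is that each transition $P_iP_{i+1}$ is an edge of the \emph{red} graph. Fixing $\ell$ with $y_\ell\in P_i$ and $y_{\ell+1}\in P_{i+1}$, the pair $P_iP_{i+1}$ carries at least one edge of the trigraph (namely $y_\ell y_{\ell+1}$). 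Non-homogeneity comes from the neighbouring path vertex $y_{\ell-1}$, or symmetrically $y_{\ell+2}$: if $y_{\ell-1}\in P_i$ then the non-edge $y_{\ell-1}y_{\ell+1}$ of $K_n^{(k)}$ witnesses that $P_i\cup P_{i+1}$ is not homogeneously connected, and if $y_{\ell-1}\in P_{i-1}$ we appeal to $y_{\ell+2}$ for the same reason (vertices at distance~$2$ on a subdivision path are non-adjacent). The one degenerate case is a ``singleton-singleton'' transition where both $P_i$ and $P_{i+1}$ reduce to a single vertex on the subdivision path, and the choice of $t^\star$ together with a short case analysis at the endpoints of the walk confines these transitions to contribute only an additive constant, which is absorbed in the target bound.

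\emph{Counting via a red ball.} Given the claim, each $C_j$ sits in the red ball $B_{k+1}(A)$. Since the red graph has maximum degree $d$, a standard Moore-type count yields $|B_{k+1}(A)|\le 1 + d + d(d-1) + \cdots + d(d-1)^{k} \le (d+1)^{k+1}$. The $n-1$ parts $C_j$ are distinct elements of $B_{k+1}(A)\setminus\{A\}$, hence $n-1\le (d+1)^{k+1}$, equivalently $k\ge\log_{d+1}(n-1)-1$.

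\emph{Main obstacle.} The crux is the red-adjacency claim, and specifically the singleton-singleton exception: two parts can be homogeneously (black) adjacent rather than red when both contain only one vertex along the subdivision path. Picking $t^\star$ just before the first merge between two branch-containing parts is what prevents such degenerate configurations from accumulating along an entire walk; handling the residual boundary singletons cleanly is the single most delicate piece of the argument.
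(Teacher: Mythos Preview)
Your red-walk approach has a genuine gap. The case you flag as ``singleton--singleton'' --- where neither $y_{\ell-1}\in P_i$ nor $y_{\ell+2}\in P_{i+1}$ --- is not confined to a bounded number of positions and is not controlled by your choice of~$t^\star$. That choice only guarantees that the \emph{branch} vertices lie in distinct parts; it says nothing about subdivision vertices. Nothing you have written prevents, at time~$t^\star$, an entire $v_1v_j$ subdivision path from consisting of singleton parts, in which case every transition of the compressed walk is a black edge and the red-ball count is vacuous. Your sentence ``the choice of $t^\star$ \ldots\ confines these transitions to contribute only an additive constant'' is precisely where the proof would have to happen, and no argument is given; you yourself identify this as ``the single most delicate piece'' and leave it open.

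The paper avoids per-path walks entirely. It works one step \emph{later}, right after two branch vertices $x,y$ first share a part~$P_0$, and sets $V_i$ to be all vertices at distance~$i$ from~$x$ along the $n-1$ subdivided edges out of~$x$. Every part $P\neq P_0$ meeting~$V_1$ is red-adjacent to~$P_0$ (its $V_1$-vertex is a neighbour of~$x$ but not of~$y\in P_0$), so $V_1$ is covered by at most $d+1$ parts and some $P_1$ contains $\ge (n-1)/(d+1)$ vertices of~$V_1$. The inductive step uses this \emph{multiplicity} to force redness: any part $P\neq P_1$ containing the $V_2$-successor of some $w\in P_1\cap V_1$ sees the non-edge from that successor to any other $w'\in P_1\cap V_1$, hence $PP_1$ is red; so again at most $d+1$ parts cover the relevant piece of~$V_2$. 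Iterating to level $k+1$ gives a part meeting $V_{k+1}$ in at least $(n-1)/(d+1)^{k+1}$ vertices, while no part contains two vertices of~$V_{k+1}$ at this step. Redness is thus enforced by the many path-vertices already concentrated in a single part, not by any property of an individual walk --- and that concentration is exactly what your argument does not produce.
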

\begin{proof}
  Let $G$ be $K_n^{(k)}$, for some positive integer $k$.
  Assuming that $G$ has twin-width at most~$d$, we show that $k \ge \log_{d+1} (n-1) - 1$.
  Note that the assumption $k>0$ is required because $K_n^{(0)}=K_n$ has twin-width~0.

  In a $d$-contraction sequence of $G$, let us consider the first step in which two vertices $x, y$ of the original $K_n$ are contracted.
  Let $\mathcal P$ the partition of $V(G)$ at this step, and $P_0 \in \mathcal P$ the part containing $x$ and $y$.
  In $G$, consider the $n-1$ paths, on $k+1$ edges each, resulting from the subdivided edges starting at $x$.
  We partition the vertices of these paths as $V_1, \ldots, V_{k+1}$, where $V_i$ contains all the vertices at distance $i$ of $x$.
  Then $V_{k+1}$ contains all the vertices of the original $K_n$ except $x$.
  In particular, no two vertices of $V_{k+1}$ are in the same part of $\mathcal P$.

  All the vertices of $V_1$ are neighbors of $x$ but not of $y$, thus for any part $P \in \mathcal P \setminus \{P_0\}$ intersecting $V_1$, $P P_0$ is a red edge in $G_{\mathcal P}$.
  Thus at most $d+1$ parts of $\mathcal P$ intersect $V_1$, and there exists $P_1 \in \mathcal P$ such that $|P_1 \cap V_1| \ge \frac{n-1}{d+1}$.
  Observe that $P_1$ may well be equal to $P_0$.
  Similarly the vertices in $P_1 \cap V_1$ have pairwise-disjoint neighborhoods in $V_2$, hence $V_2 \cap N_G(P_1 \cap V_1)$, of size at least $\frac{n-1}{d+1}$, is split in at most $d+1$ parts in $\mathcal P$.
  Thus there is a part $P_2 \in \mathcal P$ (that may be $P_0$ or $P_1$) which contains at least $\frac{n-1}{(d+1)^2}$ vertices of $V_2$.
  It follows by induction that for every $i \in [k+1]$, there exists a part of $\mathcal P$ containing at least $\frac{n-1}{(d+1)^i}$ vertices of $V_i$.
  However no part of $\mathcal P$ contains more than one vertex of $V_{k+1}$.
  Hence $\frac{n-1}{(d+1)^{k+1}} \le 1$, and $k \ge \log_{d+1} (n-1) - 1$.
\end{proof}

The converses relies on some results on decompositions of permutations.
We now encode a permutation $\sigma$ in the usual way, as the sparse matrix with entry 1 at position $(i,\sigma(i))$, and 0 elsewhere.
(This is unlike the more cumbersome but technically-motivated dense encodings used in \cref{sec:application-small} and~\cite[Section~6.1]{twin-width1}.)

A permutation $\sigma$ is a \emph{$t$-merge} if its domain
 can be partitioned into $t$ possibly-empty discrete intervals $I_1, \ldots, I_t$ such that the restriction of $\sigma$ to $I_i$ is increasing.
Merging $t$ sorted lists can be expressed as the application of some well chosen $t$-merge to the concatenation of the lists.
A permutation $\sigma$ is a \emph{parallel $t$-merge} if its domain can be partitioned into an arbitrary number of intervals $J_1, \ldots, J_r$ such that $\sigma$ operates independently on each $J_i$ (i.e., $\sigma(J_i) = J_i$), and the restriction $\sigma_{|J_i}$ is a $t$-merge.
See \cref{fig:parallel-merge} for an example of a parallel 2-merge.

\begin{figure}[htb]
  \centering
  \begin{tikzpicture}
    \def\s{0.5}
    \def\hb{\s/2}
    \def\vb{-\s/2}
    \def\he{8.5 * \s}
    \def\ve{-8.5 * \s}

    \begin{scope}[yshift=9 * \s cm]
    \draw (\hb-0.05,\vb) -- (\hb-0.05,\ve) --++(0.1,0) ;
    \draw (\hb-0.05,\vb) --++(0.1,0) ;
    \draw (\he+0.05,\vb) -- (\he+0.05,\ve) --++(-0.1,0) ;
    \draw (\he+0.05,\vb) --++(-0.1,0) ;
    \end{scope}
    
    \begin{scope}[rotate=90]
    \foreach \i/\v in {1/2,2/3,3/5,4/1,5/4}{
      \foreach \j in {1,...,5}{
        \ifnum \j=\v
          \node (e\i\j) at (\s * \i,-\s * \j) {1};
        \else
          \node (e\i\j) at (\s * \i,-\s * \j) {0};
        \fi
      }
    }
    \foreach \i/\j/\v in {6/6/1,7/7/0,7/8/1,8/7/1,8/8/0}{
      \node (e\i\j) at (\s * \i,-\s * \j) {\v};
    }

    \draw (5.5 * \s,\vb) -- (5.5 * \s,-6.5 * \s);
    \draw (\hb,-5.5 * \s) -- (6.5 * \s,-5.5 * \s);
    \draw (6.5 * \s,-5.5 * \s) -- (6.5 * \s,\ve);
    \draw (5.5 * \s,-6.5 * \s) -- (\he,-6.5 * \s);

    \draw[dashed] (3.5 * \s,\vb) -- (3.5 * \s,-5.5 * \s);
    \draw[dashed] (7.5 * \s,-6.5 * \s) -- (7.5 * \s,\ve);
    \end{scope}
  \end{tikzpicture}
  \caption{%
    A parallel 2-merge matrix, corresponding to the permutation 23514687.
    Note that the first row is at the bottom, as is common with permutation matrices.
    It is composed of three blocks, each of which can be partitioned in two increasing subsequences, indicated by the dashes.
    Empty areas are filled with 0.
  }
  \label{fig:parallel-merge}
\end{figure}

\begin{lemma}\label{lem:t-merge-decomposition}
  For any $t,\ell \in \mathbb{N}$, any permutation on $t^\ell$ elements can be decomposed as a product of at most $\ell$ parallel $t$-merges.
\end{lemma}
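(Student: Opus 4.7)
My plan is to prove the statement by induction on $\ell$, using a construction that mimics one step of $t$-way merge sort: a first parallel $t$-merge groups the values into the $t$ big blocks, and the remaining decomposition happens independently within each block, using the inductive hypothesis.

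For the base case $\ell=1$, a permutation $\sigma$ of $[t]$ is trivially a $t$-merge (partition $[t]$ into $t$ singleton intervals, on each of which $\sigma$ is vacuously increasing), hence a parallel $t$-merge (with a single block $J_1=[t]$).

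For the inductive step, assume the result for $\ell-1$ and let $\sigma$ be a permutation of $[t^\ell]$. Split $[t^\ell]$ into the $t$ consecutive blocks $B_1,\ldots,B_t$ of size $t^{\ell-1}$, and set $S_j := \sigma(B_j)$. Define $\rho$ to be the unique permutation of $[t^\ell]$ whose restriction to each $B_j$ is the order-preserving bijection $B_j \to S_j$. Then $\rho$ is increasing on each of the $t$ intervals $B_1,\ldots,B_t$, so it is a $t$-merge on $[t^\ell]$, and in particular a parallel $t$-merge (with $r=1$ and $J_1=[t^\ell]$). Now consider $\tau := \rho^{-1}\circ \sigma$: for $i\in B_j$, $\sigma(i)\in S_j=\rho(B_j)$, so $\tau(i)\in B_j$. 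Thus $\tau$ stabilises each $B_j$ and restricts to a permutation $\tau_j$ of $B_j\cong[t^{\ell-1}]$. By the inductive hypothesis, each $\tau_j$ factors as $\pi_1^{(j)}\circ\cdots\circ\pi_{\ell-1}^{(j)}$ where every $\pi_i^{(j)}$ is a parallel $t$-merge on $B_j$.

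It remains to assemble, for each $i\in[\ell-1]$, the $t$ local parallel $t$-merges $\pi_i^{(1)},\ldots,\pi_i^{(t)}$ into a single parallel $t$-merge $\pi_i$ on $[t^\ell]$. Since each $\pi_i^{(j)}$ is a parallel $t$-merge on $B_j$, it comes with a partition of $B_j$ into intervals that are stabilised and on which $\pi_i^{(j)}$ acts as a $t$-merge. Taking the union of these partitions over $j=1,\ldots,t$ yields a partition of $[t^\ell]$ into intervals that is compatible with the required definition (each interval is stabilised, and the restriction is a $t$-merge), so defining $\pi_i$ to act as $\pi_i^{(j)}$ on $B_j$ gives a genuine parallel $t$-merge on $[t^\ell]$. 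With this, $\sigma = \rho\circ\pi_1\circ\cdots\circ\pi_{\ell-1}$, which is a product of $\ell$ parallel $t$-merges, completing the induction.

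There is no real obstacle here: the construction is explicit, the order-preserving choice of $\rho$ ensures the block-stabilising property needed to recurse, and the key observation that parallel $t$-merges on disjoint intervals combine into one parallel $t$-merge follows directly from the definition (parallel $t$-merges are allowed to have an arbitrary number of stabilised interval blocks). The only point to verify carefully is that in the combined partition the intervals coming from distinct $B_j$'s remain intervals of $[t^\ell]$, which holds because the $B_j$ are themselves consecutive intervals of $[t^\ell]$.
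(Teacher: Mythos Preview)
Your proof is correct and takes essentially the same approach as the paper: both implement $t$-ary merge sort by induction on~$\ell$, first applying one $t$-merge that sends each of the $t$ consecutive blocks $B_j$ of size $t^{\ell-1}$ order-preservingly onto $\sigma(B_j)$, and then recursing within the blocks. The paper only sketches this argument by appeal to merge sort, whereas you spell out the induction and the verification that the blockwise parallel $t$-merges combine into a global one; your write-up is a faithful and complete expansion of the paper's proof.
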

\begin{proof}
  The case $t = 2$ corresponds to a merge sort.
  In the recursion tree of a merge sort, each level of inductive calls can be expressed as a single parallel 2-merge.
  To sort up to $2^\ell$ elements, a merge sort with recursion depth limited to $\ell$ suffices, and this can be expressed as the composition of $\ell$ parallel 2-merges.
  This generalizes easily to $t$-merges, and composing $\ell$~parallel $t$-merges allows to sort up to $t^\ell$ elements.
  \begin{figure}[htb]
  \centering
  \begin{tikzpicture}
    \def\s{0.5}
    \def\hb{\s/2}
    \def\vb{-\s/2}
    \def\he{8.5 * \s}
    \def\ve{-8.5 * \s}
    \def\z{4.9}

    \foreach \i in {0,\z,2 * \z}{
    \begin{scope}[yshift=9 * \s cm,xshift=\i cm]
    \draw (\hb-0.05,\vb) -- (\hb-0.05,\ve) --++(0.1,0) ;
    \draw (\hb-0.05,\vb) --++(0.1,0) ;
    \draw (\he+0.05,\vb) -- (\he+0.05,\ve) --++(-0.1,0) ;
    \draw (\he+0.05,\vb) --++(-0.1,0) ;
    \end{scope}
    }
    
    \begin{scope}[rotate=90]
    \foreach \i/\v in {1/5,2/4,3/6,4/1,5/3,6/2,7/8,8/7}{
      \foreach \j in {1,...,8}{
        \ifnum \j=\v
          \node (e\i\j) at (\s * \i,-\s * \j) {1};
        \else
          \node (e\i\j) at (\s * \i,-\s * \j) {0};
        \fi
      }
    }
    \end{scope}

     \begin{scope}[rotate=90,yshift=-\z cm]
    \foreach \i/\v in {1/1,2/4,3/5,4/6,6/3,5/2,7/7,8/8}{
      \foreach \j in {1,...,8}{
        \ifnum \j=\v
          \node (e\i\j) at (\s * \i,-\s * \j) {1};
        \else
          \node (e\i\j) at (\s * \i,-\s * \j) {0};
        \fi
      }
    }
     \draw[dashed] (4.5 * \s,\vb) -- (4.5 * \s,\ve);
     \end{scope}

     \begin{scope}[rotate=90,yshift=-2 * \z cm]
    \foreach \v/\i in {1/4,2/2,3/1,4/3}{
      \foreach \j in {1,...,4}{
        \ifnum \j=\v
          \node (e\i\j) at (\s * \i,-\s * \j) {1};
        \else
          \node (e\i\j) at (\s * \i,-\s * \j) {0};
        \fi
      }
    }
    \foreach \v/\i in {5/6,6/5,7/8,8/7}{
      \foreach \j in {5,...,8}{
        \ifnum \j=\v
          \node (e\i\j) at (\s * \i,-\s * \j) {1};
        \else
          \node (e\i\j) at (\s * \i,-\s * \j) {0};
        \fi
      }
    }
    \end{scope}
  \end{tikzpicture}
  \caption{Left: the permutation $\tau$ to sort. Center: the 2-merge permutation $\sigma$ to use on $\tau$. Right: the composition $\sigma^{-1} \circ \tau$, one may inductively sort the two blocks by applying further parallel 2-merges.}
  \label{fig:}
\end{figure}
\end{proof}

The previous lemma is reminiscent of the theory of sorting networks, in that we decompose arbitrary permutations as a product of few base permutations---in our case parallel $t$-merges.
However, sorting networks consider more restricted base permutations (e.g., separable permutations), whereas we merely need the base permutations to have bounded twin-width.

\begin{lemma}\label{lem:t-merge-grid}
 Matrices of parallel $t$-merges are $(t+1)$-grid free.
\end{lemma}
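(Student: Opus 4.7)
The plan is to argue via a classical Erdős–Szekeres-type reduction: first show that the $1$-entries of a parallel $t$-merge contain no strictly decreasing subsequence of length $t+1$, and then show that any matrix possessing a $(t+1)$-grid minor must contain such a decreasing subsequence.

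First I would handle a single $t$-merge $\tau$ with domain split into $I_1,\dots,I_t$. Since each restriction $\tau_{|I_k}$ is increasing, the $1$-entries of the permutation matrix of $\tau$ decompose into $t$ increasing chains (one per $I_k$). By a pigeonhole on $t+1$ entries forming a decreasing subsequence, two of them would fall in the same $I_k$, contradicting that $\tau_{|I_k}$ is increasing. Hence the longest decreasing subsequence of $1$-entries has length $\le t$. Next I would lift this to parallel $t$-merges: the matrix of $\sigma$ is block-diagonal along the intervals $J_1,\dots,J_r$ (because $\sigma(J_i)=J_i$), and two $1$-entries in distinct blocks $J_i,J_{i'}$ with $i<i'$ have both coordinates strictly smaller in the first entry than in the second; thus they cannot participate together in a decreasing subsequence. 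So any decreasing subsequence lies within a single block, and the single-block bound $\le t$ applies globally.

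The second step is the grid-to-decreasing-subsequence extraction, which is essentially one line. Given a $(t+1,t+1)$-division $(\{R_1,\dots,R_{t+1}\},\{C_1,\dots,C_{t+1}\})$ with rows and columns ordered naturally, suppose every cell contains a $1$-entry; pick one entry $(r_i,c_i)$ from each anti-diagonal cell $R_i\cap C_{t+2-i}$. Then $r_1<r_2<\dots<r_{t+1}$ because row-parts are ordered, and $c_1>c_2>\dots>c_{t+1}$ because the chosen column-parts go from rightmost to leftmost. This is a decreasing subsequence of length $t+1$, contradicting the previous step.

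Combining the two steps yields the conclusion that matrices of parallel $t$-merges are $(t+1)$-grid free. I do not anticipate a real obstacle: the most delicate point is simply checking carefully that no decreasing subsequence can straddle two diagonal blocks of a parallel $t$-merge, which follows directly from $\sigma(J_i)=J_i$ and the order between blocks. Everything else is either bookkeeping on the definitions (parallel $t$-merge, $t$-grid, matrix of a permutation) or the standard anti-diagonal selection trick.
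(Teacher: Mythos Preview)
Your proposal is correct and follows essentially the same approach as the paper: both reduce the question to showing that a parallel $t$-merge has no decreasing subsequence of length $t+1$ (via the block-diagonal structure and the pigeonhole on the $t$ increasing intervals), and both extract such a subsequence from a hypothetical $(t+1)$-grid minor by selecting one entry per anti-diagonal cell. The only cosmetic difference is the order of presentation---the paper states the grid-to-decreasing implication first and then derives the contradiction, whereas you establish the decreasing bound first.
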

\begin{proof}
  Let $\sigma$ a parallel $t$-merge, with its domain partitioned into intervals $J_1, \ldots, J_r$ such that~$\sigma(J_i)=J_i$, and every $\sigma_{|J_i}$ is a $t$-merge.
  Assume for a contradiction that $\sigma$ contains a $(t+1)$-grid.
  Then it contains a decreasing subsequence of length $t+1$.

  For any $i<j$, $x \in J_i$ and $y \in J_j$, one has $x<y$ and $\sigma(x) < \sigma(y)$ because $J_i,J_j$ are disjoint intervals, with $\sigma(J_i) = J_i$ and $\sigma(J_j) = J_j$.
  It follows that any decreasing subsequence is contained entirely in one of the $J_k$.
  Thus, there exist a $t$-merge $\sigma_{|J_k}$ which contains a decreasing subsequence of length $t+1$.

  Since $\sigma_{|J_k}$ is a $t$-merge, $J_k$ is itself partitioned into intervals $I_1,\dots,I_t$ such that $\sigma$ is increasing on $I_i$.
  Hence each $I_i$ can contain at most one element of a decreasing subsequence, and $\sigma_{|J_k}$ contains no decreasing subsequence of length more than $t$, a contradiction.
\end{proof}

\begin{proposition}\label{prop:bdtww-subd}
  For any $c > 0$, the class of cliques $K_n$ subdivided at least $\frac{\log n}{c}$ times has twin-width at most~$f(c)$ for some triple-exponential function $f$.
\end{proposition}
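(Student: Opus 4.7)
The plan is to exhibit, for $t := \lceil 2^c \rceil$ and $s \geq \log n / c$, a vertex ordering of $K_n^{(s)}$ whose adjacency matrix is $O(t)$-mixed free, and then invoke \cref{thm:gridtheorem} to conclude $\tww(K_n^{(s)}) \leq 2^{2^{O(t)}} = 2^{2^{O(2^c)}}$, which is triple-exponential in $c$. The starting observation is that $t^s \geq n$ under our assumption, so \cref{lem:t-merge-decomposition} applies: every permutation of $[n]$ decomposes as a product of $s$ parallel $t$-merges.

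I would use this flexibility to \emph{route} the $\binom{n}{2}$ subdivided paths through the $s$ intermediate levels. For each pair $\{i,j\}$, I would assign to the level-$k$ subdivision vertex $u_{ij}^k$ a position $p_k^{ij} \in [n]$ with $p_0^{ij} = i$ and $p_{s+1}^{ij} = j$, chosen so that across all pairs the map $p_k^{\bullet} \mapsto p_{k+1}^{\bullet}$ at each level $k$ is compatible with a single parallel $t$-merge (allowing multiplicities, since $\binom{n}{2} > n$). Then order $V(K_n^{(s)})$ with the original vertices first (in their natural order), and then the subdivision vertices grouped by level and, within each level, sorted lexicographically by the pair of consecutive positions $(p_k^{ij}, p_{k+1}^{ij})$, breaking ties by the pair $\{i,j\}$.

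With this ordering the adjacency matrix is block-structured. Each block between consecutive levels $k$ and $k+1$ is a permutation matrix on $\binom{n}{2}$ rows whose support is the pattern of a parallel $t$-merge with multiplicities; by \cref{lem:t-merge-grid} this pattern is $(t+1)$-grid free, and as a $0,1$-matrix it is therefore $(t+1)$-mixed free (a mixed zone of a $0,1$-matrix must contain a $1$). The two blocks joining the original vertices to levels $1$ and $s$ carry an analogous half-path structure with the same kind of bounded mixed-freeness. The final step would be to propagate these per-block bounds to a global $O(t)$-mixed freeness bound, by arguing that any $T$-mixed minor of the full matrix must concentrate an $\Omega(T)$-mixed minor inside a single block; once $T$ exceeds a constant multiple of $t$ this contradicts \cref{lem:t-merge-grid}, and \cref{thm:gridtheorem} yields the announced triple-exponential twin-width bound.

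The main obstacle is precisely this joint design of the routing together with the global-to-local mixed-freeness reduction. We must order and disambiguate the $\binom{n}{2}$ paths cleanly while keeping each per-level transition governed by a single parallel $t$-merge, and then control how a putative large mixed minor of the global matrix may be spread across adjacent levels (and through the blocks touching the original vertices), so that no concentration of mixedness can hide between levels. If the routing is set up symmetrically enough — for instance by insisting that the multiset $\{p_k^{ij}\}_{ij}$ is, at every level, uniform on $[n]$ with multiplicity $n-1$ — then the multiplicities contribute only a constant blow-up to the mixed-freeness parameter, and the reduction goes through.
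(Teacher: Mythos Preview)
Your routing scheme cannot work as stated. You assign positions $p_k^{ij}\in[n]$ and ask that, at each level, the map $p_k^{\bullet}\mapsto p_{k+1}^{\bullet}$ be (compatible with) a single parallel $t$-merge on $[n]$. But a parallel $t$-merge is a \emph{permutation} of $[n]$: two paths that share a position at level $k$ must then share a position at level $k+1$, and by induction at every subsequent level. Yet for a fixed $i$ all $n-1$ paths $\{i,j\}$ start at the same position $p_0^{ij}=i$ and must end at pairwise distinct positions $p_{s+1}^{ij}=j$. They simply cannot diverge under your transitions. Allowing ``multiplicities'' does not help: either the level-to-level map is a function of the position alone (and paths cannot separate), or it depends on the identity of the path (and then you are really sorting $\binom{n}{2}$ elements, not $n$). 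This is also why your choice $t=\lceil 2^c\rceil$ with $t^s\geq n$ is too small: the object that must be decomposed by \cref{lem:t-merge-decomposition} is a permutation on $\binom{n}{2}$ elements, so you need $t^{s}\geq\binom{n}{2}$, i.e.\ $t=\lceil 2^{2c}\rceil$.

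The paper's proof does exactly this clean version of your idea. Each of $V_1,\ldots,V_k$ has size $\binom{n}{2}$; order $V_1$ and $V_k$ by ``which original vertex is the neighbour'', read off the induced permutation $\sigma$ of $[\binom{n}{2}]$, and decompose $\sigma$ as a product of $k-1$ parallel $t$-merges with $t=\lceil 2^{2c}\rceil$. Each inter-level block is then literally a parallel $t$-merge matrix, hence $(t+1)$-grid free by \cref{lem:t-merge-grid}, and the two extremal blocks $M_{0,1},M_{k,0}$ are monotone (hence $2$-grid free). The global-to-local step you flagged as the ``main obstacle'' is then short: the row-part $R'_1$ and $R'_\ell$ of a putative $\ell$-grid minor each force every column-part $C'$ to meet one of at most five of the $C_j$'s, so by pigeonhole some single $C_j$ (and symmetrically some $R_i$) contains at least $(\ell-10)/5$ parts of the grid minor; this drops an $\lfloor(\ell-10)/5\rfloor$-grid into one zone and forces $\ell\leq 5t+10$. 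Your outline had the right shape; the fix is to abandon the $[n]$-routing entirely and sort the $\binom{n}{2}$ subdivided edges directly.
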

\begin{proof}
  Let $k \ge \frac{\log n}{c}$, and let $G$ be $K_n^{(k)}$.
  We want to order $V(G)$ such that the adjacency matrix of $G$ in that order is $r$-grid free, for some $r$ depending only on $c$.
  This implies the desired twin-width bound by \cref{thm:gridtheorem}.

  Choose an arbitrary orientation of the edges of $K_n$.
  In $G$, the edges of $K_n$ become directed paths on $k+1$ edges.
  Then, for $0 \le i \le k$, let $V_i \subset V(G)$ contain every $i$-th vertex along these directed paths.
  In particular, $V_0$ corresponds to the vertices of $K_n$, while $V_1, \ldots, V_k$ are all the vertices created by the subdivision.
  Thus, $V_0, \ldots, V_k$ is a partition of $V(G)$.

  Let us now define an order within each $V_i$.
  Choose $x_1, \ldots, x_n$ an arbitrary order on $V_0$.
  The extremal set $V_1$ is ordered according to the neighbors in $V_0$, i.e., with first the neighbors of $x_1$ in any order, then the neighbors of $x_2$, etc.
  We proceed similarly for $V_k$.
  The disjoint paths in $G - V_0$ define a bijection between $V_1$ and $V_k$,
  which can be interpreted as a permutation~$\sigma$ on $\frac{n(n-1)}{2}$ elements according to the previous orderings.
  Then, choosing orderings for $V_2, \ldots, V_{k-1}$ is equivalent to decomposing $\sigma$ as a product $\sigma_1 \circ \cdots \circ \sigma_{k-1}$.
  By \cref{lem:t-merge-decomposition}, we may choose $\sigma_1, \dots, \sigma_{k-1}$ to be parallel $t$-merges for any $t$ such that $t^{\log(n) / c} \ge \frac{n(n-1)}{2}$.
  This is satisfied by $t = \lceil 2^{2c} \rceil$, which crucially is independent of $n$.
  With this choice of decomposition for $\sigma$, we have ordered $V_2, \ldots, V_{k-1}$.
  Finally, $V(G)$ is ordered as $V_0 < \cdots < V_k$, where $V_i$ is ordered as previously defined.

  Let $M$ be the adjacency matrix of~$G$ respecting this ordering.
  Let $R_0, \ldots, R_k$ (resp.\ $C_0,\dots,C_k$) the partition of the rows (resp.~columns) of~$M$ induced by the partition $V_0, \ldots, V_k$ of $V(G)$.
  Then $(\mathcal{R},\mathcal{C}) = (\{R_0,\dots,R_k\},\{C_0,\dots,C_k\})$ is a division of~$M$.
  For $i,j \in [0,k]$, let~$M_{i,j}$ be the zone $R_i \cap C_j$, which corresponds to the adjacency matrix between $V_i$ and $V_j$.
  The zone~$M_{i,j}$ is non-zero if and only if $i = j \pm 1$ modulo $k+1$.
  Thus, there are $2k+2$ non-zero zones, forming a double diagonal with corners (see~\cref{fig:Kn-subdivision-matrix}).
  \begin{figure}[ht]
    \centering
    \begin{tikzpicture}
      \def\s{1.1}
      \def\t{0.8}
      \def\n{5}

      \begin{scope}[rotate=90]
      \node (M10) at (\s/2, \t/2) {$M_{1,0}$};
      \draw (0,0) --+(\s,0) --+(\s,\t) --+(0,\t) -- cycle;
      \node (M01) at (-\t/2, -\s/2) {$M_{0,1}$};
      \draw (0,0) --+(-\t,0) --+(-\t,-\s) --+(0,-\s) -- cycle;

      \node (M\n0) at (\n*\s - \s/2, \t/2) {$M_{k,0}$};
      \draw (\n*\s,0) --+(-\s,0) --+(-\s,\t) --+(0,\t) -- cycle;
      \node (M0\n) at (-\t/2, -\n*\s + \s/2) {$M_{0,k}$};
      \draw (0,-\n*\s) --+(-\t,0) --+(-\t,\s) --+(0,\s) -- cycle;

      \foreach \i/\j in {1/2,2/1,2/3,3/2,5/4,4/5}{
        \draw (\s*\i, -\s*\j) --+(-\s,0) --+(-\s,\s) --+(0,\s) -- cycle;
      }
      \foreach \i/\j in {1/2,2/1,2/3,3/2}{
        \node (M\i\j) at (\s*\i - \s/2, -\s*\j + \s/2) {$M_{\i,\j}$};
      }

      \node (M34) at (\s*3 - \s/2, -\s*4 + 0.6*\s) {$\Ddots$};
      \node (M43) at (\s*4 - \s/2, -\s*3 + 0.6*\s) {$\Ddots$};
      \node (M45) at (\s*4 - \s/2, -\s*5 + \s/2) {$M_{k-1,k}$};
      \node (M54) at (\s*5 - \s/2, -\s*4 + \s/2) {$M_{k,k-1}$};

      \draw[very thick] (-\t,\t) -- (\n*\s,\t) -- (\n*\s,-\n*\s) -- (-\t,-\n*\s) -- cycle;

      \node (R0) at (-\t*1.4, \t/2) {$C_0$};
      \node (C0) at (-\t/2, \t*1.4) {$R_0$};
      \foreach \i in {1,...,3}{
        \node (R\i) at (-\t*1.4, -\i*\s + \s/2) {$C_{\i}$};
        \node (C\i) at (\i*\s - \s/2,\t*1.4) {$R_{\i}$};
      };
      \node (R4) at (-\t*1.4, -4*\s + 0.6*\s) {$\cdots$};
      \node (C4) at (4*\s - \s/2,\t*1.4) {$\vdots$};
      \node (R5) at (-\t*1.4, -5*\s + \s/2) {$C_k$};
      \node (C5) at (5*\s - \s/2,\t*1.4) {$R_k$};
      \end{scope}
    \end{tikzpicture}
    \caption{The adjacency matrix $M$ of $G$, with the appropriate ordering of the vertices.}
    \label{fig:Kn-subdivision-matrix}
  \end{figure}
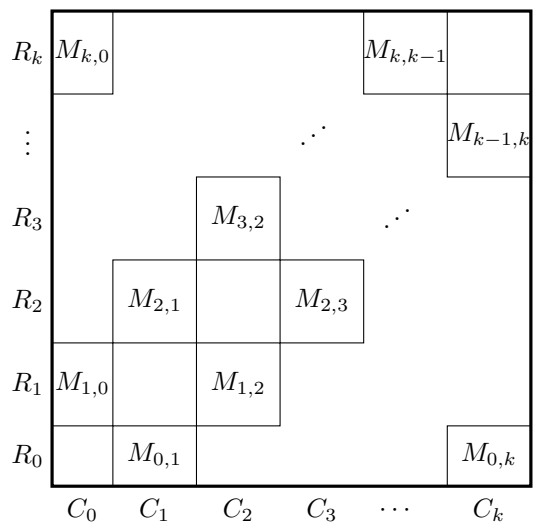

  \begin{claim}\label{clm:zone-grid}
    Every zone of the division $(\mathcal{R},\mathcal{C})$ of~$M$ is $(t+1)$-grid free.
  \end{claim}
  \begin{claimproof}
    For $1 \le i < k$, the zones $M_{i,i+1}$ and $M_{i+1,i}$ are parallel $t$-merges or transposes thereof, hence are $t+1$-grid free by \cref{lem:t-merge-grid}.
    The zones $M_{0,1}$, $M_{1,0}$, $M_{0,k}$, and $M_{k,0}$ are composed of a single monotone sequence, hence are 2-grid free.
  \end{claimproof}

  Let us now consider an $\ell$-grid minor $(\mathcal{R}',\mathcal{C}') = (\{R'_1,\dots,R'_\ell\},\{C'_1,\dots,C'_\ell\})$ of~$M$, i.e., every zone $R'_i \cap C'_j$ contains at least one entry~1.

  \begin{claim}\label{clm:block-intersect-bound}
    There is a set $A \subset \mathcal{C}$ of at most~5 column-parts such that every $C' \in \mathcal C'$ satisfies~$C' \cap \bigcup A \neq \emptyset$.
  \end{claim}
  \begin{claimproof}
    Let $i$ be minimal such that $R'_1 \subseteq R_0 \cup \dots \cup R_i$.
    Then for $i+1 < j < k$, one may verify from the structure of $M$ that $R'_1 \cap C_j$ is full 0.
    Thus any $C' \in \mathcal C'$ must intersect one of $C_0, \ldots, C_{i+1}$ or $C_k$.
    Symmetrically, let $i'$ be maximal such that $R'_\ell \subseteq R_{i'} \cup \dots \cup R_k$.
    Then any $C' \in \mathcal{C}'$ must intersect one of $C_0$ or $C_{i'-1},\dots,C_k$.
    Define $A := \{C_{i'-1}, \dots, C_{i+1}\} \cup \{C_0, C_k\}$.
    The above implies that any $C' \in \mathcal{C'}$ must intersect some $C \in A$.
    Finally we have $i \le i'$, which implies $\card{A} \le 5$.
  \end{claimproof}

  \begin{claim}\label{clm:block-subset-bound}
    There exists $C \in \mathcal{C}$ such that at least $\frac{\ell-10}{5}$ parts of $\mathcal{C}'$ are subsets of $C$.
  \end{claim}
  \begin{claimproof}
    Consider an arbitrary $C' \in \mathcal{C}'$.
    By \cref{clm:block-intersect-bound}, there is some $C \in A$ such that $C' \cap C \neq \emptyset$.
    We consider two cases, depending on whether or not $C' \subseteq C$:
    \begin{itemize}
      \item If $C' \not\subseteq C$, it means that $C'$ contains one of the two boundaries of $C$.
        For a given $C \in A$, there can only be two $C' \in \mathcal{C}'$ for which it is the case.
        Thus this case applies to at most $2\card{A} \le 10$ elements of $\mathcal{C}'$.
      \item Otherwise---and this applies to at least $\ell-10$ elements of $\mathcal{C}'$---we have $C' \subseteq C$ for some $C \in A$.
        Since $\card{A} \le 5$, by pigeonhole principle, there exist $C \in A$ such that at least $\frac{\ell-10}{5}$ elements of $\mathcal{C}'$ are subsets of $C$.
        This proves the claim.
    \end{itemize}
  \end{claimproof}

  Of course, \cref{clm:block-intersect-bound,clm:block-subset-bound} still hold when inverting the roles of rows and columns.
  Thus, there are $R \in \mathcal{R}, C \in \mathcal{C}$ such that $R$ (resp.~$C$) contains at least $\frac{\ell-10}{5}$ parts of $\mathcal{R}'$ (resp~$\mathcal{C}'$) as subsets.
  Hence the zone $R \cap C$ contains an $\frac{\ell-10}{5}$-grid induced by the corresponding parts of $\mathcal R'$ and $\mathcal C'$.
  By \cref{clm:zone-grid}, it follows that $\frac{\ell-10}{5} \le t$, or $\ell \le 5t + 10$.
  Recall that $t$ was chosen as $t = \lceil 2^{2c} \rceil$.
  Hence we have proved that $M$ is $g(c)$-grid free for $g(c) = 5 \lceil 2^{2c} \rceil + 11$.

  A fortiori $M$ is $g(c)$-mixed free, and by \cref{thm:gridtheorem} the twin-width of $G$ is at most $f(c)$
  for some $f(c)$ double-exponential in $g(c)$, hence triple-exponential in $c$.
\end{proof}

In the next section, we will show that graphs with queue number $t$ have twin-width $2^{2^{O(t)}}$ (see \cref{thm:quack}).
This can be used to get an alternative proof to \cref{prop:bdtww-subd}, albeit not self-contained.
Indeed it was shown that the $2\lceil \log_d{\lfloor n/2 \rfloor} \rceil+1$-subdivision of $K_n$ (see \cite[Theorem 4]{DujmovicW05}) has queue number at most~$d$.

\section{Sparse twin-width}\label{sec:sparse-tww}

We start this section by showing the list of equivalences of \cref{thm:sparseboundedtww}.

\subsection{Characterizations}\label{sec:equiv}

We recall that $A_\sigma(G)$ is the adjacency matrix of $G$ when $V(G)$ is ordered by $\sigma$, and that a class $\mathcal C$ is said $d$-grid free if for every $G \in \mathcal C$ there is an ordering $\sigma$ of $V(G)$ such that $A_\sigma(G)$ is $d$-grid free.
\sparse*

\begin{proof}
  We start by showing that $(i)$ and $(ii)$ are equivalent.
  Then we will show that both $(iii)$ and $(iv)$ are implied by $(ii)$, and imply $(i)$.
  
  $(i) \Rightarrow (ii)$.
  Assume that $\mathcal C$ is $K_{t,t}$-free, for some integer $t$.
  Let $A$ be a $d'$-twin-ordered adjacency matrix of $G \in \mathcal C$, where $d' = \tww(G)$.
  By~\cref{thm:gridtheorem}, $A$ is $2d'+2$-mixed free.
  Let $(\mathcal R,\mathcal C) := (\{R_1, \ldots, R_d\}, \{C_1, \ldots, C_d\})$ be a $d$-grid minor of $A$, i.e., such that every zone $R_i \cap C_j$ contains a 1.
  To conclude, we will upper bound~$d$ by $(2d'+2)t-1$.
  For the sake of contradiction, suppose that $d = (2d'+2)t$.
  Let $(\mathcal R',\mathcal C') := (\{R'_1, \ldots, R'_{2d'+2}\}, \{C'_1, \ldots, C'_{2d'+2}\})$ be the division obtained by merging groups of $t$ consecutive parts of $(\mathcal R,\mathcal C)$.
  Since $A$ is $2d'+2$-mixed free, there is a zone $R'_i \cap C'_j$ of $(\mathcal R',\mathcal C')$ which is horizontal or vertical.
  Without loss of generality, let us assume that $R'_i \cap C'_j$ is horizontal.
  Now consider the $(t,t)$-division $(\mathcal R^*,\mathcal C^*) = (\{R_{(i-1)t+1}, \ldots, R_{it}\}, \{C_{(j-1)t+1}, \ldots, C_{jt}\})$ induced by $(\mathcal R,\mathcal C)$ on $R'_i \cap C'_j$.
  Since $(\mathcal R,\mathcal C)$ is a grid minor and $R'_i \cap C'_j$ is horizontal, there is at least one row of 1 in each row-part of $(\mathcal R^*,\mathcal C^*)$.
  The corresponding $t$ vertices, together with exactly one vertex per column-part, form a biclique $K_{t,t}$ in $G$.

  $(ii) \Rightarrow (i)$.
  The contrapositive is immediate since a biclique $K_{t,t}$ yields a~$t$-grid minor in the adjacency matrix regardless of the vertex ordering.

$(ii) \Rightarrow (iii)$.  
Assume that there is an integer $d$ such that $\mathcal C$ is $d$-grid free.
Then by Marcus-Tardos theorem (\cref{thm:marcustardos}), there is a constant $c_d$ such that every graph of $\mathcal C$ has at most $c_dn/2$ edges.

$(iii) \Rightarrow (i)$.
We show the contrapositive, and the heredity of $\mathcal C$ is crucial here.
Observe that a hereditary class which is not $K_{t,t}$-free contains, for every integer $n$, a graph on $2n$ vertices with a (spanning) $K_{n,n}$.
Thus the average degree of the class is unbounded.

$(ii) \Rightarrow (iv)$.
If $\mathcal C$ is $d$-grid free, so is every subgraph of every $G \in \mathcal C$. 
Hence the subgraph closure $\sub(\mathcal C)$ of $\mathcal C$ also has bounded twin-width.

$(iv) \Rightarrow (i)$.
If $\mathcal C$ is \emph{not} $K_{t,t}$-free, $\sub(\mathcal C)$ contains every bipartite graph.
Thus $\sub(\mathcal C)$ has unbounded twin-width (for instance it contains the 1-subdivision of every clique).

$(v) \Rightarrow (i)$. 
  If $\mathcal C$ has expansion bounded by $f$, in particular $\nabla_0(\mathcal C) \leqslant f(0)$.
  Thus there exists $t := \lceil f(0) \rceil$ such that $\mathcal C$ is $K_{t,t}$-free.

At this point, we have shown that $(i), (ii), (iii), (iv)$ are all equivalent, and implied by~$(v)$.

$(i),(ii),(iii),(iv) \Rightarrow (v)$.   
Finally we assume that the first four conditions hold for a class~$\mathcal C$ of bounded twin-width.
Using all these assumptions, we want to bound the expansion of~$\mathcal C$.
The class $\mathcal B$ of binary structures obtained from $\sub(\mathcal C)$ by coloring the edge sets with two colors, in all possible ways, also has bounded twin-width.
Indeed it is $d$-grid free, so by \cref{thm:gridtheorem} it has bounded twin-width.

We first show that, for any fixed $r$, the class of $r$-shallow minors of $\mathcal C$ has bounded twin-width.
Indeed there is an FO transduction of $\mathcal B$ which contains all the $r$-shallow minors of $\mathcal C$, and we conclude by \cref{thm:transduction} (which works for graphs but more generally for binary structures with a constant number of binary relations).
To specify the transduction, we explain how every fixed $r$-shallow minor $H$ is obtained.
Let $G \in \mathcal B$ be an edge-bicolored graph containing $H$ as a spanning and induced $r$-shallow minor, where each contracted set induces a tree in $G$.
More precisely, the colors on $E(G)$ are such that every ``edge of $H$'' is colored~2, while every contracted edge (that is, other edge) is colored~1.
Let us recall that an FO transduction consists of adding a set of $O(1)$ non-deterministic unary relations (or coloring of the vertices with $O(1)$ colors), defining the new vertices and edges by means of FO formulas, and deleting all colors and potentially some vertices.
Here we only need one unary relation, say, $U$, and we focus on such a coloring where $U(v)$ holds for exactly one vertex $v$ in every contracted set.
The new vertices are simply defined by the formula $U(x)$.
Then we can define the edges by the formula $\phi(x,y) = U(x) \land U(y) \land \exists x' \exists y'~d^{2r}_1(x,x') \land d^{2r}_1(y,y') \land E_2(x',y')$, with $d^{2r}_1(z,z') = \bigvee_{i \in [2r]} \exists z_1 \cdots \exists z_i~z_1 = z \land z_i = z' \land \bigwedge_{j \in [i-1]} E_1(z_j,z_{j+1})$, where $E_1$ is the edge set colored 1, and $E_2$ is the edge set colored 2. 
The edge interpretation $\phi(x,y)$ links two vertices $u,v$ if they are ``reference vertices'' for their contracted set, and there is an edge colored 2 between two vertices $u', v'$ where there is a path of edges colored 1 of length at most $2r$ between $u$ and $u'$, and between $v$ and $v'$.
Such paths always exist within a contracted set since the radius is at most $r$, hence the diameter is at most $2r$.
Finally the graph obtained by this $(U,\phi)$-interpretation is exactly $H$.

We now want to bound the average degree of the $r$-shallow minors in $\nabla_r(\mathcal C)$ by some value $f(r)$.
Since $\nabla_r(\mathcal C)$ is subgraph-closed (every subgraph of an $r$-shallow minor is an $r$-shallow minor), $\sub(\nabla_r(\mathcal C))=\nabla_r(\mathcal C)$ has bounded twin-width.
Thus $(iv)$ implies $(iii)$ for the class $\nabla_r(\mathcal C)$.
Therefore $\nabla_r(\mathcal C)$ has bounded average degree, and $\mathcal C$ has bounded expansion.
\end{proof}

In the previous proof the heredity of $\mathcal C$ is only used to show that $(iii)$ implies $(i)$.
It is not an artifact of the proof since $\{K_{t,t} \uplus t^2K_1\}_{t \in \mathbb N}$ is a class of bounded twin-width where all graphs have linearly many edges, but admits arbitrary large bicliques.
The equivalences $(i) \Leftrightarrow (ii) \Leftrightarrow (iv) \Leftrightarrow (v)$ hold for every (possibly non-hereditary) class of bounded twin-width.
Bounded sparse twin-width classes remain surprisingly diverse.
They for instance contain $K_t$-minor free graphs and bounded-degree bounded twin-width graphs, which in turn contain some expander classes.
In particular bounded sparse twin-width graphs do not have polynomial expansion.

\subsection{Flat classes}\label{sec:flat}

For any graph invariant $\iota$, we say that a class $\mathcal C$ is \emph{$\iota$ flat} if it is included in $\sub(\mathcal G \boxtimes \mathcal H)$ with $\mathcal G$ and $\mathcal H$ two classes of bounded $\iota$, and $\mathcal H$ also has bounded degree.
Recalling the definition in \cref{sec:prelim}, a class is flat if it is treewidth flat.
We will see that twin-width flat classes have bounded twin-width.
It will imply that (treewidth) flat classes are other examples of bounded sparse twin-width classes.

We say that $G$ is a \emph{trigraph over a graph $H$} if $(V(G),E(G) \cup R(G))$ is isomorphic to $H$.
Thus $G$ is obtained from the graph $H$ by coloring red some of its edges.
More generally $G$ is a \emph{trigraph over a trigraph $H$} if there is a graph isomorphism from $(V(G),E(G) \cup R(G))$ to $(V(H),E(H) \cup R(H))$ such that every black edge of $G$ is mapped to a black edge of $H$.
Again $G$ is obtained from the trigraph $H$ by coloring red some of its black edges.
We start by bounding the twin-width of trigraphs over graphs with bounded degree and bounded twin-width.

\begin{lemma}\label{lem:degree-tww}
  Every trigraph over a graph $H$ has twin-width at most $\tww(H)+\Delta(H)$.
\end{lemma}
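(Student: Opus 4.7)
The plan is to lift a $d$-contraction sequence $H = H_n,\ldots,H_1$ of $H$, with $d := \tww(H)$, into a contraction sequence $G = G_n,\ldots,G_1$ of $G$ by performing the same vertex identifications in both. The whole task then reduces to bounding the red degree in each $G_i$ by $d + \Delta(H)$.

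To compare $G_i$ with $H_i$, I would first prove by induction on $i$ two structural invariants: (a) for every vertex $w$, $N_{G_i}(w) = N_{H_i}(w)$ as sets of vertices; and (b) $E(G_i) \subseteq E(H_i)$, i.e.\ every black edge of $G_i$ is also black in $H_i$ (equivalently, every red edge of $H_i$ is red in $G_i$). Both invariants hold at $i = n$ because $G$ is a trigraph over $H$. The inductive step is a direct check from the contraction rules: the common neighborhoods of $u$ and $v$ used to decide black versus red are identical in $G_i$ and $H_i$ by (a), and the contraction rule requires both $ux$ and $vx$ to have been black in $G_i$ for $wx$ to be black in $G_{i+1}$, hence they are black in $H_i$ by (b) and so $wx$ is black in $H_{i+1}$ as well.

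Granted these invariants, the red neighbors of $w$ in $G_i$ split into two disjoint groups: those $x$ for which $wx$ is also red in $H_i$ (at most $d$ by the $d$-sequence of $H$), and the \emph{extra} ones, where $wx$ is red in $G_i$ but \emph{black} in $H_i$. The extra red neighbors are by definition a subset of the black $H_i$-neighbors of $w$, and the key observation is that the latter set has size at most $\Delta(H)$, irrespective of the trigraph decoration. A straightforward induction on the contraction sequence of $H$ shows that $wx$ is black in $H_i$ exactly when the blobs $S_w, S_x \subseteq V(H)$ represented by $w$ and $x$ form a fully $H$-connected pair; fixing any $y \in S_w$, the blobs $S_x$ of all black $H_i$-neighbors $x$ are then pairwise-disjoint non-empty subsets of $N_H(y)$, so their number is at most $\deg_H(y) \le \Delta(H)$. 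This combinatorial bound --- which is really the only content of the proof, the rest being routine verification of the contraction update rules --- yields at most $d + \Delta(H)$ red edges at $w$ in $G_i$, as required.
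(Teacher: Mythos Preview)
Your proof is correct and follows essentially the same approach as the paper: lift a $\tww(H)$-sequence of $H$ to $G$ and bound the red degree in $G_i$ by the total degree in $H_i$, using that the black degree in $H_i$ never exceeds $\Delta(H)$. The paper compresses your invariants and blob argument into the single observation ``the black degree of a vertex never increases in a contraction sequence''; your version unpacks this more explicitly (modulo a harmless index swap in the inductive step, where you write $G_{i+1}$ but mean $G_{i-1}$).
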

\begin{proof}
  Consider a $\tww(H)$-sequence of $H$.
  A simple but important observation is that the black degree of a vertex never increases in a contraction sequence.
  Thus each trigraph of the sequence has total degree at most~$\Delta(H)+\tww(H)$.
  Therefore, when the same sequence is applied to any trigraph over $H$, the overall maximum (red) degree is also bounded by~$\Delta(H)+\tww(H)$. 
\end{proof}

We can now show the following.

\product*
\begin{proof}
We set $d_G := \tww(G)$, $d_H := \tww(H)$, and $\Delta := \Delta(H)$, i.e., the maximum degree of $H$.
Let $G=G_n, \ldots, G_1=K_1$ be a sequence of $d_G$-contraction, and let $[h]$ be the vertex set of $H$, hence $h=|V(H)|$.
We set $d := \max\{(d_G+2)\Delta,d_H+\Delta\}$, and present a $d$-sequence for $G \boxtimes H$.
For a fixed $j \in [h]$, we call \emph{$j$-th copy of $G$}, the vertices $(v,j)$ of $G \boxtimes H$ for every $v \in V(G)$.

First we contract $G \boxtimes H$ to a trigraph over $H$ by a sequence containing as intermediate steps trigraphs over $G_n \boxtimes H, G_{n-1} \boxtimes H, \cdots, G_1 \boxtimes H$. 
Say $G_i$ is obtained from $G_{i+1}$ by contracting $u, v \in V(G_{i+1})$, into vertex $w$, then the part of the $d$-sequence from a trigraph over $G_{i+1} \boxtimes H$ to one over $G_i \boxtimes H$ consists of contracting, in any order, the vertices $(u,j)$ and $(v,j)$, into vertex $(w,j)$, for every $j \in [h]$. 
As the red degree of $w \in V(G_i)$ is at most~$d_G$, vertex $(w,j)$ has red degree at most $d_G(\Delta+1) + 2\Delta$. 
This is because the $j$-th copy of $G$ is linked to the $j'$-th copy only if $j' \in N_H[j]$.
This explains the $d_G(\Delta+1)$ term.
The additional $2\Delta$ accounts for possible red edges between $(w,j)$ and $(\star,j')$, where $\star \in \{u,v,w\}$ and $j' \neq j$. 

We can now finish the $d$-sequence from the obtained trigraph over $K_1 \boxtimes H$, which is isomorphic to $H$, using the $d_H$-sequence of $H$.
Indeed by \cref{lem:degree-tww} this trigraph admits a $d_H+\Delta$-sequence.
\end{proof}

\subproduct*
\begin{proof}
  By \cref{thm:product-stability}, $\mathcal G \boxtimes \mathcal H$ has twin-width bounded by a function of $\tww(\mathcal G)$, $\tww(\mathcal H)$, and $\Delta(\mathcal H)$.
  The implication $(i) \Rightarrow (iv)$ (via $(ii)$) in~\cref{thm:sparseboundedtww} does not require that the bounded twin-width class $\mathcal C$ is hereditary.
  Thus, $\mathcal G \boxtimes \mathcal H$ being $K_{t,t}$-free, the subgraph closure $\sub(\mathcal G \boxtimes \mathcal H)$ has twin-width bounded by a function of $\tww(\mathcal G)$, $\tww(\mathcal H)$, $\Delta(\mathcal H)$, and $t$.
\end{proof}

\begin{lemma}\label{lem:bicliquefreeSparse}
If $G$ is $K_{t,t}$-free, then $G\boxtimes H$ is $K_{s,s}$-free where $s=2t(\Delta(H)+1)$.
\end{lemma}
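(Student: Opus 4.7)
The plan is to prove the contrapositive: if $G \boxtimes H$ contains $K_{s,s}$ as a subgraph, then $G$ contains $K_{t,t}$ as a subgraph.

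Let $(A,B)$ be the bipartition of such a copy of $K_{s,s}$ with $|A|=|B|=s$. The first step is a double pigeonhole on the $H$-coordinates. Fixing any $(x_0,y_0) \in B$, the adjacency to every vertex of $A$ forces the second coordinate of each vertex of $A$ to lie in $N_H[y_0]$, a set of at most $\Delta(H)+1$ elements. Hence some value $y_1 \in N_H[y_0]$ is shared by at least $s/(\Delta(H)+1) = 2t$ vertices of $A$; call this subset $A_1$. Fixing then any $(x_1,y_1) \in A_1$ and arguing symmetrically on $B$, there exist $y_2 \in N_H[y_1]$ and $B_1 \subseteq B$ with $|B_1| \geq 2t$ such that every vertex of $B_1$ has second coordinate $y_2$. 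After discarding vertices we may assume $|A_1|=|B_1|=2t$.

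Let $X, Y \subseteq V(G)$ be the first-coordinate projections of $A_1$ and $B_1$, each of size $2t$. For any $x \in X$ and $x' \in Y$, the vertices $(x,y_1) \in A_1$ and $(x',y_2) \in B_1$ belong to opposite sides of the $K_{s,s}$, so they are distinct and adjacent in $G \boxtimes H$; combined with $y_2 \in N_H[y_1]$, the definition of the strong product forces $x = x'$ or $xx' \in E(G)$. I then claim that one can always choose disjoint subsets $X' \subseteq X$ and $Y' \subseteq Y$ with $|X'| = |Y'| = t$. Indeed, writing $k = |X \cap Y| \leq 2t$, one may put $\max(0, k-t)$ elements of $X \cap Y$ into $X'$ and another disjoint batch of $\max(0, k-t)$ into $Y'$, then complete $X'$ and $Y'$ using elements of $X \setminus Y$ and $Y \setminus X$ respectively; the inequality $2t - k \geq t - \max(0, k-t)$ guarantees enough room on each side. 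For any $x \in X'$, $y \in Y'$, disjointness ensures $x \neq y$, hence $xy \in E(G)$, and $X' \cup Y'$ induces a $K_{t,t}$ in $G$.

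The only real subtlety is the disjoint extraction in the last step, which must accommodate the case where $X$ and $Y$ overlap arbitrarily. This overlap can only occur when $y_1 \neq y_2$; when $y_1 = y_2$, the sets $A_1$ and $B_1$ lie in a common column $V(G) \times \{y_1\}$ and the disjointness of $A$ and $B$ already forces $X \cap Y = \emptyset$, making the extraction trivial. Thus the $K_{t,t}$-free hypothesis on $G$ transfers to a $K_{s,s}$-free conclusion on $G \boxtimes H$ with the stated value $s = 2t(\Delta(H)+1)$, the factor $2t$ being exactly what the pigeonhole-based disjoint extraction requires.
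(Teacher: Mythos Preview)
Your proof is correct and follows essentially the same approach as the paper: both argue by contradiction, use a pigeonhole on the $H$-coordinates to find large subsets $A_1 \subseteq A$ and $B_1 \subseteq B$ concentrated in single copies of $G$, project to $V(G)$, and extract disjoint size-$t$ subsets to form the biclique. Your write-up is in fact more careful than the paper's on the extraction step---the paper simply asserts the existence of $A''$, $B''$ with pairwise distinct first coordinates, while you spell out the case analysis on $k = |X \cap Y|$; the observation that $y_2 \in N_H[y_1]$ is harmless but not actually needed, since adjacency in $G \boxtimes H$ already forces $x = x'$ or $xx' \in E(G)$ regardless of the second coordinates.
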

\begin{proof}
Assume, for the sake of contradiction, that there exist disjoint vertex sets $A, B \subseteq V(G\boxtimes H)$ such that $|A|=|B|=2t(\Delta(H)+1)$ and $A, B$ are fully adjacent. 
Let $V(H) := [h]$ and, let $a \in A$ be the vertex $(v,j)$ for some $v \in V(G)$ and $j \in V(H)$. 
Since $j$ is adjacent with at most $\Delta(H)$ vertices of $H$, and $(u,i)$ is adjacent with $(v,j)$ only if $i=j$ or $ij \in E(H)$, $B$ is contained in the union of at most $\Delta(H)+1$ copies of $G$.
This means that there exists some $j^* \in [h]$ such that the $j^*$-th copy of $G$ contains a set $B'$ of at least $2t$ vertices of $B$.
Likewise, there is an $i^* \in [h]$ such that the $i^*$-th copy of $G$ contains a set $A'$ of at least $2t$ of $A$. 
Let $A'' \subseteq A'$ and $B''\subseteq B'$ be vertex sets of size $t$ such that the first coordinates of the vertices in $A'' \cup B''$ are pairwise distinct.
Then the vertex subset of $V(G)$ which appears as the first coordinates in $A'' \cup B''$ form a $K_{t,t}$, a contradiction.
\end{proof}

\cref{thm:product-stability-class,lem:bicliquefreeSparse} imply that flat classes have bounded twin-width, since bounded treewidth classes have sparse bounded twin-width (they are $K_{t,t}$-free and have bounded twin-width).
In particular, it provides an alternative proof that planar graphs have bounded twin-width (see~\cite[Section 6]{twin-width1}).
The obtained bound remains bad since we still need to use \cref{thm:gridtheorem} to justify that the subgraph closure of a $K_{t,t}$-free bounded twin-width class has bounded twin-width.  

\subsection{Classes with bounded queue or stack number}\label{sec:bounded-queue-stack}

A pair of edges $uv$ and $xy$ is said \emph{independent} if $u,v,x,y$ are four distinct vertices.
An independent pair of edges $uv$ and $xy$ is \emph{nested} with respect to a linear ordering $\sigma$ of the vertex set, if $u \preceq_\sigma x \preceq_\sigma y \preceq_\sigma v$, and \emph{overlaps} if $u \preceq_\sigma x \preceq_\sigma v \preceq_\sigma y$.  
A \emph{queue} (resp.~\emph{stack}) \emph{layout} of a graph $G$ is a linear ordering $\sigma$ of $V(G)$ and a partition of $E(G)$ into $t$ parts, called queues (resp.~stacks), such that no independent pair of edges within the same part is nested (resp.~overlaps) with respect to $\sigma$.
The \emph{queue number} (resp.~\emph{stack number}) is defined as the minimum integer $t$ such that such a queue layout (resp.~stack layout) exists. 

\begin{lemma}\label{lem:quack}
  Let $\sigma$ be a linear ordering on the vertex set of a graph $G$.
  If $G$ admits an edge partition into $t$ parts such that each part forms a queue (resp.~a stack) with respect to $\sigma$, then the adjacency matrix $A_\sigma(G)$ is $2(t+1)$-grid free.
\end{lemma}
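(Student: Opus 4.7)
The plan is to prove the contrapositive of each half: if $A_\sigma(G)$ admits a $d$-grid minor with $d=2(t+1)$, then $G$ contains $t+1$ pairwise independent and pairwise nested edges under $\sigma$ (which rules out a $t$-queue layout) and, separately, $t+1$ pairwise independent and pairwise crossing edges (which rules out a $t$-stack layout). The pairwise-independence requirement is crucial, because the queue and stack conditions only forbid \emph{independent} nested or overlapping pairs; so producing many nested or crossing edges that share endpoints would be insufficient.

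For the setup I would fix such a grid minor, with row intervals $R_1<\cdots<R_d$ and column intervals $C_1<\cdots<C_d$ partitioning $V(G)$ in $\sigma$-order, and for each cell $(i,j)$ pick an edge $e_{i,j}=\{u_{i,j},v_{i,j}\}$ with $u_{i,j}\in R_i$ and $v_{i,j}\in C_j$. Write $r_k=\max R_k$ and $c_k=\max C_k$. Using the symmetry of $A_\sigma(G)$ I can swap the roles of rows and columns, so I would assume without loss of generality $r_{t+1}\leq c_{t+1}$, and then restrict attention to the ``top-right'' quadrant of cells with $i\in[1,t+1]$ and $j\in[t+2,d]$. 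On this quadrant every chosen edge satisfies $u_{i,j}\leq r_{t+1}\leq c_{t+1}<v_{i,j}$, so all $u$-endpoints lie in $[1,r_{t+1}]$ and all $v$-endpoints in $[c_{t+1}+1,n]$; these two disjoint segments of $[n]$ will be what delivers pairwise independence throughout.

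For the queue case I would select the anti-diagonal of this quadrant, namely the cells $(i,d+1-i)$ for $i=1,\dots,t+1$. Writing $u_i:=u_{i,d+1-i}$ and $v_i:=v_{i,d+1-i}$, the sequence $(u_i)$ is strictly increasing (disjoint, left-to-right $R_i$'s) and the sequence $(v_i)$ is strictly decreasing (disjoint, right-to-left $C_{d+1-i}$'s); together with $u_i<v_i$ this yields the chain $u_1<u_2<\cdots<u_{t+1}<v_{t+1}<\cdots<v_1$, which is exactly $t+1$ pairwise nested edges, and pairwise independence is automatic from the $u$-range versus $v$-range separation. For the stack case I would take instead the main diagonal of the same quadrant, $(i,t+1+i)$ for $i=1,\dots,t+1$: now both $(u_i)$ and $(v_i)$ are strictly increasing, but the gap $r_{t+1}<c_{t+1}+1$ still forces $u_j\leq r_{t+1}<v_i$ whenever $i<j$, giving the crossing pattern $u_i<u_j<v_i<v_j$, again with pairwise independence inherited from the range separation.

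The main obstacle I anticipate is precisely the independence subtlety: a naive anti-diagonal argument over the whole $d$-grid (followed by a sign pigeonhole on $u_i<v_i$ versus $u_i>v_i$) produces nested or crossing edges that may share endpoints and hence does not contradict either kind of $t$-layout. The midpoint case distinction $r_{t+1}\leq c_{t+1}$ (versus its symmetric counterpart, handled by swapping the roles of rows and columns and using the ``bottom-left'' quadrant) is what lets me retreat into a quadrant in which $u$-endpoints and $v$-endpoints occupy disjoint intervals of $[n]$; this is also where the factor $2$ in the bound $2(t+1)$ is used, as it creates enough room to locate such a separator.
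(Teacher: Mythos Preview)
Your proposal is correct and follows essentially the same approach as the paper's proof: both restrict to the top-right $(t+1)\times(t+1)$ quadrant of cells after observing (in your language) that $r_{t+1}\le c_{t+1}$ can be assumed, and then select the anti-diagonal of that quadrant for the queue case and its main diagonal for the stack case. The paper phrases the WLOG step as ``one off-diagonal zone of the $2\times 2$ coarsening does not cross the main diagonal,'' which is exactly your inequality $r_{t+1}\le c_{t+1}$; your explicit emphasis on pairwise independence via the disjoint ranges $[1,r_{t+1}]$ and $[c_{t+1}+1,n]$ is a helpful clarification that the paper leaves implicit.
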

\begin{proof}
  Assume, for the sake of contradiction, that $A_{\sigma}(G)$ has a $2(t+1)$-grid minor $(\mathcal R, \mathcal C) := (\{R_1, \ldots, R_{2t+2}\}, \{C_1, \ldots, C_{2t+2}\})$, i.e., each zone $R_i \cap C_j$ contains an entry 1.
  Let us consider the $2 \times 2$ coarsening of $(\mathcal R, \mathcal C)$ where each part contains the first/last $t+1$ row/column parts of $(\mathcal R, \mathcal C)$.
  At least one of the two off-diagonal zones of this coarsening does \emph{not} cross the main diagonal of $A_{\sigma}(G)$.
  Without loss of generality, let us assume that all the vertices of $R_1, \ldots, R_{t+1}$ precedes all the vertices of $C_{t+2}, \ldots, C_{2t+2}$ in the order $\sigma$. 
  Now for each $i \in [t+1]$, choose one edge $u_iv_i$ from the zone $R_{t+2-i} \cap C_{t+1+i}$. 
  From the previous observation, we know that $u_i \preceq_{\sigma} v_i$ for each $i$.
  With respect to $\sigma$, the vertices $u_i$ (for $i$ going from 1 to $t+1$) form a decreasing sequence in $\sigma$ while the vertices $v_i$ form an increasing sequence. 
  Therefore the chosen $t+1$ edges are pairwise nested, contradicting that $G$ admits an edge partition into $t$ queues with respect to $\sigma$.

  For the stack number, we choose $t+1$ edges $u_iv_i$ from the zones $R_i \cap C_{t+1+i}$ for $i \in [t+1]$.
  They pairwise overlap, and thus contradict that there is a partition into $t$ stacks with that vertex ordering.
\end{proof}

The following is a direct consequence of \cref{thm:gridtheorem,lem:quack}.

\begin{theorem}\label{thm:quack}
  Classes with queue or stack number $t$ have twin-width bounded by $2^{2^{O(t)}}$.
\end{theorem}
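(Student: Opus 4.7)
The plan is to chain \cref{lem:quack} with the second item of \cref{thm:gridtheorem} in a one-paragraph argument. First I would fix a graph $G$ of queue (resp.\ stack) number $t$, and let $\sigma$ together with an edge partition into $t$ queues (resp.\ stacks) be a witnessing layout. Applying \cref{lem:quack} directly gives that the adjacency matrix $A_\sigma(G)$ is $2(t+1)$-grid free.

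The bridging step is to upgrade grid-freeness to mixed-freeness. By \cref{lem:corner}, every mixed zone contains a $0,1$-corner, and in particular at least one entry~$1$. Consequently any $d$-mixed minor of a $0,1$-matrix is automatically a $d$-grid minor, so $d$-grid-freeness implies $d$-mixed-freeness. Hence $A_\sigma(G)$ is $2(t+1)$-mixed free.

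Finally I would invoke the second item of \cref{thm:gridtheorem} with alphabet size $\alpha = 2$, giving twin-width at most $4 c_{2(t+1)} \cdot 2^{4 c_{2(t+1)} + 2}$. Since $c_d = (8/3)(d+1)^2 2^{4d}$ is single-exponential in $d$, this bound is $2^{2^{O(t)}}$ in $t$, as claimed. Because $\sigma$ orders both rows and columns of the symmetric matrix $A_\sigma(G)$ identically, the sequence produced is symmetric and yields the twin-width bound on $G$ itself rather than merely on the matrix. There is no real obstacle to overcome: the whole content sits in \cref{lem:quack} (combinatorial reading of the layout as a constraint on grid minors) and in the grid theorem (already proved in the first paper), so this theorem is genuinely the direct consequence announced just before its statement.
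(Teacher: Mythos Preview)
Your proposal is correct and matches the paper's approach exactly: the paper simply states that the theorem is a direct consequence of \cref{thm:gridtheorem,lem:quack}, and you have spelled out the (implicit) bridging step that $d$-grid free implies $d$-mixed free for $0,1$-matrices. The only addition you make is the remark about symmetry ensuring the bound transfers from the matrix to the graph, which is a valid but minor clarification the paper leaves unsaid.
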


\section{Twin-width of finitely generated groups}\label{sec:groups}

We investigate here an algebraic approach to constructing small graph classes.
Let $\Gamma$ be a (multiplicative) countable group where the identity is denoted by $\varepsilon$.
We assume that $\Gamma$ is generated by a finite set $S$.
We form the Cayley graph $\cay(\Gamma,S)$ which has vertex set $\Gamma$ and edge set all pairs $\{x,x \cdot s\}$ where $x \in \Gamma$ and $s \in S$. 

For example when $\Gamma$ is the free group generated by $S=\{a,b\}$, the graph $\cay(\Gamma,S)$ is the infinite tree where all the vertices have degree~4.
Furthermore, if we quotient $\Gamma$ by the relation $aba^{-1}b^{-1}=\varepsilon$, we obtain the infinite two-dimensional grid.
Both trees and grids are examples of classes with bounded twin-width.
Thus a natural question is whether this could hold for all finitely generated groups.
Let us denote by $F(\Gamma, S)$ the set of all finite induced subgraphs of $\cay(\Gamma,S)$.
Our main question in this section is the following.

\begin{conjecture}\label{conj:fgg}
For every group $\Gamma$ generated by a finite set $S$, the class $F(\Gamma, S)$ has bounded twin-width.
\end{conjecture}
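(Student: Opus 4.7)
The plan is to attack \cref{conj:fgg} by exploiting the geometric and algebraic structure of $\cay(\Gamma,S)$, handling families of groups of different growth types before attempting a uniform argument. A key observation is that $\cay(\Gamma,S)$ is vertex-transitive and has bounded degree, so any finite induced subgraph $H$ inherits both local homogeneity and a bounded total degree; the task is only to find a contraction ordering whose red degree depends on $|S|$ but not on $|V(H)|$.

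First I would handle virtually nilpotent groups, equivalently (by Gromov's theorem) groups of polynomial growth. For the base case $\Gamma=\mathbb{Z}^d$, I would order any finite induced subgraph $H$ of $\cay(\Gamma,S)$ lexicographically by the integer coordinates of its vertices. Because $\cay(\Gamma,S)$ has bounded degree and edges only connect points at bounded $\ell^\infty$-distance, the resulting adjacency matrix would be $r$-grid free for $r$ depending only on $|S|$ and $d$, and \cref{thm:gridtheorem} would then yield bounded twin-width. Lifting to a general nilpotent group I would handle by induction on the nilpotency class, using an ordering that respects the lower central series and absorbs the commutator skew into a slightly larger grid-freeness constant; the passage from virtually nilpotent to nilpotent is standard.

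Second I would handle hyperbolic groups, of which free groups are a trivial instance since their Cayley graphs are trees and finite induced subtrees have twin-width at most~$2$. For a general hyperbolic $\Gamma$, thin triangles suggest that any finite induced subgraph $H$ admits a tree-decomposition whose bags are induced subgraphs of balls of bounded radius in $\cay(\Gamma,S)$. Combining such a decomposition with \cref{thm:product-stability-class} applied to the strong product of a tree and a bounded-radius piece of $\cay(\Gamma,S)$ (which has bounded diameter hence bounded twin-width) should give bounded twin-width for $F(\Gamma,S)$.

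The main obstacle --- and the reason this remains a conjecture --- is the large middle ground of groups of exponential growth that are neither virtually solvable nor hyperbolic, most strikingly those whose Cayley graphs contain expanders (such as $\mathrm{SL}_2(\mathbb{Z}/p\mathbb{Z})$ under suitable generating sets). \cref{sec:expanders} shows that \emph{some} expanders have bounded twin-width, which is evidence in favor of the conjecture, but that construction is specific to iterated $2$-lifts of $K_4$ and does not obviously generalize. The natural idea for the general case is to use vertex-transitivity: a single local contraction rule such as ``merge $x$ with $xs$ for a carefully chosen generator $s$'' turns $\cay(\Gamma,S)$ into something resembling a Cayley graph of a quotient of $\Gamma$, and iterating this group-level coarsening might give a $D$-sequence. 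The hard part is that for an arbitrary finite induced subgraph $H \in F(\Gamma,S)$, the quotients of $H$ are not themselves induced subgraphs of a simpler Cayley graph, and the boundary interaction between the local contraction rule and $V(H)$ inside $\cay(\Gamma,S)$ is exactly where I expect the approach to require a genuinely new idea.
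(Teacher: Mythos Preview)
The statement you are addressing is \cref{conj:fgg}, which the paper explicitly poses as an \emph{open conjecture}; there is no proof in the paper to compare against. The paper's contribution in \cref{sec:groups} is supporting evidence only: that $F(\Gamma,S)$ is small, that bounded twin-width would be a group invariant independent of $S$, and that a positive answer would transfer to a mixed-freeness order on the infinite $\cay(\Gamma,S)$. You correctly recognise the status of the statement and present a research plan rather than a proof, so in that sense there is no error to flag.

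That said, two of the partial approaches you sketch have real gaps beyond what you acknowledge. For virtually nilpotent groups, the $\mathbb{Z}^d$ case is fine (and is already covered, since finite induced subgraphs of $\cay(\mathbb{Z}^d,S)$ embed as induced subgraphs of $d$-dimensional grids, known to have bounded twin-width), but the inductive step ``absorb the commutator skew into a slightly larger grid-freeness constant'' is not an argument; the lower central series gives an algebraic filtration, not an obvious vertex ordering, and nothing you say explains why the interaction between cosets stays grid-free. For hyperbolic groups, the assertion that thin triangles yield a tree-decomposition of an \emph{arbitrary} finite induced subgraph into bounded-radius bags is false as stated: a large ball in $\cay(\Gamma,S)$ is itself a finite induced subgraph, it has exponentially many vertices, and hyperbolicity gives no bound on its treewidth or any product structure of the kind \cref{thm:product-stability-class} needs. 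Hyperbolicity controls geodesics, not induced subgraphs. Your identification of the genuinely hard case (non-hyperbolic exponential growth, expander Cayley graphs) is accurate, but the ``easy'' cases are less settled than your outline suggests.
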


This is a far-reaching generalization of the case of trees and grids.
It could provide some insights on both the structure of finite induced subgraphs of $\cay(\Gamma,S)$, but also in the global (infinite) structure of $\cay(\Gamma,S)$ as illustrated by the following result.

\begin{proposition}
If all the finite induced subgraphs of an infinite (possibly uncountable) graph $G$ have twin-width at most $t$, then there is a linear order $L$ on $V(G)$ such that the adjacency matrix of~$G$, ordered by~$L$, has no $f(t)$-mixed minor.
\end{proposition}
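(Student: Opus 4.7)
The plan is a compactness argument: I use \cref{thm:gridtheorem} on every finite induced subgraph to obtain a local linear order whose adjacency matrix is $(2t+2)$-mixed free, and then glue these local orders into a global one via an ultrafilter, aiming for $f(t) := 2t+2$. Concretely, for every finite $F \subseteq V(G)$, the hypothesis gives $\tww(G[F]) \le t$, so the first item of \cref{thm:gridtheorem} yields a linear order $L_F$ on $F$ (the one compatible with a $t$-contraction sequence of $G[F]$) such that $A_{L_F}(G[F])$ is $(2t+2)$-mixed free.

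To glue, let $\mathcal U$ be an ultrafilter on $[V(G)]^{<\omega}$ extending the filter generated by the sets $\{F : F_0 \subseteq F\}$ for $F_0$ finite (these sets have the finite intersection property since $\{F : F_0 \subseteq F\} \cap \{F : F_1 \subseteq F\} \ni F_0 \cup F_1$). Define $<_L$ on $V(G)$ by $u <_L v$ if and only if $\{F : \{u,v\} \subseteq F$ and $u <_{L_F} v\} \in \mathcal U$. Standard ultrafilter bookkeeping shows $<_L$ is irreflexive, total, and transitive. For any finite $F_0 \subseteq V(G)$, the finite set of linear orders on $F_0$ partitions the set $\{F : F_0 \subseteq F\} \in \mathcal U$, so exactly one order, say $\pi$, satisfies $\{F \supseteq F_0 : L_F|_{F_0} = \pi\} \in \mathcal U$; a pairwise comparison of the definitions shows $L|_{F_0} = \pi$, and in particular some $F \supseteq F_0$ realizes $L_F|_{F_0} = L|_{F_0}$.

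To conclude, suppose for contradiction that $A_L(G)$ admits a $(2t+2)$-mixed minor with row-parts $R_1, \ldots, R_{2t+2}$ and column-parts $C_1, \ldots, C_{2t+2}$. By \cref{lem:corner}, each of the $(2t+2)^2$ zones contains a $0,1$-corner; let $F_0$ be the finite set of vertices indexing the rows and columns of these corners. Each $R_i \cap F_0$ and $C_j \cap F_0$ is a nonempty interval in $L|_{F_0}$, and because the two rows (respectively columns) of a corner are consecutive in $L$ and both belong to $F_0$, they remain consecutive in $L|_{F_0}$, so the corner survives inside the corresponding zone of $A_{L|_{F_0}}(G[F_0])$. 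Hence $A_{L|_{F_0}}(G[F_0])$ would admit a $(2t+2)$-mixed minor. But by the previous paragraph there is $F \supseteq F_0$ with $L_F|_{F_0} = L|_{F_0}$, so $A_{L|_{F_0}}(G[F_0])$ sits as a submatrix inside the $(2t+2)$-mixed free matrix $A_{L_F}(G[F])$, a contradiction.

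The hard part is not the ultrafilter machinery, which is routine, but the last submatrix/mixed-minor bookkeeping: passing from the infinite matrix to a finite witness, one must choose $F_0$ to contain both endpoints of every chosen corner, otherwise row or column parts could become empty in the restriction or two corner rows could cease to be consecutive, breaking the reduction. Once this is arranged, the argument gives $f(t) = 2t+2$, which is optimal up to a constant in view of the second item of \cref{thm:gridtheorem}.
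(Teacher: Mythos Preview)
Your approach is the same compactness idea as the paper's --- the paper packages it as an ultraproduct of the ordered structures $(G[F],L_F)$ and invokes \L{}o\'s's theorem (observing that ``$(2t+2)$-mixed free'' is a first-order sentence in the language with $E$ and $<$), then embeds $G$ as an induced subgraph of the ultraproduct and restricts the order; you build $L$ directly from the very same ultrafilter and argue mixed-freeness by hand. The more elementary route is fine, and making $f(t)=2t+2$ explicit is a nice touch.

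There is, however, a real slip in your last paragraph. You invoke \cref{lem:corner} to find a corner in each zone of the hypothetical mixed minor of $A_L(G)$ and then use that the two rows (and two columns) of each corner are \emph{consecutive in $L$}. But $L$ is a linear order on an arbitrary, possibly uncountable, set, obtained as an ultrafilter limit; nothing forces it to be discrete, and if $L$ is dense there are no consecutive elements at all, hence no corners --- \cref{lem:corner} is stated and proved for finite matrices. The fix is straightforward and actually shortens your argument: mixedness of a zone $R_i\times C_j$ is witnessed by finitely many vertices (say $r_0,r_1,r_2\in R_i$ and $c_0,c_1,c_2\in C_j$ directly witnessing ``not horizontal'' and ``not vertical''). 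Let $F_0$ collect all these witnesses; then each zone $(R_i\cap F_0)\times(C_j\cap F_0)$ still contains its witnesses, hence is mixed, and the whole consecutiveness bookkeeping can be dropped. Your final sentence --- that a mixed minor in the submatrix $A_{L|_{F_0}}(G[F_0])$ contradicts $(2t+2)$-mixed freeness of $A_{L_F}(G[F])$ --- also deserves one more line: keep the same cut vertices in $F$; every zone of the resulting division of $F$ contains the corresponding mixed zone of $F_0$, and a zone containing a mixed sub-zone is itself mixed. (The paper uses this same hereditary fact implicitly when it restricts the ultraproduct's order to $G$.)
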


\begin{proof}
  Let $F(G)$ the class of finite non-empty induced subgraphs of~$G$.
  We assume that graphs in $F(G)$ have twin-width at most $t$, hence there exists an integer $f(t)$ such that any $H \in F(G)$ has a linear order~$L_H$ such that the adjacency matrix of $H$ has no $f(t)$-mixed minor.
  Let $\mathcal{M}_H$ be the logical structure formed by $H$ equipped with the order $L_H$.
  The proof proceeds in two parts.
  First, we will build an ultraproduct of $(\mathcal{M}_H)_{H \in F(G)}$, following a standard construction used, for example, to prove the compactness theorem.
  We will then show that $G$ is an induced subgraph of this ultraproduct.
  The order $L$ on $V(G)$ is then obtained by restriction of the order on the ultraproduct.

  For $H \in F(G)$, let $\uparrow\! H = \{H' \in F(G)\ |\ V(H) \subseteq V(H')\}$ be its upward closure in $F(G)$.
  The family of all $\uparrow\! H$ generates a proper filter on $F(G)$, which is contained in some ultrafilter~$U$.
  Let $\mathcal{M}' = \prod_{H \in F(G)} \mathcal{M}_H / U$ be the corresponding ultraproduct.
  By \L{}o\'s's theorem, any first-order formula satisfied by every $\mathcal M_H$ is also satisfied by $\mathcal M'$.
  Being a linear order, and being $f(t)$-mixed free with respect to that order can both be expressed in first-order logic, hence $\mathcal M'$ is an infinite graph equipped with a linear order for which it is $f(t)$-mixed free.

  Let us show that $G$ is an induced subgraph of $\mathcal M'$.
  For $v \in V(G)$, choose $\bar{v} \in \prod_{H \in F(G)} \mathcal{M}_H$ to be a tuple ``equal to $v$ when possible'', that is $\bar{v}(H) = v$ when $V(H) \ni v$ (and unconstrained if $V(H) \not\ni v$).
  We then map $v$ to the equivalence class of $\bar{v}$, which is a vertex in $\mathcal M'$.
  This mapping is injective: If $u \neq v$, then $\bar{u}(H) \neq \bar{v}(H)$ for any $H$ such that $u,v \in V(H)$, i.e., whenever $H \in \, \uparrow\!  G[\{u,v\}]$.
  Since $\uparrow\! G[\{u,v\}]$ is an element of $U$, by \L{}o\'s's theorem, $\bar{u}$ and $\bar{v}$ are not equated in $\mathcal M'$.
  The same arguments show that this mapping preserves edges and non-edges.
  Hence $G$ is an induced subgraph of $\mathcal M'$, and it follows that $G$ is $f(t)$-mixed free for the linear order on $\mathcal M'$ restricted to $G$.
\end{proof}

We suspect that bounded-degree Cayley graphs have bounded twin-width since they form a small class.
\begin{lemma}
The class $F(\Gamma, S)$ is small.
\end{lemma}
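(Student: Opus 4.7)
The plan is to bound the connected members of $F(\Gamma, S)$ first, then invoke the exponential formula for labeled structures to handle disconnected graphs. Throughout, I may assume $\Gamma$ infinite, since a finite $\Gamma$ makes $F(\Gamma, S)$ itself finite and hence trivially small.

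First, I would record that $\cay(\Gamma, S)$ has maximum degree at most $\Delta := 2|S|$. Combined with the classical combinatorial fact that, in any graph of maximum degree $\Delta$, the number of connected $n$-vertex subsets containing a fixed vertex is at most $c_0^n$ for some constant $c_0 = c_0(\Delta)$ (e.g.\ $c_0 = e\Delta$, via a spanning-tree/DFS encoding), this bounds by $c_0^n$ the number of connected $n$-subsets of $\Gamma$ containing the identity $\varepsilon$.

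Next, let $g_n$ count the connected graphs on vertex set $[n]$ in $F(\Gamma, S)$. For each such $H$, I pick a realization $f \colon [n] \hookrightarrow \Gamma$ as an induced subgraph of $\cay(\Gamma, S)$, and left-translate by $f(1)^{-1}$ to normalize $f(1) = \varepsilon$. The image $f([n])$ is then one of the at most $c_0^n$ connected $n$-subsets containing $\varepsilon$, and the restriction of $f$ to $\{2, \dots, n\}$ is a bijection onto $f([n]) \setminus \{\varepsilon\}$, contributing at most $(n-1)!$ further choices. Hence $g_n \leq c_0^n (n-1)!$.

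Finally, let $f_n$ count all graphs on $[n]$ in $F(\Gamma, S)$. Every such graph decomposes canonically into its connected components, each of which lies again in $F(\Gamma, S)$ (the class being closed under induced subgraphs). The standard exponential formula for labeled structures therefore yields $F(x) \leq \exp(G(x))$ between the exponential generating functions $F(x) := \sum_n f_n x^n / n!$ and $G(x) := \sum_{n \geq 1} g_n x^n / n!$. The bound on $g_n$ translates to $G(x) \leq \sum_{n \geq 1} (c_0 x)^n / n = -\ln(1 - c_0 x)$, whence $F(x) \leq 1/(1 - c_0 x) = \sum_{n \geq 0} c_0^n x^n$. Comparing coefficients gives $f_n \leq n!\, c_0^n$, which is exactly smallness. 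The only non-routine ingredient is the classical bounded-degree subgraph count; the rest is formal manipulation plus the observation that $F(\Gamma, S)$ is the induced-subgraph class of a vertex-transitive graph, so the same proof works verbatim for finite induced subgraphs of any locally finite vertex-transitive graph.
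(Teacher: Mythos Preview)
Your proof is correct. Both your argument and the paper's rest on the same core fact: $\cay(\Gamma,S)$ is vertex-transitive of degree at most $2|S|$, so after normalizing a basepoint there are only exponentially many connected $n$-vertex induced subgraphs. The paper packages this differently. Rather than citing the lattice-animal bound and then invoking the exponential formula for the disconnected case, it directly encodes each graph of $F(\Gamma,S)$ by a rooted spanning forest on $[n]$ whose edges are oriented and labeled by generators in $S$, and then applies Cayley's formula for labeled forests to obtain the explicit bound $(2|S|(n+1))^{n-1}$ in one stroke. Your route is more modular and, as you note, transparently generalizes to finite induced subgraphs of any locally finite vertex-transitive graph; the paper's route is shorter and gives an explicit constant. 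Under the hood the two coincide: the spanning-forest encoding is precisely a constructive proof of the bounded-degree connected-subset count you quote, and counting labeled forests via Cayley's formula plays the same role as your exponential-formula step.
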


\begin{proof}
  Let us consider a finite induced subgraph $G$ of $F(\Gamma, S)$.
  We first assume that $G$ is connected.
  To describe $G$, it suffices to give a rooted spanning oriented tree $T$ in $G$ where each oriented edge $uv$ of $T$ is labeled by the generator $s$ in $S$ such that $u \cdot s=v$.
  To retrieve $G$ from $T$, one just has to fix the root of $T$ as $\varepsilon$ and deduce the set $V(G)$ of all the vertices by following the edges of $T$.
  The graph $G$ is then isomorphic to the subgraph of $\cay(\Gamma,S)$ induced by $V(G)$.
  Indeed, observe that this does not depend on the choice of $\varepsilon$, as any choice for mapping the root would be equivalent via multiplying to the left by some factor, which constitutes an automorphism of $\cay(\Gamma,S)$.

  If $G$ is not connected, we consider each connected component separately.
  In particular, the number of labeled graphs on vertex set $[n]$ that belong to the class $F(\Gamma, S)$ is at most the number of rooted forests whose edges are oriented and labeled by $|S|$ colors.
  By Cayley's formula there are $(n+1)^{n-1}$ labeled rooted forests on $n$ vertices~\cite{Cayley89}.
  Thus $F(\Gamma, S)$ has size at most~$(2|S|(n+1))^{n-1}$, hence this class is small.
\end{proof}

Should the small conjecture be true, $F(\Gamma, S)$ would have bounded twin-width.
We finally observe that having bounded twin-width is a group invariant, i.e., does not depend on the choice of the finite generating set $S$. 

\begin{lemma}
If $S$ and $S'$ are two finite generating sets of the group $\Gamma$, then $F(\Gamma, S)$ has bounded twin-width if and only if $F(\Gamma, S')$ has bounded twin-width.
\end{lemma}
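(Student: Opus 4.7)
The plan is to realize $F(\Gamma, S')$ as a first-order transduction of $F(\Gamma, S)$ and then invoke \cref{thm:transduction}; by symmetry only one direction is needed.

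Since $S$ generates $\Gamma$ and $S'$ is finite, there exists an integer $k$ such that every $s' \in S'$ admits a factorization $s' = s_1 \cdots s_m$ of length $m \leq k$ with $s_i \in S \cup S^{-1}$; fix one such factorization once and for all for every $s' \in S'$. Given an arbitrary $H \in F(\Gamma, S')$ with $V(H) \subseteq \Gamma$, set $\tilde H := \cay(\Gamma, S)[B_{k}(V(H))]$, where $B_k$ denotes the $k$-ball taken in $\cay(\Gamma, S)$. Since $\cay(\Gamma, S)$ has maximum degree at most $2|S|$, the set $B_k(V(H))$ is finite, and thus $\tilde H$ belongs to $F(\Gamma, S)$. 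It remains to recover $H$ from $\tilde H$ via a single fixed FO transduction.

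The obstacle is that the mere graph $\cay(\Gamma, S)$ does not FO-distinguish which generator labels a given edge—for instance in a free group the Cayley graph is a regular tree whose vertex-stabilizing automorphisms permute the generators arbitrarily. I bypass this with two bounded auxiliary unary colorings. First, greedily produce a proper distance-$2$ coloring $c : V(\tilde H) \to [N]$, which exists with $N \leq (2|S|)^{2}+1$ colors. Second, for each $v$ such that every $vs$ lies in $V(\tilde H)$, tag $v$ with the tuple $T(v) := \bigl(c(v),\, (c(vs))_{s \in S \cup S^{-1}}\bigr)$, which takes at most $N^{2|S|+1}$ values and is therefore encoded by boundedly many unary predicates. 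A third unary predicate $U$ marks $V(H)$. The FO transduction non-deterministically guesses $c$, $T$, $U$, and defines the output edge relation on $U$ by: $xy$ is an edge iff, for some fixed factorization $s' = s_1 \cdots s_m$ of some element of $S' \cup S'^{-1}$, there is a walk $x = v_0, v_1, \ldots, v_m = y$ in $\tilde H$ such that each $v_{\ell+1}$ is the neighbor of $v_\ell$ whose $c$-color equals the coordinate of $T(v_\ell)$ indexed by $s_{\ell+1}$ (this neighbor is unique by the distance-$2$ properness of $c$). For the ``correct'' choice of $c$, $T$, and $U$, the formula recovers exactly the edge set of $H$; for incorrect choices the transduction produces some graph that we do not care about.

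Hence $F(\Gamma, S')$ is contained in the FO transduction of $F(\Gamma, S)$, and \cref{thm:transduction} yields the conclusion. The key technical ingredient is the tuple coloring $T$, which locally encodes the group action in bounded unary information, thereby letting a fixed FO formula simulate, step by step, the walks $x \mapsto x s_1 \mapsto \cdots \mapsto x s'$ in $\cay(\Gamma, S)$ that witness each edge of $H$.
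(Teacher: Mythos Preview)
Your proof is correct and follows the same overall strategy as the paper---realize $F(\Gamma,S')$ inside an FO transduction of $F(\Gamma,S)$ and invoke \cref{thm:transduction}---but the key technical device differs. The common obstacle is that the undirected graph $\cay(\Gamma,S)$ does not tell you which generator labels a given edge. The paper resolves this by first passing to the oriented edge-labelled class $\OLF(\Gamma,S)$: it adjoins to each $G\in F(\Gamma,S)$ a linear order $L_G$ compatible with a contraction sequence (so that $(G,L_G)$ still has bounded twin-width), uses $L_G$ to order the incident edges at each vertex, and then spends $2^d|S|^d$ vertex colours to record the orientation and label of each incident edge relative to that local order; the transduction from $\OLF(\Gamma,S)$ to $\OLF(\Gamma,S')$ is then immediate by following labelled oriented paths. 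Your route avoids the auxiliary order entirely: a proper distance-$2$ colouring already distinguishes the neighbours of any vertex, and your tuple $T(v)$ records which colour corresponds to which generator. This keeps the argument within simple graphs plus unary predicates, which is slightly more elementary than the paper's detour through birelations $(G,L_G)$. Both arguments implicitly enlarge the vertex set to a $k$-ball so that the intermediate vertices of the $S$-walks witnessing an $S'$-edge are present; you make this explicit, the paper leaves it tacit in the sentence ``$\OLF(\Gamma,S')$ is contained in an FO transduction of $\OLF(\Gamma,S)$''.
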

\begin{proof}
  Let us assume that $F(\Gamma, S)$ has bounded twin-width.
  The first step is to show that a more general object has bounded twin-width.
  Namely, let us consider the oriented labeled Cayley graph $\OLCay(\Gamma,S)$ where every edge $\{x,x \cdot s\}$ is furthermore oriented from $x$ to $x \cdot s$ and labeled by $s$.
  Note that the class $\OLF(\Gamma, S)$ of all finite induced restrictions of $\OLCay(\Gamma,S)$ is contained is the (more general) class $\mathcal C$ of all orientations of graphs of $F(\Gamma, S)$ which are edge-labeled by $S$.
  The key fact is that $\mathcal C$ has bounded twin-width.
  Indeed, given any class of graphs $\mathcal G$ with degree at most~$d$ and twin-width at most $t$, the class ${\mathcal G}_s$ consisting of $\{1,\dots ,s\}$ edge-labeled orientations of graphs of $\mathcal G$ also has bounded twin-width.
  To see this, let us consider an element $O$ of ${\mathcal G}_s$ which is an oriented edge-labeled graph $G$ of ${\mathcal G}$.
  We just have to show that we can interpret $O$ in terms of $G$.
  To start with, we consider for $G$ a linear order $L_G$ of its vertices, such that the adjacency matrix of $G$, ordered by $L_G$, has twin-width at most~$f(t)$.
  When closed under induced restrictions, the class of birelations $(G,L_G)$ has bounded twin-width.
  Since the order $L_G$ provides for every vertex an order on its incident edges, we can furthermore label the vertices of $(G,L_G)$ using $2^d$ colors in order to code for every vertex $v$ how the (at most) $d$ edges incident to it are oriented.
  Therefore the class of orientations of $\mathcal G$ can be interpreted from the class of $(G,L_G)$ vertex-labeled by $2^d$ colors, and thus has bounded twin-width.
  For the edge-labeled version, we just have to label the vertices with $2^d|S|^d$ colors. 
  To conclude the proof, we observe that since every generator $s' \in S'$ can be expressed with $S$, the class $\OLF(\Gamma, S')$ is contained in an FO transduction of $\OLF(\Gamma, S)$.
  Therefore, by \cref{thm:transduction}, $\OLF(\Gamma, S')$, and thus $F(\Gamma, S')$, has bounded twin-width.
\end{proof}

Therefore, if the small conjecture does not hold, the class of finitely generated groups splits into bounded twin-width groups and unbounded twin-width groups.
This could reflect a known dichotomy for groups.
A natural candidate for a finitely generated group of unbounded twin-width, would be a group with no finite presentation.
For instance the lamplighter group is an interesting test case, but its associated class of graphs has indeed bounded twin-width.
A first step towards \cref{conj:fgg} is to show that finitely presented groups have bounded twin-width.


\end{document}